\tikzstyle{white}=[fill=white, draw=black, shape=circle]
\tikzstyle{thickwhite}=[fill=white,very thick, draw=black, shape=circle]
\tikzstyle{black}=[fill=black, draw=black, shape=circle]
\tikzstyle{sqr}=[fill=white, draw=black, shape=rectangle]
\tikzstyle{wgrey}=[fill=white, draw={rgb,255: red,191; green,191; blue,191}, shape=circle]
\tikzstyle{bgrey}=[fill={rgb,255: red,191; green,191; blue,191}, draw={rgb,255: red,191; green,191; blue,191}, shape=circle]
\tikzstyle{discarding}=[fill=white, draw=black, shape=circle, style=upground, scale = 1.25]
\tikzstyle{smalldiscarding}=[fill=white, draw=black, style=upground, scale=0.75]
\tikzstyle{backdiscard}=[fill=white, draw=black, shape=circle, style=downground, scale=0.5]
\tikzstyle{smallbackdiscard}=[fill=white, draw=black, shape=circle, style=downground, scale=0.5]
\tikzstyle{state}=[fill=white, draw=black, style=triang, tikzit shape=rectangle]
\tikzstyle{kstate}=[fill=white, draw=black, style=kpoint, tikzit shape=rectangle]
\tikzstyle{kstateconj}=[fill=white, draw=black, style=kpoint conjugate, tikzit shape=rectangle]
\tikzstyle{kstateBIG}=[fill=white, draw=black, style=big kpoint, tikzit shape=rectangle]
\tikzstyle{effect}=[fill=white, draw=black, style=triangdag]
\tikzstyle{keffect}=[fill=white, draw=black, style=kpoint adjoint]
\tikzstyle{keffectconj}=[fill=white, draw=black, style=kpoint transpose]
\tikzstyle{morphdag}=[style=mapdag]
\tikzstyle{morph}=[style=hadamard]
\tikzstyle{WIDEmorph}=[style=hadamard, minimum width=14mm]
\tikzstyle{morphtrans}=[style=maptrans]
\tikzstyle{morphconj}=[style=mapconj]
\tikzstyle{CPMmorph}=[style=dmap]
\tikzstyle{CPMmorphconj}=[style=dmapconj]
\tikzstyle{CPMmorphdag}=[style=dmapdag]
\tikzstyle{CPMmorphtrans}=[style=dmaptrans]
\tikzstyle{CPMstate}=[fill=white, draw=black, style=triang, doubled]
\tikzstyle{CPMstateBIG}=[fill=white, draw=black, style={triang_lesssep}, doubled]
\tikzstyle{CPMkstate}=[fill=white, draw=black, style=kpoint, tikzit shape=rectangle, doubled]
\tikzstyle{CPMkstateconj}=[fill=white, draw=black, style=kpoint conjugate, tikzit shape=rectangle, doubled]
\tikzstyle{CPMkstateBIG}=[fill=white, draw=black, style=big kpoint, tikzit shape=rectangle, doubled]
\tikzstyle{CPMkeffect}=[fill=white, draw=black, style=kpoint adjoint, doubled]
\tikzstyle{CPMkeffectconj}=[fill=white, draw=black, style=kpoint transpose, doubled]
\tikzstyle{UHfB}=[fill=white, draw=black, style=triangdag, doubled, inner sep=-2pt]
\tikzstyle{leak}=[style=tinypoint, regular polygon rotate=-90]
\tikzstyle{leakfill}=[style=tinypoint, regular polygon rotate=-90, fill=black]
\tikzstyle{Z}=[style=dot, fill=green]
\tikzstyle{X}=[style=dot, fill=red]
\tikzstyle{black_dot}=[style=dot, fill=black]
\tikzstyle{white_dot}=[style=dot, fill=white]
\tikzstyle{qblack_dot}=[style=ddot, fill=black]
\tikzstyle{qwhite_dot}=[style=ddot, fill=white]
\tikzstyle{whitephase}=[style=wphase dot, fill=white]
\tikzstyle{qredphase}=[style=phase dot, fill=red]
\tikzstyle{qgreenphase}=[style=phase dot, fill=green]
\tikzstyle{had}=[style=hadamard, doubled]
\tikzstyle{box}=[style=hadamard]
\tikzstyle{bigbox}=[style=hadamard, minimum height=4mm, minimum width=8mm]
\tikzstyle{classhad}=[style=hadamard]
\tikzstyle{antipode}=[style=anti]
\tikzstyle{dottededge}=[-, dash pattern=on 1pt off 0.7pt]
\tikzstyle{double edge}=[-, style=doubled, draw=black, tikzit draw={rgb,255: red,18; green,168; blue,191}]
\tikzstyle{arrow}=[->]
\tikzstyle{new edge style 1}=[-, draw={rgb,255: red,242; green,233; blue,206}, fill={rgb,255: red,242; green,233; blue,206}]
\tikzstyle{morphism_shade}=[-, draw=black, fill={rgb,255: red,242; green,233; blue,206}, line join=bevel]
\tikzstyle{supermap_shade}=[-, fill={rgb,255: red,216; green,215; blue,242}, draw=black, line join=bevel]
\tikzstyle{hole_shade}=[-, fill=white, draw=black, line join=bevel]
\tikzstyle{new edge style 2}=[-, draw={rgb,255: red,14; green,188; blue,83}]
\tikzstyle{new edge style 3}=[<-, draw={rgb,255: red,234; green,209; blue,255}]
\tikzstyle{new edge style 4}=[<-, draw={rgb,255: red,0; green,106; blue,106}]
\tikzstyle{new edge style 5}=[-, draw={rgb,255: red,214; green,110; blue,62}]
\tikzstyle{new edge style 6}=[-, draw={rgb,255: red,174; green,20; blue,174}]
\tikzstyle{new edge style 0}=[-, fill=none, draw={rgb,255: red,0; green,106; blue,106}]
\newtheorem{theorem}{Theorem}[section]
\newtheorem{remark}{Remark}[section]
\newtheorem{corollary}{Corollary}[section]
\newtheorem{definition}{Definition}[section]
\newtheorem{lemma}{Lemma}[section]
\newtheorem{proposition}{Proposition}[section]
\newcommand{\ca}{\mathcal A}
\newcommand{\cb}{\mathcal B}
\newcommand{\ce}{\mathcal E}
\newcommand{\cf}{\mathcal F}
\newcommand{\ch}{\mathcal H}
\newcommand{\ck}{\mathcal K}
\newcommand{\cl}{\mathcal L}
\newcommand{\cz}{\mathcal Z}
\newcommand{\tp}{\tilde{P}}
\newcommand{\chlc}{\mathcal{H}_{\LL}^{\chi}}
\newcommand{\chrc}{\mathcal{H}_{\R}^{\chi}}
\newcommand{\chlz}{\mathcal{H}_{\LL}^{\zeta}}
\newcommand{\chrz}{\mathcal{H}_{\R}^{\zeta}}
\newcommand{\chlx}{\mathcal{H}_{\LL}^{\xi}}
\newcommand{\chrx}{\mathcal{H}_{\R}^{\xi}}
\newcommand{\pic}{\pi^{\chi}}
\newcommand{\sigc}{\sigma^{\chi}}
\def\be{\begin{equation}}
\def\ee{\end{equation}}
\def\ba{\begin{align}}
\def\ea{\end{align}}
\newcommand{\mub}{{\Bar{\mu}}}
\DeclareMathOperator{\Span}{Span}
\DeclareMathOperator{\vnalg}{vnalg}
\DeclareMathOperator{\spl}{split}
\DeclareMathOperator{\lean}{lean}
\DeclareMathOperator{\balanced}{balanced}
\DeclareMathOperator{\loc}{loc}
\DeclareMathOperator{\stloc}{stloc}
\DeclareMathOperator{\cons}{cons}
\DeclareMathOperator{\LL}{L}
\DeclareMathOperator{\R}{R}
\DeclareMathOperator{\MM}{M}
\DeclareMathOperator{\Atproj}{AtomProj}
\newcommand{\tikzcircle}[2][black,fill=black]{\tikz[baseline=-0.5ex]\draw[#1,radius=#2] (0,0) circle ;}%
\newcommand{\id}{\mathbb{1}}
\begin{document}

\title{Picturing general quantum subsystems}

\author{Octave Mestoudjian}
\affiliation{Université Paris-Saclay, Inria, CNRS, LMF, 91190 Gif-sur-Yvette, France}

\author{Matt Wilson}
\affiliation{Université Paris-Saclay, CentraleSupélec, Inria, CNRS, LMF, 91190 Gif-sur-Yvette, France}
\affiliation{Programming Principles Logic and Verification Group, University College London, United Kingdom}

\author{Augustin Vanrietvelde}
\affiliation{Télécom Paris, Institut Polytechnique de Paris, Inria Saclay, Palaiseau, France}

\author{Pablo Arrighi}
\affiliation{Université Paris-Saclay, Inria, CNRS, LMF, 91190 Gif-sur-Yvette, France}

\maketitle

\begin{abstract}
We extend the usual process-theoretic view on locality and causality in subsystems (based on the tensor product case) to general quantum systems (i.e.\ possibly non-factor, finite-dimensional von Neumann algebras). To do so, we introduce a primitive notion of splitting maps within dagger symmetric monoidal categories. Splitting maps give rise to subsystems that admit comparison via a preorder called comprehension, and support an adaptation of the usual categorical trace. We show that the comprehension preorder precisely captures the inclusion partial order between von Neumann algebras, and that the splitting map trace captures the natural notion of von Neumann algebra trace. As a consequence of the development of these diagrammatic tools, we prove that the known equivalence between semi-causality and semi-localisability for factor subsystems extends to all (including non-factor) subsystems.

\end{abstract}

\section{Introduction}

The notions of system and the rules by which they can be composed or decomposed are at the foundation of any physical theory. In quantum theory these rules challenge our preconceived notions of subsystems: a system $AB$, for instance, cannot be understood as just the sum of its parts $A$ and $B$ in the classical sense, but also describes quantum correlations between them. Traditionally, the composition of quantum systems is represented in terms of the tensor product of Hilbert spaces, whereas the decomposition is represented in terms of subalgebras of the algebra of operators on the Hilbert space at hand.

When considering that these algebras should be factor, the two approaches coincide and are represented using the quantum circuit model \cite{deutsch1989quantum}, which is standardly used in quantum computation \cite{Nielsen_Chuang_2010}, quantum information theory \cite{Wilde_2017} and quantum foundations \cite{chiribella_purification, gogioso_cpt}. From the physical perspective, factor algebras and their associated circuit representation equip quantum theory with the fundamental notions of subsystem, state-reduction, locality, and causality, used in quantum foundations topics such as quantum reference frames \cite{Bartlett2007, Giacomini2017, ahmad2022, relativeSubsystems}, reconstruction programs \cite{chiribella_purification, Selby2021reconstructing}, indefinite causal structures \cite{PhysRevA.88.022318, oreshkov_causal_order}, quantum cellular automata \cite{FeynmanQCA, schumacher2004, Arrighi2007}, etc.

However, many situations unravel the limiting aspects of this standard definition of subsystem, even in finite dimensional quantum theory. Some quantum evolutions have a causal structure that can only be made visible when mixing together the tensor product and direct sum \cite{lorenz_unitary, vanrietvelde2023consistentcircuitsindefinitecausal, Barrett2021}. Many physical state spaces satisfy some superselection rules that are not respected by the tensor product composition, the most common example being the parity conservation for fermions \cite{dariano2014, Eon2023relativistic}. In quantum gravity, where superpositions of geometries (of space-time) are often described as superposition of graphs, the neighbours of a node exist in superposition, and so cannot be factored into naive tensor products \cite{Arrighi2024quantumnetworks, Bianchi2023}. 

A more general definition of subsystem in quantum theory, and one that solves the aforementioned issues, is to consider that a subalgebra representing a subsystem does not have to be a factor. The treatment of general quantum subsystems such as these in terms of compositional or diagrammatic principles has received some attention, for instance via representation of algebras as monoids with properties and the associated CP* construction \cite{Vicary_2010,coecke_cp,Coecke_2018}, the development of routed quantum circuits \cite{Vanrietvelde_2021, vanrietvelde2023consistentcircuitsindefinitecausal} and its categorical generalisation \cite{wilson2021composableconstraints},  the use of graphical languages for $2$-categories \cite{Claeys_2024,allen2024cpinftybeyond2categoricaldilation,Reutter_2019}, and in the many-worlds calculus \cite{Chardonnet_2025}. The understanding of general subsystems is however less well developed than it is for the factor case: in particular, somewhat suprisingly, there is no pictorial representation in the literature of the basic notion of splitting a system into parts, and of the natural pictorial notions of locality and causality that ought to come with such a representation. The formalisation of such a concept would likely give a natural route to extending the reasoning tools for tensor products in categorical quantum mechanics \cite{coecke_kissinger_2017, heunen_categories, abramsky_coecke} to the non-factor setting, and further a natural route to the generalisation of non-factor subsystems to theories beyond quantum, such as those that fit within the frameworks of generalised \cite{barrett_gpts}, operational \cite{chiribella_purification}, or categorical probabilistic theories \cite{gogioso_cpt}. A satisfactory treatment of the notion, would likely contribute to the attempt to gain a clearer understanding of what exactly it means, formally, theory-independently, and compositionally, to be a subsystem \cite{Kramer_2018,Chiribella_2018, gogioso_church}. 

In this article, we find a simple set of diagrammatic tools for picturing and reasoning about locality and causality in general non-factor systems. To do so, we leverage the representation of factors in the circuit model, by defining formal internalised splitting maps which consist of embeddings of a base system into a bigger system that is equipped with a standard factor decomposition. These splitting maps allow us to derive, through diagrammatic definitions and from the factorisation of the bigger space, non-factor decompositions of the initial space; and to develop associated notions of locality and inclusion between subsystems. In order to demonstrate that this purely pictorial theory of subsystems recovers the algebraic approach when applied to quantum theory, we establish an equivalence between subsystems as diagrammatic splitting maps in quantum theory and subsystems as von Neumann subalgebras, as they are more usually conceptualised \cite{viola2001, zanardi2001, zanardi2003, Chiribella_2018, vanrietvelde2025partitionsquantumtheory}.


We show how these tools can be used to prove basic structural theorems about locality and causality. A fundamental result on the nature of causality in quantum theory is the equivalence,  for a bipartite quantum channel between factor subsystems, of non-signalling from one system to the other and semi-localisability \cite{Eggeling_2002, beckman_localisable, schumacher_locality}, which demonstrates that information-theoretic non-signalling can be framed purely at the compositional level. We use the diagrammatic tools that we developed to prove that this equivalence between semi-causality and semi-localisability lifts from factor systems to arbitrary von Neumann algebras, proving as a result the equivalence for general quantum systems which naturally arise for instance in superpositions of geometries.


The plan of the paper is as follows. In the second section, we explain how subsystems can naturally be seen as von Neumann algebras and provide a few results on the structure of such subsystems. In the third section, we introduce the formalism of splitting maps that allows us to talk about diagrammatic subsystems. In the fourth section we show that the formalism of splitting maps is strictly equivalent to the one of subsystems as von Neumann algebras. And in the fifth section, we define the notion of trace of a splitting map, show that it is equivalent to the usual trace over a von Neumann algebra and use the formalism of splitting maps to provide a causal decomposition for maps that are non-signalling between general subsystems.





\section{General Quantum Subsystems as von Neumann algebras}


All the Hilbert spaces and algebras considered in this article are finite-dimensional. We will now remind a few basic facts on the structure of these algebras \cite{farenick} that represent quantum systems.

\begin{definition}[* algebra]
A (finite dimensional) {\em * algebra} $\ca$ is a finite-dimensional complex algebra equipped with an involution $\dagger$ such that:
\begin{itemize}
\item for all $a,b$ in $\ca$, $(a+b)^{\dagger} = a^{\dagger} + b^{\dagger}$
\item for all $a,b$ in $\ca$, $(ab)^{\dagger} =   b^{\dagger} a^{\dagger}$
\item for all $a$ in $\ca$ and $\lambda$ in $\mathbb{C}$, $(\lambda a)^{\dagger} = \overline{\lambda}a^{\dagger}$
\end{itemize}
A {\em* subalgebra} $\cb$ of $\ca$ is a subalgebra of $\ca$ equipped with the involution $\dagger$ of $\ca$ and closed under it. This is the notion of subalgebra that we will use in the rest of this article and we will write it $\cb \subseteq \ca$. Moreover a subalgebra $\cb \subseteq \ca$ is said to be {\em unital} if $\id_{\cb} = \id_{\ca}$, i.e. $\cb$ has the same unit element as $\ca$.
\end{definition}

\begin{definition}[von Neumann Algebra]
A (finite dimensional) {\em von Neumann algebra} is a unital * subalgebra of some $\cl(\ch)$ with, as involution, the usual adjoint coming from the inner product on $\ch$. Let us call $\vnalg(\ch)$ the set of von Neumann algebras $\ca \subseteq \cl(\ch)$.
\end{definition}

Note that in finite dimension, a von Neumann algebra is the same thing as a C* algebra. Let us give two examples of von Neumann algebras acting on $\ch$, coming from a decomposition of the space either as a tensor product or as a direct sum. Suppose that $\ch = \ch_{\LL} \otimes \ch_{\R}$, then one can define the von Neumann algebra of operators  acting on the left part of the tensor $\ca_{\otimes} = \cl(\ch_{\LL}) \otimes \id_{\ch_{\R}}$. Similarly suppose that $\ch = \ch_1 \oplus \ch_2$, then one can define the von Neumann algebra $\ca_{\oplus} = \cl(\ch_1) \oplus \mathbb{C} \id_{\ch_2}$ of operators that leave invariant $\ch_1$ and $\ch_2$ and moreover can only act as homotheties on $\ch_2$.

\begin{definition}[Commutant and centre]
Let $\ca \subseteq \cl(\ch)$ be a von Neumann algebra. The {\em commutant} of $\ca$ (within $\cl(\ch)$) is the algebra $\ca' := \{b \in \cl(\ch) | \forall a \in \ca, ab = ba \}$ of operators on $\ch$ that commute with every element of $\ca$. The centre of $\ca$ is the commutative algebra $\cz(\ca) := \ca \cap \ca'$ of the elements of $\ca$ that commute with all elements of $\ca$. Observe that due to von Neumann's bicommutant theorem \cite{takesaki2001theory}, $\cz(\ca) =  \ca \cap \ca' = \cz(\ca')$. We say that $\ca$ is {\em factor} when $\cz(\ca) = \mathbb{C} \id_{\ch} $.
\end{definition}

In the case of the algebra  $\ca_{\otimes} = \cl(\ch_{\LL}) \otimes \id_{\ch_{\R}}$ defined above, its commutant is the algebra $\ca_{\otimes}' = \id_{\ch_{\LL}} \otimes \cl(\ch_{\R})$ and its centre is simply the commutative algebra $\cz(\ca_{\otimes}) = \mathbb{C} \id_{\ch}$ (the algebra $\ca_{\otimes}$ is factor). In the case of the algebra $\ca_{\oplus}  = \cl(\ch_1) \oplus \mathbb{C} \id_{\ch_2}$, its commutant is $\ca_{\oplus} ' = \mathbb{C} \id_{\ch_1} \oplus \cl(\ch_2)$ which may seem close to what we get for $\ca_{\otimes}'$ but its centre is the (fundamentally different from $\cz(\ca_{\otimes})$) non-trivial $ \cz(\ca_{\oplus} ) = \mathbb{C} \id_{\ch_1} \oplus \mathbb{C} \id_{\ch_2}$.

\begin{theorem}[Atomic projectors] \cite{vanrietvelde2025partitionsquantumtheory}
Let $\cz$ be a commutative (finite dimensional) von Neumann algebra. Then there exist a unique family $\{ \pi_i \}$ of non-null, orthogonal (i.e. $\forall i, \pi_i^{\dagger} = \pi_i$), and pairwise orthogonal (i.e. $\forall i,j, \pi_i \pi_j = \delta_{ij} \pi_i$) projectors such that $\cz = \Span(\{ \pi_i \})$. They are called the {\em atomic projectors} of $\cz$ and we will write $\Atproj(\cz) = \{ \pi_i \}$.
\end{theorem}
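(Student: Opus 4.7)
The plan is to construct the family $\{\pi_i\}$ as the atoms of the lattice of self-adjoint projectors inside $\cz$, and then verify both the spanning property and uniqueness by exploiting this lattice structure. First, since $\cz$ is a finite-dimensional commutative *-algebra, every $a \in \cz$ is normal and commutes with every other element, so the finite-dimensional spectral theorem yields a decomposition $a = \sum_\lambda \lambda\, p_\lambda$ where the $p_\lambda$ are self-adjoint orthogonal projectors onto the joint eigenspaces. Because $\cz$ is a Von Neumann algebra, it is closed under functional calculus, so each $p_\lambda$ in fact belongs to $\cz$. Hence every element of $\cz$ is a complex linear combination of self-adjoint projectors in $\cz$, reducing the spanning claim to showing that every such projector is a sum of a canonical pairwise-orthogonal family.

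Next, I would take $\{\pi_i\}$ to be the set of minimal non-zero self-adjoint projectors of $\cz$ with respect to the order $p \leq q \iff pq = p$. This family is well defined because the projector lattice of the finite-dimensional algebra $\cz$ is finite. Pairwise orthogonality follows because, for $i \neq j$, the product $\pi_i \pi_j$ is itself a self-adjoint projector in $\cz$ (using commutativity, idempotence, and closure of $\cz$ under $\dagger$) satisfying $\pi_i \pi_j \leq \pi_i$; if it were non-zero this would contradict minimality of $\pi_i$, so $\pi_i \pi_j = 0$. I would then show inductively that any self-adjoint projector $p \in \cz$ decomposes as a sum of atoms: if $p$ is not itself an atom, pick an atom $\pi \leq p$; then $p - \pi$ is again a self-adjoint projector in $\cz$ (since $\pi$ commutes with $p$), strictly smaller than $p$, and iteration terminates by finite dimensionality.

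For uniqueness, given any other family $\{\sigma_j\}$ with the stated properties, I would expand each $\pi_i = \sum_j c_{ij} \sigma_j$. Exploiting $\pi_i^2 = \pi_i$ together with the linear independence of non-zero pairwise orthogonal projectors yields $c_{ij}^2 = c_{ij}$, hence $c_{ij} \in \{0,1\}$. By the symmetric argument each $\sigma_j = \sum_i d_{ji} \pi_i$ with $d_{ji} \in \{0,1\}$. Substituting one expansion into the other and comparing coefficients against the linearly independent $\pi_{i'}$, the non-nullness hypothesis on both families forces each $\pi_i$ to coincide with a unique $\sigma_{j(i)}$, so the families agree as sets.

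The main obstacle I anticipate is ensuring that the spectral projectors of an arbitrary element $a \in \cz$ genuinely land in $\cz$ rather than merely in $\cl(\ch)$; this rests on $\cz$ being a Von Neumann subalgebra (equivalently, bicommutant-closed and hence closed under functional calculus). The combinatorial uniqueness step is otherwise routine, but it depends crucially on the non-null hypothesis, without which one could adjoin arbitrarily many zero projectors and break uniqueness.
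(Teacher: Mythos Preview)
The paper does not supply its own proof of this theorem; it is stated with a citation and used as a background structural fact about finite-dimensional commutative Von Neumann algebras. There is therefore nothing in the paper to compare your argument against.

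That said, your proof is the standard one and is correct. The construction of the $\pi_i$ as the minimal non-zero self-adjoint projectors in the finite projector lattice of $\cz$, the reduction of spanning to the spectral theorem together with closure of $\cz$ under functional calculus, and the uniqueness argument via $\{0,1\}$-valued expansion coefficients are all sound. One small refinement in the pairwise-orthogonality step: the non-vanishing of $\pi_i\pi_j$ does not by itself contradict minimality of $\pi_i$; rather, minimality of $\pi_i$ forces $\pi_i\pi_j = \pi_i$, hence $\pi_i \leq \pi_j$, and then minimality of $\pi_j$ yields $\pi_i = \pi_j$, the desired contradiction. Adding this one sentence closes the argument cleanly.
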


Let us now give a representation theorem for von Neumann algebras which will hopefully make these definitions more concrete in the general case.

\begin{theorem}[Artin-Wedderburn] \label{AW} \cite{farenick}
Let $\ca \subseteq \cl(\ch)$ be a (finite dimensional) von Neumann algebra. Then there exists Hilbert spaces $(\ch_{\LL}^{i})_{i \in I}$ and $(\ch_{\R}^{i})_{i \in I}$ and a unitary map $U: \ch \rightarrow \bigoplus_i (\ch_{\LL}^{i} \otimes \ch_{\R}^{i})$ such that 
\begin{equation}
U \ca U^{\dagger} = \bigoplus_i \cl(\ch_{\LL}^{i}) \otimes \id_{\ch_{\R}^{i}} \, .
\end{equation}
We will call such a unitary a {\em representation unitary} for the algebra $\ca$.
\end{theorem}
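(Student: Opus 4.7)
The plan is to reduce the statement to the case of a single factor by means of the atomic projectors of $\cz(\ca)$, and then to establish the factor case by constructing a system of matrix units.

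First, I would apply the previous theorem to $\cz(\ca)$ to obtain the family $\{\pi_i\}_{i \in I} = \Atproj(\ca)$. Unitality of $\ca$ gives $\id_{\ch} \in \cz(\ca)$ and hence $\sum_i \pi_i = \id_{\ch}$, yielding the orthogonal decomposition $\ch = \bigoplus_i \ch^i$ with $\ch^i := \pi_i \ch$. Since each $\pi_i$ is central in $\ca$, the algebra splits as $\ca = \bigoplus_i \ca_i$ with $\ca_i := \pi_i \ca$ a Von Neumann algebra on $\ch^i$. A direct computation shows $\cz(\ca_i) = \pi_i \cz(\ca) = \mathbb{C}\pi_i$, which identifies with $\mathbb{C}\id_{\ch^i}$ on $\ch^i$, so each $\ca_i$ is a factor. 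It therefore suffices to prove the theorem for a single factor $\cb \subseteq \cl(\ck)$ and to recover the general statement as the direct sum of the resulting unitaries.

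For the factor case I would pick a minimal non-zero projection $p \in \cb$, which exists by finite-dimensionality, and build inductively a maximal family $\{p_j\}_{j=1}^n$ of pairwise orthogonal minimal projections summing to $\id_{\ck}$, together with partial isometries $e_{j1} \in \cb$ satisfying $e_{j1}^\dagger e_{j1} = p$ and $e_{j1} e_{j1}^\dagger = p_j$ (with $e_{11} := p$). Setting $\ck_\R := p\ck$ and $\ck_\LL := \mathbb{C}^n$, the map $U : \ck_\LL \otimes \ck_\R \to \ck$ defined on basis vectors by $U(|j\rangle \otimes v) := e_{j1} v$ is unitary by the matrix-unit relations, and conjugating the composite matrix units $e_{jk} := e_{j1} e_{k1}^\dagger$ gives $U^\dagger e_{jk} U = |j\rangle \langle k| \otimes \id_{\ck_\R}$. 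Since these generate $\cb$, this yields $U^\dagger \cb U = \cl(\ck_\LL) \otimes \id_{\ck_\R}$, as required.

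The main difficulty lies in the construction of the matrix units, specifically in proving that every minimal projection of $\cb$ is equivalent, via a partial isometry lying in $\cb$, to the fixed $p$, and that enough such pairwise orthogonal projections can be chosen to exhaust $\id_{\ck}$. This is precisely where the factor hypothesis is essential: if $\cz(\cb)$ were non-trivial, projections supported on different central blocks could not be connected by an element of $\cb$. In finite dimensions the required equivalence can be extracted from a polar decomposition of $q a p$ for a well-chosen $a \in \cb$, together with the fact that $p \cb p = \mathbb{C} p$ by minimality of $p$; exhaustion of $\id_{\ck}$ then follows by a standard maximality argument, using that in a factor every non-zero projection dominates a subprojection equivalent to $p$.
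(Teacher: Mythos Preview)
Your argument is correct and follows a standard route: first split $\ca$ into factors via the atomic projectors of its centre, then treat a single factor by building a system of matrix units from a fixed minimal projection. The only cosmetic slip is that you define $U : \ck_\LL \otimes \ck_\R \to \ck$ and then conjugate as $U^\dagger \cb U$, whereas the statement asks for a unitary in the other direction; replacing your $U$ by its adjoint fixes this.

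The paper itself does not prove Theorem~\ref{AW} in the main text (it is cited from the literature), but Appendix~\ref{bigproof} sketches a constructive version that is organised differently from yours. Rather than first passing to the centre, the paper works directly in $\ca$ with a \emph{maximal} family $\{P_i\}$ of non-zero, pairwise orthogonal, orthogonal projectors; maximality forces $P_i \ca P_i = \mathbb{C} P_i$ (so the $P_i$ are automatically minimal), and the blocks of the decomposition are then recovered as the equivalence classes of the relation $P_i \sim P_j \Leftrightarrow P_i \ca P_j \neq 0$. The representation unitary is written explicitly as $W_I = \sum_i (\ketbra{i}{1_I} \otimes \id_{\ch_{\R}^{I}}) U P_{1_I} A P_i$ for a chosen representative $P_{1_I}$ in each class and a single element $A \in \ca$ playing the role of all the partial isometries at once. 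Your ``centre first, then factor'' organisation is cleaner conceptually; the paper's ``one maximal family for the whole algebra'' organisation is what is actually needed later in Appendix~\ref{bigproof}, since the proof of Proposition~\ref{inclusion implies comprehension} manipulates these families directly (refining a family for $\ca_S$ into one for $\ca_B \supseteq \ca_S$) and relies on the explicit formula for $W_I$.
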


In the case of the algebra $\ca_{\otimes}$ defined earlier in this section as acting on the left part of the space $\ch = \ch_{\LL} \otimes \ch_{\R}$, this result is completely transparent: the direct sum has a single term and the theorem simply says that $\ch = \ch_{\LL} \otimes \ch_{\R}$ and that $\ca_{\otimes} = \cl(\ch_{\LL}) \otimes \id_{\ch_{\R}}$. However in the case of $\ca_{\oplus}  = \cl(\ch_1) \oplus \mathbb{C} \id_{\ch_2}$, the theorem gives us a non-trivial decomposition. Indeed, it tells us that the underlying space $\ch$ is (unitarily) isomorphic to $(\ch_1 \otimes \mathbb{C}_1) \oplus (\mathbb{C}_2 \otimes \ch_2)$ and that, following the same mapping,  $\ca_{\oplus} = \cl(\ch_1) \oplus \mathbb{C} \id_{\ch_2} \cong (\cl(\ch_1) \otimes \id_{\mathbb{C}_1}) \oplus (\cl(\mathbb{C}_2) \otimes \id_{\ch_2})$. 

\begin{remark}
Note that the same representation unitary also gives a representation for the algebra $\ca'$ as 
\begin{equation}
U \ca' U^{\dagger} = \bigoplus_i  \id_{\ch_{\LL}^{i}} \otimes \cl(\ch_{\R}^{i}) \, .
\end{equation}
\end{remark}

Remark also that the Artin-Wedderburn theorem makes the atomic projectors appear very naturally. Indeed, if $\ca \subseteq \cl(\ch)$ is a von Neumann algebra and if the Artin-Wedderburn theorem applied to $\ca$ is stated as 
\begin{equation}
U \ca U^{\dagger} = \bigoplus_i \cl(\ch_{\LL}^{i}) \otimes \id_{\ch_{\R}^{i}} \, ,
\end{equation}
then the atomic projectors of $\cz(\ca)$ (and thus of $\ca$ by extension) are precisely the orthogonal projectors onto the $\ch^{i} = U^{\dagger}(\ch_{\LL}^{i} \otimes \ch_{\R}^{i} )$.

\begin{theorem}\label{blocks are factor}
Let $\ca \subseteq \cl(\ch)$ be a (finite dimensional) von Neumann algebra. Let $\{ \pi_i \}$ be the family of the atomic projectors of its center and $\ca \cong \bigoplus_i \cl(\ch_{\LL}^{i}) \otimes \id_{\ch_{\R}^{i}}$ be a decomposition given by Artin-Wedderburn theorem. Then 

\be
\ca = \bigoplus_i \pi_i \ca \cong \bigoplus_i \cl(\ch_{\LL}^{i}) \otimes \id_{\ch_{\R}^{i}} \cong \bigoplus_i \cl(\ch_{\LL}^{i}) \, ,
\ee
meaning in particular that each of the $\pi_i \ca$ is a factor algebra whose centre is $\cz(\pi_i \ca) = \mathbb{C} \pi_i$. 
\end{theorem}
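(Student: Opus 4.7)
The plan is to first decompose $\ca$ along its atomic projectors into a direct sum of two-sided ideals, and then transport this decomposition through a representation unitary to read off the factor structure of each block.

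First I would establish $\ca = \bigoplus_i \pi_i \ca$ as a direct sum of two-sided ideals. Since $\ca$ is unital, $\id_\ca = \id_\ch$, and the atomic projectors resolve the identity of the centre, giving $\sum_i \pi_i = \id_\ch$. Because each $\pi_i$ lies in $\cz(\ca) \subseteq \ca$, any $a \in \ca$ decomposes as $a = \sum_i \pi_i a$ with each $\pi_i a \in \ca$, so $\ca = \sum_i \pi_i \ca$. Pairwise orthogonality $\pi_i \pi_j = \delta_{ij} \pi_i$ makes the sum direct: from $\sum_i \pi_i a_i = 0$, left-multiplication by $\pi_j$ yields $\pi_j a_j = 0$ for every $j$. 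Centrality also gives $\pi_i \ca = \ca \pi_i$, so each summand is a two-sided ideal and the decomposition is one of algebras.

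Next I would transport via a representation unitary $U$. By the remark preceding the theorem, the atomic projectors of $\ca$, after conjugation by $U$, are exactly the block-identity projectors in $\bigoplus_j \cl(\ch_{\LL}^j \otimes \ch_{\R}^j)$, i.e. $U \pi_i U^\dagger = \id_{\ch_{\LL}^i \otimes \ch_{\R}^i}$ viewed inside the direct sum. Multiplying the Artin--Wedderburn decomposition of $U \ca U^\dagger$ by this block projector extracts precisely the $i$-th summand, giving $U (\pi_i \ca) U^\dagger = \cl(\ch_{\LL}^i) \otimes \id_{\ch_{\R}^i}$. The obvious map $a \otimes \id_{\ch_{\R}^i} \mapsto a$ provides the further algebra isomorphism $\cl(\ch_{\LL}^i) \otimes \id_{\ch_{\R}^i} \cong \cl(\ch_{\LL}^i)$, completing the chain of isomorphisms in the statement.

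Finally, each $\cl(\ch_{\LL}^i)$ is a full matrix algebra, hence a factor with centre $\mathbb{C}\id_{\ch_{\LL}^i}$. Transporting back through the isomorphism, the unit of $\pi_i \ca$ is $\pi_i$ itself, so $\cz(\pi_i \ca) = \mathbb{C}\pi_i$ and $\pi_i \ca$ is a factor. The main subtlety, rather than a genuine obstacle, is bookkeeping around units: each $\pi_i \ca$ must be treated as an algebra with its own unit $\pi_i \neq \id_\ch$, which is what makes its centre $\mathbb{C}\pi_i$ rather than $\mathbb{C}\id_\ch$, and one has to phrase ``factor'' relative to this local unit.
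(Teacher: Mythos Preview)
Your argument is correct. The paper actually states this theorem without proof, treating it as a standard structural fact about finite-dimensional Von Neumann algebras that follows directly from the Artin--Wedderburn decomposition and the identification of the atomic projectors made in the paragraph preceding the statement; your write-up supplies exactly that routine verification, so there is nothing to compare against.
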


Identifying $\ch$ with $\bigoplus_i (\ch_{\LL}^{i} \otimes \ch_{\R}^{i})$, one can now give a notion of partial trace over the algebra $\cb = \ca'$. This map from $\cl(\ch)$ to $\bigoplus_i \cl(\ch_{\LL}^{i})$ that acts as an injective von Neumann algebra homomorphism on $\ca$ is essentially an extractor of the algebra $\ca$ from $\cl(\ch)$.

\begin{definition} [Partial trace over an algebra]\cite{Chiribella_2018} \label{trace over an algebra}
The {\em partial trace over the (finite dimensional) algebra} $\cb = \ca'$ is the map $\Tr_{\cb} : \cl(\ch) \longrightarrow \bigoplus_i \cl(\ch_{\LL}^{i})$ defined by: 
\begin{equation}
\Tr_{\cb}(\cdot) := \bigoplus_i \Tr_{\ch_{\R}^{i}}(\pi_i \cdot \pi_i) \, ,
\end{equation}
where the $\pi_i$ are the projectors onto the $\ch_{\LL}^{i} \otimes \ch_{\R}^{i}$.
\end{definition}

When the algebra considered is $\ca_{\otimes}$, we have seen that the decomposition given by Theorem \ref{AW} is $\ch = \ch_{\LL} \otimes \ch_{\R}$ and tracing over the algebra $\ca'$ meets the usual notion of tracing over the Hilbert space $\ch_{\R}$ of the factorisation. 

In a similar fashion, one can define the trace over the algebra $\ca_{\oplus} = \cl(\ch_1) \oplus \mathbb{C} \id_{\ch_2}$. Observe that $\cl(\ch) \cong \cl(\ch_1) \oplus \cl(\ch_1,\ch_2) \oplus \cl(\ch_2,\ch_1) \oplus \cl(\ch_2)$ and let us decompose elements of $A \in \cl(\ch)$ in this way, as $A = A_{1,1} \oplus A_{1,2} \oplus A_{2,1} \oplus A_{2,2}$. Applying Definition \ref{trace over an algebra}, we obtain that for any $A \in \cl(\ch)$, $\Tr_{\ca_{\oplus}'}(A) = A_{1,1} \oplus \Tr_{\ch_2}(A_{2,2}) \id_{\mathbb{C}_2}$, meaning that after tracing we remember about the action of $A$ from $\ch_1$ to $\ch_1$ but have lost the information about the proper action of $A$ from $\ch_2$ to $\ch_2$, which has been replaced by $\Tr_{\ch_2}(A_{2,2}) \id_{\mathbb{C}_2}$; and we have completely lost track of the action of $A$ between $\ch_1$ and $\ch_2$.

A fundamental information about a family of (sub)systems is the data of whether one of these systems is included in another one, in the sense that it is a subsystem of the latter. This information can be described by a partial order/preorder on the family of systems.

\begin{definition}[Partial order and preorder]
A {\em preorder} on a set $X$ is a binary relation, $\leq$ on $X$ that is:
\begin{itemize}
\item reflexive, i.e. $\forall x \in X, x \leq x$,
\item transitive, i.e. if $x \leq y$ and $y \leq z$ then $x \leq z$.
\end{itemize}
If $\leq$ is moreover
\begin{itemize}
\item antisymmetric, i.e. if $x \leq y$ and $y \leq x$ then $x = z$,
\end{itemize}
$\leq$ is called a {\em partial order}.
\end{definition}

\begin{proposition}
The inclusion between von Neumann algebras $\ca \subseteq \cl(\ch)$ (as it is an inclusion at the level of the sets of elements of the algebras) is a partial order.
\end{proposition}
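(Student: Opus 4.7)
The plan is to observe that this is nothing more than the restriction of the usual set-theoretic inclusion relation to the family $\vnalg(\ch)$. Since a Von Neumann algebra is in particular a subset of $\cl(\ch)$, and since the statement ``$\ca \subseteq \cb$'' in the context of the proposition is explicitly defined (in the excerpt's definition of * subalgebra) to mean set inclusion of the underlying carrier sets, the three axioms of a partial order will follow by pure set theory, with no use of the algebraic structure at all.

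Concretely, I would verify each of the three conditions in turn. For reflexivity, for any $\ca \in \vnalg(\ch)$ every element of $\ca$ is by definition an element of $\ca$, so $\ca \subseteq \ca$. For transitivity, assuming $\ca \subseteq \cb$ and $\cb \subseteq \cc$, an arbitrary $a \in \ca$ lies in $\cb$ by the first inclusion, and hence in $\cc$ by the second, so $\ca \subseteq \cc$. For antisymmetry, assuming $\ca \subseteq \cb$ and $\cb \subseteq \ca$, these two inclusions together mean that $\ca$ and $\cb$ have exactly the same elements, so $\ca = \cb$ by extensionality.

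There is essentially no main obstacle here: the proposition is a structural observation rather than a theorem with genuine mathematical content, and the proof is a one-line reminder that set inclusion is a partial order on any collection of sets. The only subtlety worth flagging is definitional: one must make explicit that two Von Neumann algebras are considered equal precisely when they have the same underlying set of operators (with the * structure inherited from $\cl(\ch)$), so that antisymmetry at the level of sets lifts unambiguously to antisymmetry at the level of Von Neumann algebras. Once this is noted, the proof is complete.
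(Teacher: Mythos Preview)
Your proposal is correct and is exactly the intended argument: the paper states this proposition without proof, treating it as self-evident since it is just the restriction of set-theoretic inclusion to $\vnalg(\ch)$. Your explicit verification of reflexivity, transitivity, and antisymmetry, together with the remark that equality of Von Neumann algebras is equality of underlying sets, is the natural way to spell this out.
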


In this paper we will be looking for a diagrammatic (as opposed to algebraic) way to reason with general quantum subsystems. In order to do so we aim to find a diagrammatic approach which is suitably equivalent to the algebraic one. The previous proposition tells us that, given a base system, the family of its algebraic subsystems forms a partially ordered set. We will see in Section \ref{comprehension} that the family of its diagrammatic subsystems forms a preordered set. The common structure being the one of preordered sets, we define, in order to compare the two approaches, two notions. The first one is the one of preorder preserving map, or monotone map.

\begin{definition}[Preorder preserving map]
Let $(X,\subseteq)$ and $(Y,\sqsubseteq)$ be preordered sets. We say that a map $f: X \longrightarrow Y$ is {\em preorder preserving} if, for all $x_1,x_2$ in $X$, $x_1 \subseteq x_2 \Rightarrow f(x_1) \sqsubseteq f(x_2)$.
\end{definition}

And the second one, because preorders can be described as a special kind of categories, is the one of equivalence of preorders, which is simply an equivalence of categories \cite{Maclane1971-MACCFT} between two preorders.

\begin{definition}[Equivalence of preorders]
 Let $(X,\subseteq)$ and $(Y,\sqsubseteq)$ be preordered sets. We say that the preorder preserving maps $F: X \longrightarrow Y$ and $G: Y \longrightarrow X$ yield an {\em equivalence of preorders} if for all $x \in X$, $x \subseteq G\circ F(x)$ and $G\circ F(x) \subseteq x$; and for all $y \in Y$, $y \sqsubseteq F\circ G(y)$ and $F\circ G(y) \sqsubseteq y$.
\end{definition}


\section{Picturing general quantum subsystems as splitting maps}\label{section3}



We will now formally define splitting maps: those mathematical objects that will allow us to split systems into a left part and a right part, as well as to define the notions of locality, strict locality, consistency and comprehension that come with this splitting. In particular, we will give a model for subsystems that relies only on the symmetric monoidal structure of the theory at hand, in contrast with the usual algebraic approach, which for instance makes direct reference to closure under sums.
In the rest of this section, $\ch$ is a fixed Hilbert space.

\subsection{Definitions}

In the standard paradigm of factor subsystems, the splitting of the system  represented by the Hilbert space $\ch$, into subsystems $Y$ and $Z$, represented by Hilbert spaces $\ch_Y$ and $\ch_Z$, amounts to defining a unitary $U : \ch \rightarrow \ch_Y \otimes \ch_Z$. The idea of splitting maps is to relax this condition and ask for the map $U$ to be an isometry instead of a unitary, which will allow us to capture more general subsystems.

\begin{definition}[Splitting map]
A {\em splitting map} $\chi$ on $\ch$ is an isometry $\chi : \ch \rightarrow \chlc \otimes \chrc$ where $\chlc$ and $\chrc$ are Hilbert spaces. We will write it diagrammatically as

\begin{equation}
  \tikzfig{figures/chi}
\end{equation}
and we will write its adjoint $\chi^{\dagger}$, which is a coisometry, as

\begin{equation}
  \tikzfig{figures/chid} \, .
\end{equation}
We will call $\spl(\ch)$ the set of splitting maps on $\ch$.
\end{definition}

Let us give a few concrete examples of splitting maps. A first example is the one that recovers the usual decomposition into subsystems and thus the factorisation of a Hilbert space. This happens when the splitting map is not only an isometry, but a unitary $\chi_{\otimes} : \ch \rightarrow \ch_{\LL} \otimes \ch_{\R}$, making the tensor product structure of the space appear. Another example of a splitting map can be given in the case of the direct sum. Indeed, if $\ch = \ch_1 \oplus \ch_2$, we have seen that Theorem \ref{AW} gives us a decomposition of the space as $\ch \cong (\ch_1 \otimes \mathbb{C}_1) \oplus (\mathbb{C}_2 \otimes \ch_2)$ which can be naturally embedded into $(\ch_1 \oplus \mathbb{C}_2) \otimes (\mathbb{C}_1 \oplus \ch_2) =  (\ch_1 \otimes \mathbb{C}_1) \oplus (\ch_1 \otimes \ch_2) \oplus (\mathbb{C}_2 \otimes \ch_2) \oplus (\mathbb{C}_2 \otimes \mathbb{C}_1)$. Naming $\ket{\emptyset_1}$ a norm-one element of $\mathbb{C}_1$ and $\ket{\emptyset_2}$ one of $\mathbb{C}_2$, we can define this embedding splitting map $\chi_{\oplus} : \ch = \ch_1 \oplus \ch_2 \hookrightarrow (\ch_1 \oplus \mathbb{C}_2) \otimes (\mathbb{C}_1 \oplus \ch_2)$ which, for any $\ket{x} = \ket{x_1} \oplus \ket{x_2} \in \ch_1 \oplus \ch_2 = \ch$, is such that $\chi_{\oplus}(\ket{x}) = \ket{x_1} \otimes \ket{\emptyset_1} + \ket{\emptyset_2} \otimes \ket{x_2}$.


\begin{remark}
Observe that $\chi^{\dagger}\chi = \id_{\ch}$ and that $\chi \chi^{\dagger}$ is an orthogonal projector of $\cl(\chlc \otimes \chrc)$ that we will call $\pi^{\chi}$ (it is the orthogonal projector on $\Im(\chi)$). Diagrammatically we get that 
\end{remark}

\begin{equation}
  \tikzfig{figures/id}
\end{equation}

and

\begin{equation}
  \tikzfig{figures/pi}
\end{equation}

This splitting into a left and a right part allows us to talk about operators on $\ch$ that are (left)-local according to $\chi$. 

\begin{definition}[Locality]
Let $\chi$ be a splitting map on $\ch$ and $A \in \cl(\ch)$. We say that $A$ is {\em $\chi$-local} if there exists $\tilde{A} \in \cl(\chlc)$ such that $A = \chi^{\dagger} (\tilde{A} \otimes \id_{\chrc} )\chi$, or in diagrammatic notation,

\begin{equation}
  \tikzfig{figures/loc} \, .
\end{equation}
We will say that $\tilde{A}$ is an {\em on-site representative} of $A$. We will call $\loc(\chi)$ the set of $\chi$-local operators of $\cl(\ch)$.
\end{definition}

Let us now illustrate this definition with the two splitting maps that we defined above. In the case of the tensor splitting $\chi_{\otimes} : \ch = \ch_{\LL} \otimes \ch_{\R} \rightarrow \ch_{\LL} \otimes \ch_{\R}$, which simply acts as the identity, the local operators are precisely the ones of the form $A = \tilde{A} \otimes \id_{\ch_{\R}}$, i.e.\ elements of $\ca_{\otimes}$, so that

\begin{equation}
\ca_{\otimes} = \loc(\chi_{\otimes}) \, .
\end{equation}

Let us now turn to the case of the direct sum and study local operators of the form  $A = \chi_{\oplus}^{\dagger} (\tilde{A} \otimes \id_{\mathbb{C}_1 \oplus \ch_2}) \chi_{\oplus}$. Observe that because $\tilde{A} \otimes \id_{\mathbb{C}_1 \oplus \ch_2}$ is trivial on the right part of the splitting, $\tilde{A} \otimes \id_{\mathbb{C}_1 \oplus \ch_2}(\ch_1 \otimes \mathbb{C}_1) \subseteq (\ch_1 \oplus \mathbb{C}_2) \otimes \mathbb{C}_1$ and $\tilde{A} \otimes \id_{\mathbb{C}_1 \oplus \ch_2}(\mathbb{C}_2 \otimes \ch_2) \subseteq (\ch_1 \oplus \mathbb{C}_2) \otimes \ch_2$. Moreover, as $\chi_{\oplus}$ is the map that isometrically embeds $\ch = \ch_1 \oplus \ch_2 \cong  (\ch_1 \otimes \mathbb{C}_1) \oplus (\mathbb{C}_2 \otimes \ch_2)$ into $(\ch_1 \oplus \mathbb{C}_2) \otimes (\mathbb{C}_1 \oplus \ch_2)$, $\chi_{\oplus}^{\dagger}$ is the map that kills the $\ch_1 \otimes  \ch_2$ and $\mathbb{C}_2 \otimes \mathbb{C}_1$ subspaces of $(\ch_1 \oplus \mathbb{C}_2) \otimes (\mathbb{C}_1 \oplus \ch_2)$ and then makes the converse identification with $\ch$. It follows that any action of $\tilde{A} \otimes \id$ which would move between subspaces $\ch_1 \otimes \mathbb{C}_1$ and $\mathbb{C}_2 \otimes \mathbb{C}_1$ or between $\mathbb{C}_2 \otimes \ch_2$ and $\ch_1 \otimes \ch_2$ will be killed by the $\chi_{\oplus}$ splitting and merging. Therefore, $A$ only depends on the diagonal actions $\tilde{A}_{1,1} \in \cl(\ch_1)$ and $\tilde{A}_{2,2} \in \cl(\mathbb{C}_2) \cong \mathbb{C}$ (which we will thus call $\alpha_2$) of $\tilde{A}$. In the end, the action of $A$ on an element $\ket{x} = \ket{x_1} \oplus \ket{x_2}$ of  $\ch$ is 
\begin{equation}
\begin{split} 
A\ket{x} & = \chi_{\oplus}^{\dagger} (\tilde{A} \otimes \id_{\mathbb{C}_1 \oplus \ch_2}) \chi_{\oplus} (\ket{x_1} \oplus \ket{x_2}) \\
& = \chi_{\oplus}^{\dagger}(\tilde{A}\ket{x_1} \otimes \ket{\emptyset_1}) \oplus \chi_{\oplus}^{\dagger}(\tilde{A}\ket{\emptyset_2} \otimes \ket{x_1}) \\
& = \chi_{\oplus}^{\dagger}(\tilde{A}_{1,1}\ket{x_1} \otimes \ket{\emptyset_1}) \oplus \chi_{\oplus}^{\dagger}(\tilde{A}_{2,2} \ket{\emptyset_2} \otimes \ket{x_1}) \\
& = \tilde{A}_{1,1}\ket{x_1} \oplus  \alpha_2 \ket{x_2} \, ,
\end{split}
\end{equation}
which directly implies that $A$ is an element of the algebra $\ca_{\oplus}$ defined earlier. Conversely, any element of $\ca_{\oplus}$ is a $\chi_{\oplus}$-local operator, by taking its action on $\ch_1$ as $\tilde{A}_{1,1}$ and its homothety ratio on $\ch_2$ as $\tilde{A}_{2,2} = \alpha_2$. Thus, here as well,

\begin{equation}
\ca_{\oplus} = \loc(\chi_{\oplus}) \, .
\end{equation}

However, while in these examples the $\chi$-local operators already form a von Neumann algebra, this is not the case in general. 

\begin{remark}
Local operators usually only have the structure of an operator system, i.e.\, a linear subspace of $\cl(\ch)$ that is closed under the dagger and contains the identity, but that is not necessarily closed under composition. And conversely any operator system is the image of a unital completely positive map \cite{Yashin_2020} and can thus be viewed as the local operators of any Stinespring dilation of this map.
\end{remark}

%
%
%

The fact that local operators were not forming an algebra was already remarked in \cite{Arrighi2024quantumnetworks}, where an additional requirement on $\chi$ was proposed as a remedy. The following definition generalizes this idea.

\begin{definition}[$\chi$-consistency]
Let $\chi$ be a splitting map on $\ch$. We will say that an on-site operator $B \in \cl(\chlc)$ is {\em $\chi$-consistent} if $\pi^{\chi} (B \otimes \id) = (B \otimes \id) \pi^{\chi}$, or in diagrammatic notation, 

\begin{equation}
  \tikzfig{figures/cons} \, .
\end{equation}
We will call $\cons(\chi)$ the set of $\chi$-consistent operators of $\cl(\chlc)$. 
\end{definition}
This property is equivalent to asking that $\Im(\chi)$ and $\Im(\chi)^{\perp}$ are invariant for $B \otimes \id$.

\begin{proposition}\label{stable consistent}
Let $\chi$ be a splitting map on $\ch$ and $B \in \cl(\chlc)$, $B$ is $\chi$-consistent if and only if $(B \otimes \id)  (\Im(\chi)) \subseteq \Im(\chi)$ and $(B \otimes \id) (\Im(\chi)^{\perp}) \subseteq \Im(\chi)^{\perp}$.
\end{proposition}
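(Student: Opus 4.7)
The plan is to recognise this as an instance of a standard fact about commutation with orthogonal projectors, specialised to the operator $B \otimes \id$ acting on $\chlc \otimes \chrc$ and the projector $\pi^{\chi} = \chi \chi^{\dagger}$ onto $\Im(\chi)$. I would first record two small preparatory observations: that $\pi^{\chi}$ is indeed an orthogonal projector with $\Im(\pi^{\chi}) = \Im(\chi)$ and $\ker(\pi^{\chi}) = \Im(\chi)^{\perp}$, so $\id - \pi^{\chi}$ is the orthogonal projector onto $\Im(\chi)^{\perp}$; and that any $v \in \chlc \otimes \chrc$ uniquely decomposes as $v = \pi^{\chi} v + (\id - \pi^{\chi}) v$.

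For the forward direction, I would assume $\pi^{\chi} (B \otimes \id) = (B \otimes \id) \pi^{\chi}$. Given $v \in \Im(\chi)$, I use $v = \pi^{\chi} v$ to compute $(B \otimes \id) v = (B \otimes \id) \pi^{\chi} v = \pi^{\chi} (B \otimes \id) v$, which sits in $\Im(\pi^{\chi}) = \Im(\chi)$. Symmetrically, for $v \in \Im(\chi)^{\perp} = \ker(\pi^{\chi})$, I compute $\pi^{\chi} (B \otimes \id) v = (B \otimes \id) \pi^{\chi} v = 0$, so $(B \otimes \id) v \in \ker(\pi^{\chi}) = \Im(\chi)^{\perp}$.

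For the converse, I would assume both invariance conditions and pick an arbitrary $v$, decomposed as $v = v_1 + v_2$ with $v_1 \in \Im(\chi)$ and $v_2 \in \Im(\chi)^{\perp}$. By the two invariances, $(B \otimes \id) v_1 \in \Im(\chi)$ and $(B \otimes \id) v_2 \in \Im(\chi)^{\perp}$. Applying $\pi^{\chi}$ to $(B \otimes \id) v$ therefore yields $(B \otimes \id) v_1$, which equals $(B \otimes \id) \pi^{\chi} v$ since $\pi^{\chi} v = v_1$. Since $v$ was arbitrary, the two operators agree.

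There is no real obstacle here: the content is essentially that commuting with an orthogonal projector is equivalent to preserving its image and its orthogonal complement. The only thing to keep slightly careful about is that this uses orthogonality of $\pi^{\chi}$ (so that the decomposition $\chlc \otimes \chrc = \Im(\chi) \oplus \Im(\chi)^{\perp}$ aligns with the image/kernel split of $\pi^{\chi}$), which is guaranteed because $\chi$ is an isometry, giving $(\pi^{\chi})^{\dagger} = \pi^{\chi}$ and $(\pi^{\chi})^2 = \chi \chi^{\dagger} \chi \chi^{\dagger} = \chi \chi^{\dagger} = \pi^{\chi}$.
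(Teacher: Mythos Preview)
Your proposal is correct and follows essentially the same approach as the paper: both directions use that $\pi^{\chi}$ is the orthogonal projector onto $\Im(\chi)$, prove the forward direction by applying the commutation relation to vectors in $\Im(\chi)$ and $\Im(\chi)^{\perp}$, and prove the converse by checking the equality on the orthogonal decomposition $\Im(\chi) \oplus \Im(\chi)^{\perp}$. Your write-up is in fact slightly more explicit than the paper's, which leaves the converse as ``immediate, in a similar way to the direct case.''
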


\begin{proof}
Let $\chi$ be a splitting map on $\ch$ and $B \in \cl(\chlc)$. Suppose that $B$ is $\chi$-consistent and let $\ket{x} \in \Im(\chi)$ and $\ket{y} \in \Im(\chi)^{\perp}$. We have that $(B \otimes \id) \ket{x} = (B \otimes \id) \pi^{\chi} \ket{x} = \pi^{\chi}(B \otimes \id) \in \Im( \pi^{\chi}) = \Im(\chi)$ and that $ \pi^{\chi} (B \otimes \id) \ket{y} = (B \otimes \id) \pi^{\chi} \ket{y} = 0$, which implies that $(B \otimes \id) \ket{y} \in   \Im( \pi^{\chi})^{\perp} = \Im(\chi)^{\perp}$. Conversely, suppose that  $(B \otimes \id) (\Im(\chi)) \subseteq \Im(\chi)$ and $(B \otimes \id) (\Im(\chi)^{\perp}) \subseteq \Im(\chi)^{\perp}$. Because $\Im(\chi)$ and $\Im(\chi)^{\perp}$ span $\chlc \otimes \chrc$, it is sufficient to show that the equality holds on these subspaces which is immediate, in a similar way to the direct case.
\end{proof}

\begin{definition}[Strict locality]
Let $\chi$ be a splitting map on $\ch$ and $A \in \cl(\ch)$. We say that $A$ is {\em strictly $\chi$-local} if there exists $\tilde{A} \subset \cl(\chlc)$ such that $A \chi^{\dagger} = \chi^{\dagger}(\tilde{A} \otimes \id)$ and $\chi A = (\tilde{A} \otimes \id)\chi$, or diagrammatically,

\begin{equation}
  \tikzfig{figures/stloc} \, .
\end{equation}
We will call $\stloc(\chi)$ the set of strictly $\chi$-local operators of $\cl(\ch)$.
\end{definition}

In the case of the tensor splitting $\chi_{\otimes} : \ch = \ch_{\LL} \otimes \ch_{\R} \rightarrow \ch_{\LL} \otimes \ch_{\R}$, acting as the identity, the definition reduces to: $A$ is strictly local if and only if $A = \tilde{A} \otimes \id$, which implies that

\begin{equation}
\stloc(\chi_{\otimes}) = \loc(\chi_{\otimes}) = \ca_{\otimes} \, .
\end{equation}

Let us now look at $\stloc(\chi_{\oplus})$. Let $A$ (which we can write  $A = A_{1,1} \oplus A_{1,2} \oplus A_{2,1} \oplus A_{2,2}$ according to the decomposition of the underlying space $\ch = \ch_1 \oplus \ch_2$) be a strictly $\chi_{\oplus}$-local operator. The equality $\chi_{\oplus}A = (\tilde{A} \otimes \id)\chi_{\oplus}$ tells us that for any $\ket{x} = \ket{x_1} \oplus \ket{x_2}$, we have $(A_{1,1}\ket{x_1} + A_{2,1}\ket{x_2}) \otimes \ket{\emptyset_1}) + (\ket{\emptyset_2} \otimes (A_{1,2}\ket{x_1} + A_{2,2}\ket{x_2})) = (\tilde{A}\ket{x_1}  \otimes \ket{\emptyset_1}) + (\tilde{A} \ket{\emptyset_2} \otimes \ket{x_2})$, which implies that $A_{1,2} = 0$ and $A_{2,1} = 0$; it follows that $A \in \ca_{\oplus}$. Conversely, any element of $\ca_{\oplus}$ is a strictly $\chi_{\oplus}$-local operator: this can be witnessed by taking $\tilde{A} = \tilde{A}_1 \oplus \tilde{A}_2$ where $\tilde{A}_1$ is its action on $\ch_1$ and $\tilde{A}_2 = \alpha_2$ is its homothety ratio on $\ch_2$. Therefore we can conclude that

\begin{equation}
\stloc(\chi_{\oplus}) = \loc(\chi_{\oplus}) = \ca_{\oplus} \, .
\end{equation}

Observe that in these particular examples, the sets of local and strictly local operators were equal, and also that the two equations of the definition of strictly local operators were redundant. These two facts are not true in general but are degeneracies due to the fact that $\chi_{\otimes}$ and $\chi_{\oplus}$ belong to a special class of splitting maps that we will call balanced and study in Section \ref{balanced}. 

\begin{proposition}
Let $\chi$ be a splitting map on $\ch$ and $A \in \cl(\ch)$. $A$ is strictly $\chi$-local if and only if there exists a $\chi$-consistent $\tilde{A}$ such that $A = \chi^{\dagger} (\tilde{A} \otimes \id) \chi$, i.e.\ if it is local and has a $\chi$-consistent on-site representative.
\end{proposition}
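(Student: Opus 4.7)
The plan is to prove both directions by direct diagrammatic/algebraic manipulation, exploiting only the isometry equation $\chi^{\dagger}\chi = \id_{\ch}$ and the definition $\pi^{\chi} = \chi \chi^{\dagger}$. No deeper structural lemma is needed.

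For the forward direction, I will assume $A$ is strictly $\chi$-local with witness $\tilde{A}$, so that $A\chi^{\dagger} = \chi^{\dagger}(\tilde{A} \otimes \id)$ and $\chi A = (\tilde{A} \otimes \id)\chi$ both hold. To recover the locality expression, I pre-compose the second equation with $\chi^{\dagger}$ on the left and use $\chi^{\dagger}\chi = \id_{\ch}$, obtaining $A = \chi^{\dagger}(\tilde{A} \otimes \id)\chi$. To obtain $\chi$-consistency of $\tilde{A}$, I compute $\chi A \chi^{\dagger}$ in two ways: post-composing the first equation with $\chi$ on the left gives $\chi A \chi^{\dagger} = \pi^{\chi}(\tilde{A} \otimes \id)$, while pre-composing the second equation with $\chi^{\dagger}$ on the right gives $\chi A \chi^{\dagger} = (\tilde{A} \otimes \id)\pi^{\chi}$. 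Equating the two yields exactly the $\chi$-consistency condition.

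For the reverse direction, I assume $A = \chi^{\dagger}(\tilde{A} \otimes \id)\chi$ with $\tilde{A}$ being $\chi$-consistent, and aim to derive both strict-locality equations. Computing $A\chi^{\dagger} = \chi^{\dagger}(\tilde{A} \otimes \id)\chi\chi^{\dagger} = \chi^{\dagger}(\tilde{A} \otimes \id)\pi^{\chi}$, I apply $\chi$-consistency to move $\pi^{\chi}$ past $(\tilde{A} \otimes \id)$, and then use the identity $\chi^{\dagger}\pi^{\chi} = \chi^{\dagger}\chi\chi^{\dagger} = \chi^{\dagger}$ to simplify to $\chi^{\dagger}(\tilde{A} \otimes \id)$. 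The mirror computation for $\chi A$ proceeds symmetrically: $\chi A = \pi^{\chi}(\tilde{A} \otimes \id)\chi$, then consistency pushes $\pi^{\chi}$ to the right of $(\tilde{A} \otimes \id)$, and finally $\pi^{\chi}\chi = \chi\chi^{\dagger}\chi = \chi$ closes the argument.

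There is no real obstacle here; the proof is essentially a four-line diagrammatic calculation in each direction. The only point that needs a touch of care is that strict locality is a conjunction of two equations (one acting on the right, one on the left of $A$), and each of these equations separately encodes ``half'' of the consistency condition, so both must be used in the forward direction to extract the commutation $\pi^{\chi}(\tilde{A} \otimes \id) = (\tilde{A} \otimes \id)\pi^{\chi}$. Conversely, $\chi$-consistency together with $\chi^{\dagger}\chi = \id_{\ch}$ is exactly strong enough to recover both equations in the reverse direction, so the equivalence is tight.
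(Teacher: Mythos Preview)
Your proof is correct and follows essentially the same approach as the paper's own proof: both directions are handled by direct manipulation using only $\chi^{\dagger}\chi = \id_{\ch}$ and $\pi^{\chi} = \chi\chi^{\dagger}$, with the forward direction extracting consistency by computing $\chi A \chi^{\dagger}$ in two ways and the reverse direction pushing $\pi^{\chi}$ across $(\tilde{A}\otimes\id)$ via consistency. The only cosmetic difference is that the paper presents the computations diagrammatically while you write them out algebraically.
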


\begin{proof}
We will prove this proposition diagrammatically. Suppose that $A$ is strictly $\chi$-local; then it is local as 

\begin{equation}
  \tikzfig{figures/stlocloc} \, ,
\end{equation}
 and moreover $\tilde{A}$ is (left) $\chi$-consistent as 
 
 \begin{equation}
  \tikzfig{figures/stloccons} \, .
\end{equation}
 
 Let us now prove the other direction. Suppose that there exists $\tilde{A}$, left $\chi$-consistent, such that $A = \chi^{\dagger} (\tilde{A} \otimes \id) \chi$. Then 
 
 \begin{equation}
  \tikzfig{figures/conslocstloc} \, ,
\end{equation}
and the other equality is proved symmetrically.
\end{proof}

\begin{remark}
Observe that we have more or less implicitly defined the notions of $\chi$-locality, $\chi$-consistency, strict $\chi$-locality on the left branch of the isometry. But symmetric definitions can be given on the right branch of the isometry and they satisfy all the properties that we have proved for the former. If we don't state otherwise, we will keep talking about the left branch but if what happens on both branches comes to play we will then talk about left $\chi$-local operators, denoted $\loc_{\LL}(\chi)$, and right $\chi$-local operators, denoted $\loc_{\R}(\chi)$ (and about other notions in a similar fashion).
\end{remark}

\subsection{Comprehension : a preorder for splitting maps}\label{comprehension}

While we have defined splitting maps as isometries that split a system into a left part and a right part, it is also possible to understand them as objects that extract the subsystem associated to the left branch from the global system. It then becomes very natural to wonder whether one can define some kind of order relation on splitting maps that would eventually match the idea that a subsystem (extracted by the left branch of $\zeta$) is included in another (extracted by the left branch of $\chi$). We thus define the following notion of inclusion between splitting maps that we will call comprehension.

\begin{definition}[Comprehension]
Let $\chi$ and $\zeta$ be two splitting maps on $\ch$. We say that $\zeta$ is {\em comprehended} in $\chi$ and write $\zeta \sqsubseteq \chi$ if there exists a Hilbert space $\ch_{\MM}$ and isometries $\tikzcircle{4pt} : \chrz \longrightarrow \ch_{\MM} \otimes \chrc$ and $\bigcirc : \chlc \longrightarrow \chlz \otimes \ch_{\MM}$ such that $(\id_{\chlz} \otimes \tikzcircle{4pt}) \zeta = (\bigcirc \otimes \id_{\chrc}) \chi$, or diagrammatically,

\begin{equation}
  \tikzfig{figures/comp} \, .
\end{equation}
\end{definition}

\begin{proposition}
Comprehension is a preorder relation on the set of splitting maps on $\ch$.
\end{proposition}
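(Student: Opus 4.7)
The plan is to establish the two axioms of a preorder (reflexivity and transitivity) directly from the definition of comprehension, choosing the intermediate Hilbert space $\ch_{\MM}$ judiciously in each case.

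For reflexivity, given any splitting map $\chi:\ch \to \chlc \otimes \chrc$, I would take $\ch_{\MM} = \CC$ and let both $\bigcirc:\chlc \to \chlc \otimes \CC$ and $\tikzcircle{4pt}:\chrc \to \CC \otimes \chrc$ be the canonical unitors. Both are isometries, and the defining equation $(\id_{\chlc} \otimes \tikzcircle{4pt})\chi = (\bigcirc \otimes \id_{\chrc})\chi$ reduces diagrammatically to the identity on $\chi$, giving $\chi \sqsubseteq \chi$.

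For transitivity, suppose $\xi \sqsubseteq \zeta$ witnessed by $\ch_{\MM_1}$ with isometries $\bigcirc_1:\chlz \to \chlx \otimes \ch_{\MM_1}$ and $\tikzcircle{4pt}_1:\chrx \to \ch_{\MM_1}\otimes \chrz$, and $\zeta \sqsubseteq \chi$ witnessed by $\ch_{\MM_2}$ with $\bigcirc_2:\chlc \to \chlz \otimes \ch_{\MM_2}$ and $\tikzcircle{4pt}_2:\chrz \to \ch_{\MM_2} \otimes \chrc$. The key choice is to take $\ch_{\MM} := \ch_{\MM_1} \otimes \ch_{\MM_2}$ as the witnessing space for $\xi \sqsubseteq \chi$, and to define the witnessing isometries by pasting the old ones:
\begin{equation}
\bigcirc := (\bigcirc_1 \otimes \id_{\ch_{\MM_2}}) \circ \bigcirc_2, \qquad \tikzcircle{4pt} := (\id_{\ch_{\MM_1}} \otimes \tikzcircle{4pt}_2) \circ \tikzcircle{4pt}_1.
\end{equation}
These are isometries as compositions of isometries. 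Verifying the required equation is then a straightforward diagrammatic chase: starting from $(\id_{\chlx} \otimes \tikzcircle{4pt})\xi$, one first pushes $\tikzcircle{4pt}_1$ through using the hypothesis $\xi \sqsubseteq \zeta$ to convert $\xi$ into $\zeta$ (while introducing $\bigcirc_1$ on the left), then pushes $\tikzcircle{4pt}_2$ through using $\zeta \sqsubseteq \chi$ to convert $\zeta$ into $\chi$ (introducing $\bigcirc_2$), and finally collects the two left-hand maps into $\bigcirc$ by the interchange law of the symmetric monoidal structure.

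I do not expect any real obstacle: the whole proof is bookkeeping once one makes the right choice of intermediate space, namely the tensor product $\ch_{\MM_1}\otimes \ch_{\MM_2}$. The only subtle point worth stating explicitly is that we do \emph{not} obtain antisymmetry (and hence only a preorder, not a partial order) because different splitting maps $\chi \neq \zeta$ can be mutually comprehended — for instance, permuting factors in $\ch_{\MM}$ by a unitary or padding with trivial ancillae yields comprehension-equivalent but distinct isometries.
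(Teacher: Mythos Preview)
Your proposal is correct and essentially identical to the paper's own proof: both take $\ch_{\MM}=\CC$ with the unitors for reflexivity, and $\ch_{\MM}=\ch_{\MM_1}\otimes\ch_{\MM_2}$ with the composed dot isometries for transitivity, verifying the defining equation by the same diagrammatic chase. Your closing remark on the failure of antisymmetry is a welcome addition not present in the paper's proof.
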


\begin{proof}
Comprehension is reflexive as for any splitting map $\chi$ on $\ch$, we can choose $\ch_{\MM} = \mathbb{C}$, $\tikzcircle{4pt} : x \longrightarrow 1 \otimes x$ and $\bigcirc : y \longrightarrow y \otimes 1$ and get that $(\id_{\chlz} \otimes \tikzcircle{4pt}) \zeta = (\bigcirc \otimes \id_{\chrc}) \chi$. To see that it is transitive, consider $\xi \sqsubseteq \zeta \sqsubseteq \chi$ three splitting maps on $\ch$; by definition there exists Hilbert spaces $\ch_{{\MM}_1}$,$\ch_{{\MM}_2}$ and dot isometries such that:

\begin{equation}
 \scalebox{0.9}{ \tikzfig{figures/transitivity1}} \, .
\end{equation}
It follows that 

\begin{equation}
  \scalebox{0.95}{ \tikzfig{figures/transitivity2}} \, ,
\end{equation}
and the conclusion is immediate by taking $\ch_{\MM} = \ch_{{\MM}_1} \otimes \ch_{{\MM}_2}$ and for isometries the compositions of dot isometries as above.
\end{proof}

\section{Equivalence between diagrammatic and algebraic subsystems}

\subsection{From diagrammatic subsystems to algebraic subsystems}

Let us now prove a few structural properties of these notions, which will allow us to link splitting maps to von Neumann algebras.

\begin{definition}
Give a splitting map $\chi : \ch \rightarrow \chlc \otimes \chrc$, we define $\sigma^{\chi} : \cl(\chlc) \longrightarrow \cl(\ch)$ as $\sigma^{\chi}(B) = \chi^{\dagger} (B \otimes \id_{\chrc}) \chi$. It is the map that takes on-site representatives to the corresponding $\chi$-local operators acting on the whole Hilbert space. Diagrammatically, we can see it as 

\begin{equation}
  \tikzfig{figures/sigma} \, .
\end{equation}
It is a linear operator such that $\sigma^{\chi} (A^{\dagger}) = \sigma^{\chi}(A)^{\dagger}$.
Moreover it satisfies 
\begin{itemize}
\item $\loc(\chi) = \sigma^{\chi}(\cl(\chlc))$
\item $\stloc(\chi) = \sigma^{\chi}(\cons(\chi))$
\end{itemize}
\end{definition}

\begin{proposition}
The set of $\chi$-consistent operators, $\cons(\chi)$, is a von Neumann algebra included in $\cl(\chlc)$. It is the biggest von Neumann algebra $\ca$ in $\cl(\chlc)$ such that $\sigma^{\chi}_{|\ca}$ is a $*$-algebra homomorphism.
\end{proposition}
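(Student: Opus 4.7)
The plan is to establish three things in turn: (i) $\cons(\chi)$ is closed under the algebra operations and under $\dagger$, hence a Von Neumann algebra; (ii) $\sigma^{\chi}$ restricted to it is a $*$-algebra homomorphism; and (iii) any Von Neumann algebra on which $\sigma^{\chi}$ is a $*$-algebra homomorphism must consist of $\chi$-consistent operators.

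For (i), the closure of $\cons(\chi)$ under linear combinations is immediate from the linearity of $B \mapsto (B \otimes \id)$ and the defining equation $\pi^{\chi}(B \otimes \id) = (B \otimes \id) \pi^{\chi}$. Closure under $\dagger$ follows by taking adjoints on both sides of this equation and using $(\pi^{\chi})^{\dagger} = \pi^{\chi}$. Closure under composition uses the relation twice: if $B_1, B_2$ are consistent, then $\pi^{\chi}(B_1 B_2 \otimes \id) = (B_1 \otimes \id) \pi^{\chi} (B_2 \otimes \id) = (B_1 B_2 \otimes \id) \pi^{\chi}$. The identity is trivially consistent.

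For (ii), linearity of $\sigma^{\chi}$ is immediate, and the fact that $\sigma^{\chi}(B^{\dagger}) = \sigma^{\chi}(B)^{\dagger}$ is already stated in the definition. Unitality follows from $\sigma^{\chi}(\id) = \chi^{\dagger}\chi = \id_{\ch}$. Multiplicativity on $\cons(\chi)$ is the content of the argument: for $B_1, B_2 \in \cons(\chi)$,
\begin{equation}
\sigma^{\chi}(B_1)\sigma^{\chi}(B_2) = \chi^{\dagger}(B_1 \otimes \id)\pi^{\chi}(B_2 \otimes \id)\chi = \chi^{\dagger}(B_1 B_2 \otimes \id)\pi^{\chi}\chi = \sigma^{\chi}(B_1 B_2),
\end{equation}
using consistency of $B_2$ to commute $\pi^{\chi}$ past $(B_2 \otimes \id)$, and then $\pi^{\chi}\chi = \chi$. (This is best rendered diagrammatically in the paper's style.)

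The main obstacle, and the step with actual content, is the maximality claim (iii). Let $\ca \subseteq \cl(\chlc)$ be a Von Neumann algebra with $\sigma^{\chi}|_{\ca}$ a $*$-algebra homomorphism, and pick $B \in \ca$. Since $B^{\dagger} B \in \ca$, multiplicativity yields
\begin{equation}
\chi^{\dagger}(B^{\dagger} B \otimes \id)\chi = \sigma^{\chi}(B^{\dagger})\sigma^{\chi}(B) = \chi^{\dagger}(B^{\dagger} \otimes \id)\pi^{\chi}(B \otimes \id)\chi.
\end{equation}
Subtracting and writing $\id = \pi^{\chi} + (\id - \pi^{\chi})$ on the left-hand side gives $\chi^{\dagger}(B^{\dagger} \otimes \id)(\id - \pi^{\chi})(B \otimes \id)\chi = 0$. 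Setting $M := (\id - \pi^{\chi})(B \otimes \id)\chi$ and using that $\id - \pi^{\chi}$ is a self-adjoint idempotent, this reads $M^{\dagger} M = 0$, so $M = 0$. Hence $(B \otimes \id)(\Im(\chi)) \subseteq \Im(\chi)$. Applying the same argument to $B^{\dagger} \in \ca$ and taking orthogonal complements yields $(B \otimes \id)(\Im(\chi)^{\perp}) \subseteq \Im(\chi)^{\perp}$, and Proposition \ref{stable consistent} then gives $B \in \cons(\chi)$. Thus $\ca \subseteq \cons(\chi)$, which is the desired maximality. The only subtle point is spotting the $M^{\dagger} M = 0$ trick, which is what forces invariance of $\Im(\chi)$ from the purely multiplicative hypothesis; closure of $\ca$ under $\dagger$ is then what upgrades this to consistency.
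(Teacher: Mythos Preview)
Your proof is correct and follows essentially the same approach as the paper: parts (i) and (ii) are identical, and for the maximality in (iii) both you and the paper exploit multiplicativity on $B^{\dagger}B$ to force $(\id - \pi^{\chi})(B \otimes \id)\chi = 0$ via a positivity argument. Your packaging via $M^{\dagger}M = 0 \Rightarrow M = 0$ is slightly more streamlined than the paper's detour through Hilbert--Schmidt norms and the Pythagorean theorem, but the mathematical content is the same.
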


\begin{proof}
First, let us Observe that $\cons(\chi)$ is a linear subspace of $\cl(\chlc)$ containing $\id_{\chlc}$; furthermore it is stable by composition since for all $A,B$ in $\cons(\chi)$, $\pi^{\chi}(AB \otimes \id) = (A \otimes \id )\pi^{\chi} (B \otimes \id)  = (AB \otimes \id) \pi^{\chi}$; finally it is stable by adjoint as for all $A$ in $\cons(\chi)$, $\pi^{\chi}(A^{\dagger} \otimes \id) =  ((A \otimes \id) \pi^{\chi})^{\dagger} = (\pi^{\chi}(A \otimes \id))^{\dagger} = (A^{\dagger} \otimes \id) \pi^{\chi}$. This proves that $\cons(\chi)$ is a von Neumann subalgebra of $\cl(\chlc)$. Let us now prove that $\sigma^{\chi}_{|\ca}$ is a $*$-algebra homomorphism. It is linear and preserves the adjoint as $\chi^{\dagger} (A^{\dagger} \otimes \id) \chi = (\chi^{\dagger} (A \otimes \id) \chi)^{\dagger}$. It remains to show that it preserves the product, which we can show diagrammatically as

\begin{equation}
  \tikzfig{figures/consmorphism} \, .
\end{equation}

Finally let us prove that it is the biggest von Neumann algebra $\ca$ of $\cl(\chlc)$ such that $\sigma^{\chi}_{|\ca}$ is  a $*$-algebra homomorphism. Let $\ca \subseteq \cl(\chlc)$ be a von Neumann algebra satisfying this property. Let $A \in \ca$ and $\overline{\pic} = \id - \pi^{\chi}$. Then $\pic (A \otimes \id)^{\dagger} (A \otimes \id)  \pic = \pic (A \otimes \id)^{\dagger} \pic (A \otimes \id) \pic$, thus $\parallel (A \otimes \id) \pic \parallel^2 = \Tr(\pic (A \otimes \id)^{\dagger} (A \otimes \id) \pic) = \Tr(\pic (A \otimes \id)^{\dagger} \pic (A \otimes \id) \pic) = \parallel \pic (A \otimes \id) \pic \parallel^2$ and because $(A \otimes \id) \pic = \pic (A \otimes \id)\pic + \overline{\pic}  (A \otimes \id) \pic$ with $\langle \pic (A \otimes \id) \pic ,  \overline{\pic} (A \otimes \id) \pic \rangle = 0$, by the Pythagorean theorem $\parallel  (A \otimes \id) \pic \parallel^2 = \parallel \pic (A \otimes \id) \pic \parallel^2 + \parallel \overline{\pic} (A \otimes \id) \pic \parallel^2$ which implies that $ \parallel \overline{\pic}  (A \otimes \id) \pic \parallel^2 = 0$ and thus that $ \overline{\pic} (A \otimes \id) \pic = 0$. This tells us that $(A \otimes \id) \pic = \pic (A \otimes \id) \pic $. The same reasoning can be done with $\Tr(\pic (A \otimes \id) (A \otimes \id)^{\dagger} \pic) = \Tr(\pic (A \otimes \id) \pic (A \otimes \id)^{\dagger} \pic)$; it follows that $\pic (A \otimes \id) = \pic (A \otimes \id) \pic $, which concludes the proof.
\end{proof}

\begin{corollary}
The set of strictly $\chi$-local operators, $\stloc(\chi)$, is a von Neumann subalgebra of $\cl(\ch)$.
\end{corollary}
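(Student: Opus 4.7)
The plan is to exploit the structural results just established, namely that $\stloc(\chi) = \sigma^{\chi}(\cons(\chi))$, that $\cons(\chi)$ is itself a Von Neumann algebra, and that $\sigma^{\chi}$ restricted to $\cons(\chi)$ is a $*$-algebra homomorphism. Since in finite dimension a Von Neumann subalgebra of $\cl(\ch)$ is exactly a unital $*$-subalgebra, it suffices to verify that $\stloc(\chi)$ is closed under linear combinations, composition, and the adjoint, and that it contains $\id_{\ch}$.

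First I would observe that linearity of $\sigma^{\chi}$ immediately gives that $\stloc(\chi) = \sigma^{\chi}(\cons(\chi))$ is a linear subspace of $\cl(\ch)$. Next, for closure under composition, I would take $A_1, A_2 \in \stloc(\chi)$ with $\chi$-consistent on-site representatives $\tilde{A}_1, \tilde{A}_2 \in \cons(\chi)$, so that $A_i = \sigma^{\chi}(\tilde{A}_i)$; since $\cons(\chi)$ is an algebra, $\tilde{A}_1 \tilde{A}_2 \in \cons(\chi)$, and since $\sigma^{\chi}\!\restriction_{\cons(\chi)}$ is multiplicative, $A_1 A_2 = \sigma^{\chi}(\tilde{A}_1 \tilde{A}_2) \in \stloc(\chi)$. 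Closure under the adjoint follows similarly: $\cons(\chi)$ is stable by $\dagger$, and $\sigma^{\chi}$ always satisfies $\sigma^{\chi}(\tilde{A}^{\dagger}) = \sigma^{\chi}(\tilde{A})^{\dagger}$, so $A^{\dagger} = \sigma^{\chi}(\tilde{A}^{\dagger}) \in \stloc(\chi)$.

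Finally, for unitality I would note that $\id_{\chlc}$ trivially lies in $\cons(\chi)$ and that $\sigma^{\chi}(\id_{\chlc}) = \chi^{\dagger}(\id_{\chlc} \otimes \id_{\chrc})\chi = \chi^{\dagger}\chi = \id_{\ch}$ because $\chi$ is an isometry. This gives $\id_{\ch} \in \stloc(\chi)$ and completes the verification that $\stloc(\chi)$ is a unital $*$-subalgebra of $\cl(\ch)$, hence a Von Neumann algebra.

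I do not expect a genuine obstacle here: the corollary is essentially an immediate consequence of the fact that the image of a unital $*$-algebra homomorphism is a unital $*$-subalgebra, combined with the preceding proposition. The only subtle point — which is already discharged by the preceding proposition rather than by this corollary itself — is that one must restrict $\sigma^{\chi}$ to $\cons(\chi)$ in order for multiplicativity to hold; on the full domain $\cl(\chlc)$, $\sigma^{\chi}$ is only the cp map producing $\loc(\chi)$, which is why the analogous statement fails for $\loc(\chi)$ in general.
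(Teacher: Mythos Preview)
Your proof is correct and follows exactly the intended reasoning: the paper states this as an immediate corollary of the preceding proposition, and you have simply spelled out the details of why the image of a Von Neumann algebra under a unital $*$-algebra homomorphism is again a Von Neumann algebra. Your closing remark about why the argument requires restricting $\sigma^{\chi}$ to $\cons(\chi)$ (and hence fails for $\loc(\chi)$) is also on point.
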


This result tells us that, starting from a splitting map, we can recover an algebraic subsystem as its set of strictly local operators. Let us now show that comprehension between splitting maps implies the respective inclusions of their local and strictly-local operators.

\begin{proposition}\label{comprehension implies inclusion}
Let $\chi$ and $\zeta$ be two splitting maps on $\ch$. If $\zeta \sqsubseteq \chi$, then:
\begin{enumerate}
\item $\loc(\zeta) \subseteq \loc(\chi)$
\item $\stloc(\zeta) \subseteq \stloc(\chi)$
\end{enumerate}
\end{proposition}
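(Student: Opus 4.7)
The plan is to transport an on-site representative from $\zeta$ to $\chi$ using the isometries witnessing the comprehension. By definition of $\zeta \sqsubseteq \chi$, there exist a Hilbert space $\ch_{\MM}$ and isometries $\bigcirc : \chlc \to \chlz \otimes \ch_{\MM}$ and $\tikzcircle{4pt} : \chrz \to \ch_{\MM} \otimes \chrc$ satisfying $(\id_{\chlz} \otimes \tikzcircle{4pt})\zeta = (\bigcirc \otimes \id_{\chrc})\chi$. Given any on-site representative $\tilde{A} \in \cl(\chlz)$ for $\zeta$, the natural candidate on-site representative for $\chi$ is
\begin{equation*}
\tilde{A}' \; := \; \bigcirc^\dagger (\tilde{A} \otimes \id_{\ch_{\MM}}) \bigcirc \; \in \; \cl(\chlc),
\end{equation*}
obtained by splitting off the internal $\ch_{\MM}$ factor via $\bigcirc$, applying $\tilde{A}$ on the $\chlz$ part, and merging back with $\bigcirc^\dagger$.

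For part (1), starting from $A = \zeta^\dagger(\tilde{A} \otimes \id_{\chrz})\zeta$, I would insert $\tikzcircle{4pt}^\dagger \tikzcircle{4pt} = \id_{\chrz}$ into the right factor of $\tilde{A} \otimes \id_{\chrz}$ and, by naturality of the tensor, rewrite it as $(\id \otimes \tikzcircle{4pt}^\dagger)(\tilde{A} \otimes \id_{\ch_{\MM} \otimes \chrc})(\id \otimes \tikzcircle{4pt})$. The outer pair $(\id \otimes \tikzcircle{4pt})\zeta$ and its adjoint then collapse to $(\bigcirc \otimes \id_{\chrc})\chi$ and $\chi^\dagger(\bigcirc^\dagger \otimes \id_{\chrc})$ by the comprehension identity, while functoriality of the tensor rearranges $(\bigcirc^\dagger \otimes \id_{\chrc})(\tilde{A} \otimes \id_{\ch_{\MM} \otimes \chrc})(\bigcirc \otimes \id_{\chrc})$ into $(\tilde{A}' \otimes \id_{\chrc})$. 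This yields $A = \chi^\dagger(\tilde{A}' \otimes \id_{\chrc})\chi$, hence $A \in \loc(\chi)$.

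For part (2), rather than checking $\chi$-consistency of $\tilde{A}'$ directly, I would verify the two strict-locality equations. Starting from $\zeta A = (\tilde{A} \otimes \id_{\chrz})\zeta$, I compose on the left with $(\id_{\chlz} \otimes \tikzcircle{4pt})$; sliding $\tikzcircle{4pt}$ past $\tilde{A}$ and applying the comprehension identity on both sides produces $(\bigcirc \otimes \id_{\chrc})\chi A = \bigl((\tilde{A} \otimes \id_{\ch_{\MM}})\bigcirc \otimes \id_{\chrc}\bigr)\chi$. Composing on the left with $(\bigcirc^\dagger \otimes \id_{\chrc})$ and using $\bigcirc^\dagger \bigcirc = \id_{\chlc}$ delivers $\chi A = (\tilde{A}' \otimes \id_{\chrc})\chi$. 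The dual equation $A\chi^\dagger = \chi^\dagger(\tilde{A}' \otimes \id_{\chrc})$ follows by taking the adjoint of $A\zeta^\dagger = \zeta^\dagger(\tilde{A} \otimes \id_{\chrz})$, running the same argument with $\tilde{A}^\dagger$ in place of $\tilde{A}$, and daggering once more; this closes because the assignment $\tilde{A} \mapsto \tilde{A}'$ manifestly commutes with the adjoint.

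The main obstacle is purely bookkeeping: since $\bigcirc$ and $\tikzcircle{4pt}$ are only isometries, the composites $\bigcirc\bigcirc^\dagger$ and $\tikzcircle{4pt}\tikzcircle{4pt}^\dagger$ are proper projectors rather than identities, so every rewrite must be arranged so as to use only the cancellations $\bigcirc^\dagger \bigcirc = \id_{\chlc}$ and $\tikzcircle{4pt}^\dagger \tikzcircle{4pt} = \id_{\chrz}$. Carrying the argument out in the diagrammatic calculus favoured by the paper should render these manipulations essentially automatic, as the witnessing isometries simply cap off the wires that would otherwise carry the $\ch_{\MM}$ label.
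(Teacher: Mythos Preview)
Your proposal is correct and, for part~(1), identical to the paper's argument: both transport the on-site representative via $\tilde{A}' = \bigcirc^\dagger(\tilde{A}\otimes\id_{\ch_{\MM}})\bigcirc$ and use the comprehension identity together with $\tikzcircle{4pt}^\dagger\tikzcircle{4pt}=\id$ and $\bigcirc^\dagger\bigcirc=\id$ to rewrite $\zeta^\dagger(\tilde{A}\otimes\id)\zeta$ as $\chi^\dagger(\tilde{A}'\otimes\id)\chi$.

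For part~(2) there is a minor but genuine difference in route. The paper assumes $\tilde{A}\in\cons(\zeta)$ and proves directly that $\tilde{A}'\in\cons(\chi)$, i.e.\ that $\pi^\chi(\tilde{A}'\otimes\id)=(\tilde{A}'\otimes\id)\pi^\chi$, then invokes the characterisation of strict locality via consistent on-site representatives. You instead work from the defining equations $\zeta A=(\tilde{A}\otimes\id)\zeta$ and $A\zeta^\dagger=\zeta^\dagger(\tilde{A}\otimes\id)$ and push them through the comprehension identity to obtain the corresponding equations for $\chi$. Both arguments are short and both hinge on the same transported representative $\tilde{A}'$; the paper's version has the slight advantage of establishing $\tilde{A}'\in\cons(\chi)$ as an explicit byproduct, while yours stays closer to the primitive definition and avoids appealing to the equivalence between the two characterisations of strict locality.
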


\begin{proof}
Let $A \in \loc(\zeta)$. Writing

\begin{equation}
  \tikzfig{figures/comploc}
\end{equation}
proves that  $A \in \loc(\chi)$ and thus that $\loc(\zeta) \subseteq \loc(\chi)$. Suppose now that $\tilde{A} \in \cons(\zeta)$ and thus that $A \in \stloc(\zeta)$, we get that 

\begin{equation}
  \scalebox{0.9}{  \tikzfig{figures/compcons}} \, ,
\end{equation}
which proves that $\bigcirc^{\dagger} (\tilde{A} \otimes \id) \bigcirc \in \cons(\chi)$ and thus that $A \in \stloc(\chi)$.
\end{proof}

As a consequence of these results, we have that the assignment of the set of strictly-local operators to any splitting map gives a preorder preserving map from splitting maps to von Neumann algebras,
\[
\stloc(\cdot): (\spl(\ch),\sqsubseteq) \longrightarrow (\vnalg(\ch), \subseteq). 
\]
This in-turn sets up one-half of a possible equivalence between von Neumann algebras and splitting maps.
\subsection{From algebraic subsystems to diagrammatic subsystems}

In the previous sections, we defined splitting maps as a tool to select a subsystem and showed that we can associate to each a von Neumann subalgebra of $\cl(\ch)$, through the preorder preserving map $\stloc(\cdot): (\spl(\ch),\sqsubseteq) \longrightarrow (\vnalg(\ch),\subseteq)$. In this section we will first look at the reverse direction and show that we can associate to any von Neumann subalgebra of $\cl(\ch)$ a canonical splitting map, provided by the Artin-Wedderburn theorem, and for which locality and strict locality coincide.

\begin{proposition}
Let $\ca$ be a unital subalgebra of $\cl(\ch)$ and $U : \ch \longrightarrow \bigoplus_i (\ch_{{\LL}}^{i} \otimes \ch_{{\R}}^{i})$ be a representation unitary for $\ca$. Then

\begin{align}
\chi_U :  & \ch \longrightarrow \bigoplus_i (\ch_{{\LL}}^{i} \otimes \ch_{{\R}}^{i}) \hookrightarrow (\oplus_i \ch_{{\LL}}^{i}) \otimes (\oplus_i \ch_{{\R}}^{i}) \\
& \ket{x} \longrightarrow U(\ket{x})
\end{align}
is a splitting map such that $\stloc_{\LL}(\chi_U) = \loc_{\LL}(\chi_U) = \ca$ and $\stloc_{\R}(\chi_U) = \loc_{\R}(\chi_U) = \ca'$. We will call $\chi$ a canonical splitting map for $\ca$.
\end{proposition}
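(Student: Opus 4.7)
The plan is to unpack the definition of $\chi_U$ as a composition of two pieces, namely the representation unitary $U: \ch \to \bigoplus_i (\ch_{\LL}^i \otimes \ch_{\R}^i)$ and the canonical isometric embedding $E: \bigoplus_i (\ch_{\LL}^i \otimes \ch_{\R}^i) \hookrightarrow (\oplus_i \ch_{\LL}^i) \otimes (\oplus_i \ch_{\R}^i)$ which sits the $i$-th summand inside the diagonal block $\ch_{\LL}^i \otimes \ch_{\R}^i$ of the distributed direct sum $\oplus_{i,j}(\ch_{\LL}^i \otimes \ch_{\R}^j)$. That $\chi_U = EU$ is an isometry is immediate: $U$ is unitary and $E$ is, by construction, the inclusion of an orthogonal direct summand. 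Consequently $\chi_U^{\dagger} = U^{\dagger} E^{\dagger}$ where $E^{\dagger}$ is the orthogonal projection from the bi-indexed direct sum onto its diagonal $i=j$ part, and the associated projector is $\pi^{\chi_U} = \chi_U \chi_U^{\dagger} = E E^{\dagger}$.

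Next I would compute $\loc_{\LL}(\chi_U)$. Fix $A \in \loc_{\LL}(\chi_U)$ with on-site representative $\tilde A \in \cl(\oplus_i \ch_{\LL}^i)$; write $\tilde A = (\tilde A_{ij})_{i,j}$ in block form. A direct index-chase through $E^{\dagger}(\tilde A \otimes \id)E$ shows that acting on a vector in the $l$-th diagonal summand $\ch_{\LL}^l \otimes \ch_{\R}^l$, the operator $\tilde A \otimes \id$ sends it into $\oplus_i (\ch_{\LL}^i \otimes \ch_{\R}^l)$ via the column $(\tilde A_{il} \otimes \id)$, and $E^{\dagger}$ keeps only the $i = l$ component. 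Hence
\begin{equation}
E^{\dagger}(\tilde A \otimes \id)E \;=\; \bigoplus_l \tilde A_{ll} \otimes \id_{\ch_{\R}^l}\, ,
\end{equation}
so $U A U^{\dagger} \in \bigoplus_l \cl(\ch_{\LL}^l) \otimes \id_{\ch_{\R}^l} = U \ca U^{\dagger}$, giving $A \in \ca$. Conversely, any $A \in \ca$ can be written as $U^{\dagger}(\oplus_l B_l \otimes \id_{\ch_{\R}^l}) U$ by Artin-Wedderburn, and choosing $\tilde A := \oplus_l B_l \in \cl(\oplus_l \ch_{\LL}^l)$ (with vanishing off-diagonal blocks) realizes it as $\chi_U^{\dagger}(\tilde A \otimes \id)\chi_U$. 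This yields $\loc_{\LL}(\chi_U) = \ca$.

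To upgrade this to $\stloc_{\LL}(\chi_U) = \ca$, I would verify that the block-diagonal representative $\tilde A = \oplus_l B_l$ chosen above is in fact $\chi_U$-consistent, which by the proposition on consistent operators reduces to checking that it preserves $\pi^{\chi_U} = E E^{\dagger}$, the projector onto the diagonal part of $\oplus_{i,j}(\ch_{\LL}^i \otimes \ch_{\R}^j)$. Since $\tilde A \otimes \id$ acts on $\ch_{\LL}^i \otimes \ch_{\R}^j$ as $B_i \otimes \id_{\ch_{\R}^j}$ (no mixing of indices), it preserves every block $\ch_{\LL}^i \otimes \ch_{\R}^j$ individually, hence commutes with the diagonal projector. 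This gives $\ca \subseteq \stloc_{\LL}(\chi_U) \subseteq \loc_{\LL}(\chi_U) = \ca$, so both coincide. The right-hand statement $\stloc_{\R}(\chi_U) = \loc_{\R}(\chi_U) = \ca'$ follows by the symmetric argument, appealing to the remark following the Artin-Wedderburn theorem which states that the same $U$ simultaneously conjugates $\ca'$ to $\bigoplus_i \id_{\ch_{\LL}^i} \otimes \cl(\ch_{\R}^i)$.

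The main obstacle is purely bookkeeping: tracking how $\tilde A \otimes \id$ and the embedding $E$ interact across the double index structure $(i,j)$ of the distributed tensor product of direct sums. There is no conceptual subtlety once one observes that $\pi^{\chi_U}$ is a block-diagonal projector onto the $i=j$ sector, and that block-diagonal on-site operators trivially commute with it while off-diagonal blocks are erased by sandwiching with $\chi_U, \chi_U^{\dagger}$, which is precisely what makes locality and strict locality collapse for the canonical splitting $\chi_U$.
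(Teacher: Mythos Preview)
Your proof is correct and follows essentially the same approach as the paper: both identify $\pi^{\chi_U}$ with the projector onto the diagonal sector $\bigoplus_i(\ch_{\LL}^i\otimes\ch_{\R}^i)$, compute that $\chi_U^{\dagger}(\tilde A\otimes\id)\chi_U$ only retains the block-diagonal part $\bigoplus_l \tilde A_{ll}\otimes\id_{\ch_{\R}^l}$, and observe that block-diagonal on-site operators are $\chi_U$-consistent. The paper first characterises $\cons_{\LL}(\chi_U)$ and then deduces $\loc=\stloc$, whereas you first pin down $\loc_{\LL}(\chi_U)=\ca$ and then verify consistency of the diagonal representative, but this is a cosmetic reordering of the same computation.
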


\begin{proof}
First we Observe that $\Im(\chi_U) =  \bigoplus_i (\ch_{{\LL}}^{i} \otimes \ch_{{\R}}^{i}) \subseteq (\oplus_i \ch_{{\LL}}^{i}) \otimes (\oplus_i \ch_{{\R}}^{i})$ and thus that $\pi^{\chi_U} = \bigoplus_i \id_{\ch_{{\LL}}^{i}} \otimes \id_{\ch_{{\R}}^{i}} $, as it is the projector on $\Im(\chi_U)$. Then the left $\chi_U$-consistent operators are the operators $A$ of $\cl(\ch_{\chi_U}^{{\LL}}) = \cl( \bigoplus_i \ch_{{\LL}}^{i})$ such that $A \otimes \id_{ \bigoplus_i \ch_{{\R}}^{i}}$ commutes with $\pi^{\chi_U} = \bigoplus_i \id_{\ch_{{\LL}}^{i}} \otimes \id_{\ch_{{\R}}^{i}}$. This means that $\bigoplus_i (A\id_{\ch_{{\LL}}^{i}}) \otimes \id_{\ch_{{\R}}^{i}} = \bigoplus_i (\id_{\ch_{{\LL}}^{i}}A) \otimes \id_{\ch_{{\R}}^{i}}$ and thus that for all $i$, $A\id_{\ch_{{\LL}}^{i}} = \id_{\ch_{{\LL}}^{i}}A = \id_{\ch_{{\LL}}^{i}}A \id_{\ch_{{\LL}}^{i}}$. It follows, because the $\id_{\ch_{{\LL}}^{i}}$ are a family of pairwise orthogonal projectors that sum to $\id_{ \bigoplus_i \ch_{{\LL}}^{i}}$, that $A = (\id_{ \bigoplus_i \ch_{{\LL}}^{i}}) A (\id_{ \bigoplus_i \ch_{{\LL}}^{i}}) = \bigoplus_i \id_{\ch_{{\LL}}^{i}} A \id_{\ch_{{\LL}}^{i}} = \bigoplus_i A_i$ with $A_i \in \cl(\ch_{{\LL}}^{i})$. Conversely, one can check that all operators of this form are left $\chi_U$-consistent.

We then compute that $\sigma^{\chi_U}(A) = \chi_U^{\dagger} (A \otimes \id)  \chi_U =  \chi_U^{\dagger} \pi^{\chi_U} (A \otimes \id) \pi^{\chi_U} \chi_U = \chi_U^{\dagger} (\bigoplus_i \id_{\ch_{\LL}^{i}}A \id_{\ch_{\LL}^{i}}\otimes \id_{\ch_{\R}^{i}}) \chi_U$, which we can write as $U^{\dagger}  (\bigoplus_i A_{i} \otimes \id_{\ch_{{\R}}^{i}})  U$. We therefore get that $\loc(\chi_U) = \sigma^{\chi_U}(\cl(\chlc)) = \sigma^{\chi_U}(\cons_{\LL}(\chi)) = \stloc(\chi_U) \subseteq \ca$. Conversely, any element of $\ca$ can be decomposed like this because all maps of the form $\bigoplus_i A_i$ with $A_i \in \cl(\ch_{{\LL}}^{i})$ are left $\chi_U$-consistent. This proves that $\stloc_{\LL}(\chi_U) = \loc_{\LL}(\chi_U)= \ca$. A symmetric reasoning proves that $\stloc_{\R}(\chi_U) = \ca'$.
\end{proof}

\begin{remark}
A given subalgebra $\ca$ of $\cl(\ch)$ doesn't have a unique canonical splitting map, but all its canonical splitting maps are equal up to some local unitaries on the left and on the right.
\end{remark}

Let us now show that a map $F : (\vnalg(\ch),\subseteq) \rightarrow (\spl(\ch),\sqsubseteq)$ that sends every $\ca \in \vnalg(\ch)$ to a canonical splitting map $\chi_{\ca}$ will be preorder preserving.

\begin{proposition}\label{inclusion implies comprehension}
Let $\ca_S \subseteq \ca_B$ be two von Neumann subalgebras of $\cl(\ch)$. Let $\chi : \ch \rightarrow \ch_{\LL}^B \otimes \ch_{\R}^B$ and $\zeta : \ch \rightarrow \ch_{\LL}^S \otimes \ch_{\R}^S$ be canonical splitting maps for $\ca_B$ and $\ca_S$ respectively. Then $\zeta \sqsubseteq \chi$.
\end{proposition}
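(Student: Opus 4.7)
The plan is to build explicit isometries witnessing the comprehension $\zeta \sqsubseteq \chi$ from two successive applications of the Artin--Wedderburn theorem. Because a canonical splitting map is unique up to local unitaries on left and right (as noted in the remark preceding the statement), and because the comprehension relation is clearly stable under such local unitaries, I may freely choose convenient canonical splitting maps. I would fix $\chi$ so that a representation unitary $U_B$ identifies $\ch$ with $\bigoplus_i \ch_{\LL}^{i,B} \otimes \ch_{\R}^{i,B}$, with $\chlc = \bigoplus_i \ch_{\LL}^{i,B}$ and $\chrc = \bigoplus_i \ch_{\R}^{i,B}$, so that $\chi$ is the natural embedding of the direct sum of tensors into the tensor of direct sums.

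The key structural observation is that $\ca_S \subseteq \ca_B = \bigoplus_i \cl(\ch_{\LL}^{i,B}) \otimes \id$ forces every element of $\ca_S$ to be of the form $(\bigoplus_i A_i) \otimes \id$ for a block-diagonal operator $\bigoplus_i A_i$ on $\chlc$; call the resulting subalgebra $\tilde{\ca}_S \subseteq \cl(\chlc)$. Applying Artin--Wedderburn to $\tilde{\ca}_S$ yields a second decomposition $\chlc \cong \bigoplus_j L_j \otimes M_j$ with $\tilde{\ca}_S = \bigoplus_j \cl(L_j) \otimes \id_{M_j}$; and since $\tilde{\ca}_S$ preserves each summand $\ch_{\LL}^{i,B}$, this restricts blockwise, giving $\ch_{\LL}^{i,B} \cong \bigoplus_j L_j \otimes M_j^i$ with $M_j = \bigoplus_i M_j^i$ (some $M_j^i$ possibly zero). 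Feeding this back and regrouping yields $\ch \cong \bigoplus_j L_j \otimes \bigl(\bigoplus_i M_j^i \otimes \ch_{\R}^{i,B}\bigr)$; by the uniqueness in Artin--Wedderburn this must agree with the decomposition for $\ca_S$ on $\ch$, so I fix $\zeta$ to use precisely this one: $\chlz = \bigoplus_j L_j$ and $\chrz = \bigoplus_{i,j} M_j^i \otimes \ch_{\R}^{i,B}$.

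With the spaces aligned, the witnessing data writes itself. Take $\ch_{\MM} := \bigoplus_j M_j = \bigoplus_{i,j} M_j^i$; define $\bigcirc : \chlc \to \chlz \otimes \ch_{\MM}$ as the canonical embedding of $\bigoplus_j L_j \otimes M_j$ into $(\bigoplus_j L_j) \otimes (\bigoplus_j M_j)$; and define $\tikzcircle{4pt} : \chrz \to \ch_{\MM} \otimes \chrc$ as the canonical embedding of $\bigoplus_{i,j} M_j^i \otimes \ch_{\R}^{i,B}$ into $(\bigoplus_{i,j} M_j^i) \otimes (\bigoplus_i \ch_{\R}^{i,B})$. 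Both are isometries because a direct-sum-of-tensors always embeds isometrically into the corresponding tensor-of-direct-sums. Checking $(\id_{\chlz} \otimes \tikzcircle{4pt}) \zeta = (\bigcirc \otimes \id_{\chrc}) \chi$ is then a matter of triple-index bookkeeping: both composites send a vector in the $(i,j)$-component $L_j \otimes M_j^i \otimes \ch_{\R}^{i,B}$ of $\ch$ to the same vector living in the $j$-th, $(i,j)$-th, and $i$-th summands of $\chlz$, $\ch_{\MM}$, $\chrc$ respectively.

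The main obstacle I expect is notational rather than conceptual: establishing cleanly the interlocking of the two Artin--Wedderburn decompositions, namely that applying Artin--Wedderburn to $\tilde{\ca}_S$ on $\chlc$ and then tensoring back in the $\ch_{\R}^{i,B}$ factors really reproduces the Artin--Wedderburn decomposition of $\ca_S$ on the full space $\ch$. This rests on uniqueness of simple components up to isomorphism and on the fact that block-diagonal operators decompose blockwise. Once that alignment is pinned down, the choice of $\bigcirc$, $\tikzcircle{4pt}$, and $\ch_{\MM}$ is essentially forced, and the defining equation of comprehension is immediate.
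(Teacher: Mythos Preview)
Your approach is correct and takes a genuinely different route from the paper's. The paper's proof (Appendix~\ref{bigproof}) works at the level of explicit maximal families of orthogonal projectors: it first shows that any two such families for a given algebra are unitarily conjugate within the algebra, then constructs a specific maximal family $\{\tilde P_i^j\}$ for $\ca_B$ that refines the chosen family $\{Q_j\}$ for $\ca_S$, and finally builds the comprehension witnesses by explicit index chasing using this refined family. Your approach is more structural: you apply Artin--Wedderburn twice in succession (first for $\ca_B$ on $\ch$, then for the image $\tilde{\ca}_S$ of $\ca_S$ on the left factor $\chlc$), observe that the projectors onto the $\ca_B$-blocks lie in the commutant of $\tilde{\ca}_S$ and therefore split the multiplicity spaces $M_j$ into pieces $M_j^i$, and then read off the comprehension witnesses as natural direct-sum-into-tensor embeddings. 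Both approaches ultimately encode the same double decomposition $\ch \cong \bigoplus_{i,j} L_j \otimes M_j^i \otimes \ch_{\R}^{i,B}$, but yours makes this structural content explicit and bypasses the lengthy projector bookkeeping of the appendix.

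The one step you should make precise is ``since $\tilde{\ca}_S$ preserves each summand $\ch_{\LL}^{i,B}$, this restricts blockwise''. The clean argument is: the projector $p_i$ onto $\ch_{\LL}^{i,B}$ lies in $\tilde{\ca}_S' = \bigoplus_j \id_{L_j} \otimes \cl(M_j)$, hence $p_i = \bigoplus_j \id_{L_j} \otimes q_j^i$ for self-adjoint projectors $q_j^i$ on $M_j$; orthogonality and completeness of the $p_i$ force the $q_j^i$ (for fixed $j$) to be mutually orthogonal and to sum to $\id_{M_j}$, so $M_j^i := q_j^i M_j$ gives $M_j = \bigoplus_i M_j^i$ and $\ch_{\LL}^{i,B} = \bigoplus_j L_j \otimes M_j^i$. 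Once this is spelled out, the identification of $\bigoplus_j L_j \otimes (\bigoplus_i M_j^i \otimes \ch_{\R}^{i,B})$ with an Artin--Wedderburn decomposition of $\ca_S$ is immediate, and your final verification goes through exactly as sketched.
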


\begin{proof}
See Appendix \ref{bigproof}.
\end{proof}

As a result of the above we have established the following: that there exist a preorder homomorphism $F$ from von Neumann algebras to splitting maps and a preorder homomorphism $G$ from splitting maps to von Neumann algebras.

\[
\begin{tikzcd}
 (\vnalg(\ch),\subseteq) \arrow[r, "F", bend left] &(\spl(\ch),\sqsubseteq)  \arrow[l, "G = \stloc(.)", bend left] 
\end{tikzcd}
\]

One might wonder whether this pair of preorder homomorphisms establishes an equivalence, such that $FG = id$. In fact, it is not even true that for each $\chi$, $FG(\chi) \cong \chi$. Indeed, a simple counter example is the following. Consider $\chi : \mathbb{C}^2 \rightarrow \mathbb{C}^4 \otimes \mathbb{C}^4$ such that $\chi \ket{0} = \frac{1}{\sqrt{2}}(\ket{00}+\ket{11})$ and $\chi \ket{1} = \frac{1}{\sqrt{2}}(\ket{20}+\ket{33})$. Its algebra of strictly (left)-local operators is $\stloc_{\LL}(\chi) = \{ \alpha \ketbra{0}{0} + \beta \ketbra{1}{1} \} \subseteq \cl(\mathbb{C}^2)$ and thus $\zeta = FG(\chi)$ will be a canonical splitting map such that $\zeta\ket{0} = \ket{00}$ and $\zeta\ket{1} = \ket{11}$. Now suppose that $\chi \sqsubseteq \zeta$ and thus that $(\id \otimes \tikzcircle{4pt}) \chi  = (\bigcirc \otimes \id) \zeta$. Applying this equality to $\ket{0}$, we find that $\tikzcircle{4pt}\ket{0} = \ket{m_0} \otimes \ket{0}$, and applying it to $\ket{1}$ yields $\tikzcircle{4pt}\ket{0} = \ket{m_1} \otimes \ket{1}$. This is not possible and thus contradicts the fact that $\chi \sqsubseteq \zeta = FG(\chi)$. 

Nonetheless, it is natural to wonder which splitting maps satisfy the property that $FG(\chi) \cong \chi$, since keeping only those splitting maps would establish the widest possible equivalence (based on strictly local operators) between splitting maps and von Neumann algebras. 

\begin{remark}\label{equivalence implies balanced}
If $\chi$ is a splitting map such that $FG(\chi) \cong \chi$, i.e. $\chi \sqsubseteq F \circ G(\chi) \sqsubseteq \chi$, then by definition of $G$, $\zeta = F \circ G(\chi)$ is a canonical splitting map for the algebra $\ca = \stloc_{\LL}(\chi)$ and thus such that $\stloc_{\LL}(\zeta) = \stloc_{\R}(\zeta)'$. Then because $\chi \sqsubseteq \zeta$, $\stloc_{\LL}(\chi) \subseteq \stloc_{\LL}(\zeta)$ and $\stloc_{\R}(\zeta) \subseteq \stloc_{\R}(\chi)$. And because $\zeta \sqsubseteq \chi$, $\stloc_{\LL}(\zeta) \subseteq \stloc_{\LL}(\chi)$ and $\stloc_{\R}(\chi) \subseteq \stloc_{\R}(\zeta)$. It follows that $\stloc_{\LL}(\chi) = \stloc_{\LL}(\zeta)$ and that $\stloc_{\R}(\zeta) =  \stloc_{\R}(\chi)$ and then that
\[
\stloc_{\LL}(\chi) = \stloc_{\R}(\chi)' .
\]
\end{remark}

This leads naturally to the question: can those $\chi$ that are balanced in the above sense be viewed as equivalent up to preorder to von Neumann algebras? 

\subsection{Balanced splitting maps}\label{balanced}

We will focus in this section on the class of these balanced splitting maps. First, let us Observe that this is not a trivial requirement: in general, the strictly left local operators and strictly right local operators of a splitting map only have to be commuting subalgebras, and not precisely each other's commutant. Indeed, let us consider the splitting map $\chi : \mathbb{C}^{2} \rightarrow \mathbb{C}^{2} \otimes \mathbb{C}^{2}$ defined by $\chi (\ket{0}) = \ket{00}$ and $\chi (\ket{1}) = \ket{10}$: it leads to $\stloc_{\LL}(\chi) = \{\alpha \ketbra{0}{0} + \beta \ketbra{1}{1}\}$ and $\stloc_{\R}(\chi) = \mathbb{C} \id$, which are not each other's commutants as $\stloc_{\LL}(\chi)' = \stloc_{\LL}(\chi) \neq \stloc_{\R}(\chi)$.

We will start by defining balanced splitting maps, then show that they can be decomposed in a way that is related to canonical splitting maps. Finally we will prove that they yield the wanted equivalence with von Neumann algebras.

\begin{definition}[Balanced splitting maps]
A splitting map $\chi$ is said to be {\em balanced} if $\stloc_{\R}(\chi) = \stloc_{\LL}(\chi)'$. We will call $\balanced(\ch)$ the set of balanced splitting maps on $\ch$.
\end{definition}

We also define a subclass of balanced splitting maps that we will call lean and that are encoding the algebras of strictly local operators in a minimal way, i.e. such that the corresponding on-site algebras of consistent operators have no redundancies.

\begin{definition}[Lean splitting maps]
A splitting map is said to be {\em lean} if it is balanced and its algebras of $\chi$-consistent operators are such that $\cons_{\LL}(\chi)' = \cz(\cons_{\LL}(\chi)) \subseteq \cl(\chlc)$ and $\cons_{\R}(\chi)' = \cz(\cons_{\R}(\chi)) \subseteq \cl(\chrc)$. We will call $\lean(\ch)$ the set of lean splitting maps on $\ch$.
\end{definition}

We will now give a general result on the decomposition of splitting maps. It will allow us to reconstruct the shape of balanced splitting maps from the structure of the ones whose algebras of strictly local operators are factors. 

\begin{lemma}\label{decomposition of split}
Let $\chi \in \spl(\ch)$ and let $\Atproj (\cz(\stloc_{\LL}(\chi))) = \{ \pi_i\}_{i \in I}$. Then there exists a family of Hilbert spaces $\{ \ch_i = \pi_i \ch \}_{i \in I}$ and isometries $\chi_i = \chi \pi_i : \ch \longrightarrow \chlc \otimes \chrc$ such that: 
\begin{itemize}
\item $\ch = \bigoplus_i \ch_i$,
\item $\chi = \sum_i \chi_i =  \sum_i (\chi_i)_{|\ch_i}$,
\item for all $i$ in $I$, $(\chi_i)_{|\ch_i}$ is a splitting map on $\ch_i$,
\item its algebra of strictly local operators (seen as a subalgebra of $\cl(\ch)$) is $\pi_i \stloc_{\LL}(\chi)$ and is thus a factor,
\item there exists a family $\{\ch_{\LL}^{i}\}$ of pairwise orthogonal subspaces of $\chlc$ such that $\Im(\chi_i) \subseteq \ch_{\LL}^{i} \otimes \chrc$.
\end{itemize}
\end{lemma}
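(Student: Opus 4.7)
My approach is to lift the central decomposition of the Von Neumann algebra $\stloc_{\LL}(\chi)$ to a decomposition of $\chi$ itself, by transporting the central atomic projectors of $\stloc_{\LL}(\chi)$ through $\sigma^{\chi}$ to obtain pairwise orthogonal \emph{on-site} projectors on $\chlc$, which then control both the subsystem-structure and the geometry of $\Im(\chi)$.

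The first four bullets will follow essentially mechanically once I set $\{\pi_i\} := \Atproj(\cz(\stloc_{\LL}(\chi)))$. Since the atomic projectors of a commutative finite-dimensional Von Neumann algebra are pairwise orthogonal projectors whose linear span contains the identity, the $\pi_i$ are pairwise orthogonal projectors on $\ch$ summing to $\id_{\ch}$. This immediately gives $\ch = \bigoplus_i \ch_i$ with $\ch_i = \pi_i\ch$ as well as $\chi = \chi \sum_i \pi_i = \sum_i \chi_i$. The one-line computation $\chi_i^{\dagger}\chi_i = \pi_i \chi^{\dagger}\chi \pi_i = \pi_i^2 = \pi_i$ shows that $(\chi_i)_{|\ch_i}: \ch_i \to \chlc \otimes \chrc$ is an isometry, hence a splitting map on $\ch_i$. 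The factor claim for $\pi_i \stloc_{\LL}(\chi)$ is then exactly Theorem~\ref{blocks are factor} applied to the Von Neumann algebra $\stloc_{\LL}(\chi)$.

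The real work is the last bullet. To produce the orthogonal left-subspaces $\ch_{\LL}^i \subseteq \chlc$, the key step is to lift each $\pi_i$ to an on-site projector $\tilde\pi_i \in \cons_{\LL}(\chi) \subseteq \cl(\chlc)$ satisfying $\chi\pi_i = (\tilde\pi_i \otimes \id)\chi$, and crucially so that the $\tilde\pi_i$ are pairwise orthogonal projectors on $\chlc$. By the preceding proposition, $\sigma^{\chi}$ restricts to a surjective $*$-algebra homomorphism from $\cons_{\LL}(\chi)$ onto $\stloc_{\LL}(\chi)$, both finite-dimensional Von Neumann algebras. By semisimplicity (Artin--Wedderburn), the kernel of this map is a direct algebra summand, hence admits a (not necessarily unital) $*$-homomorphism section $\tau: \stloc_{\LL}(\chi) \to \cons_{\LL}(\chi)$. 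Setting $\tilde\pi_i := \tau(\pi_i)$, the $*$-homomorphism property automatically transfers $\pi_i^2 = \pi_i$, $\pi_i^{\dagger} = \pi_i$ and $\pi_i\pi_j = 0$ to the on-site side, while the section condition $\sigma^{\chi}(\tilde\pi_i) = \pi_i$, combined with consistency and $\chi\chi^{\dagger}\chi = \chi$, yields the desired intertwining $\chi\pi_i = (\tilde\pi_i \otimes \id)\chi$. Defining $\ch_{\LL}^i := \tilde\pi_i(\chlc)$ then produces pairwise orthogonal subspaces of $\chlc$, and the inclusion $\Im(\chi_i) = \chi(\ch_i) \subseteq \ch_{\LL}^i \otimes \chrc$ is immediate from $\chi v = \chi\pi_i v = (\tilde\pi_i \otimes \id)\chi v$ for $v \in \ch_i$. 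The identification $\stloc_{\LL}((\chi_i)_{|\ch_i}) = \pi_i \stloc_{\LL}(\chi)$ in the second bullet is then obtained by a straightforward double-inclusion argument: multiplication by $\pi_i$ on the left sends any strictly $\chi$-local operator to one that preserves $\ch_i$ and vanishes off it; conversely, extending a strictly local operator of $(\chi_i)_{|\ch_i}$ by zero on $(\ch_i)^{\perp}$ gives a strictly $\chi$-local operator, the cross-terms being killed precisely because $\tilde\pi_j(\ch_{\LL}^i) = 0$ for $j \neq i$.

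\textbf{Main obstacle.} The delicate point is the existence of the $*$-homomorphism section $\tau$: a pointwise choice of on-site representatives for each $\pi_i$ is immediate (by strict $\chi$-locality of $\pi_i$), but such uncoordinated choices need not yield projectors that are pairwise orthogonal on $\chlc$, since the on-site lift is defined only modulo operators vanishing on $\Im(\chi)$. Invoking semisimplicity is the cleanest resolution. A more hands-on alternative, if one prefers to avoid appealing to general structure theory, is to apply Artin--Wedderburn directly to $\cons_{\LL}(\chi)$ and to build each $\tilde\pi_i$ as a suitable sum of central atomic projectors of $\cons_{\LL}(\chi)$, selected according to which $\pi_i$ they map to under $\sigma^{\chi}$.
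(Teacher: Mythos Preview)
Your proposal is correct and follows essentially the same route as the paper. The paper's proof also lifts the $\pi_i$ to pairwise orthogonal on-site projectors $\tilde\pi_i \in \cons_{\LL}(\chi)$ via the inverse of $\sigma^{\chi}$ restricted to the complement of its kernel (the paper packages your ``semisimplicity gives a section'' argument as Proposition~\ref{prop: kernels are blocks}, and in fact notes the $\tilde\pi_i$ are atomic in $\cons_{\LL}(\chi)$), and then sets $\ch_{\LL}^i = \Im(\tilde\pi_i)$. The only minor variation is in the reverse inclusion of bullet four: the paper works on the on-site side, showing directly that $\tilde\pi_i A \in \cons_{\LL}(\chi)$ for $A \in \cons_{\LL}(\chi_i)$ and then computing $\sigma^{\chi_i}(A) = \pi_i\sigma^{\chi}(\tilde\pi_i A)$, whereas you extend by zero on $\ch$; both arguments are straightforward once the orthogonal $\tilde\pi_i$ are in hand.
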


\begin{proof}
The first two points are immediate because the atomic projectors of a commutative algebra sum to the identity. Then, because the $\pi_i$ are orthogonal projectors, they are isometries when restricted to their image and thus $(\chi_i)_{|\ch_i} = (\chi \pi_i)_{|\ch_i} = \chi (\pi_i)_{|\ch_i}$ is a splitting map.

Let us now prove that its algebra of strictly local operators is $\pi_i \stloc_{\LL}(\chi)$. Let $A \in \cons_{\LL}(\chi)$, we get that $\sigma^{\chi}(A) = \chi^{\dagger} (A \otimes \id) \chi \in \stloc_{\LL}(\chi)$ commutes with $\pi_i$ by definition of the atomic projectors. It follows that $\chi \chi^{\dagger} (A \otimes \id) \chi \pi_i \chi^{\dagger} = \chi \pi_i  \chi^{\dagger} (A \otimes \id) \chi \chi^{\dagger}$ and then that $ (A \otimes \id) \chi \pi_i \chi^{\dagger} = \chi \pi_i  \chi^{\dagger} (A \otimes \id)$ because $A$ is left $\chi$-consistent. Moreover, Observe that $\chi_i \chi_i^{\dagger} = \chi \pi_i \chi^{\dagger}$ and thus $A$ is left $\chi_i$-consistent. It follows that $\pi_i \sigma^{\chi}(A) = \sigma^{\chi_i}(A) \in \stloc_{\LL}(\chi_i)$, which proves that $\pi_i \stloc_{\LL}(\chi) \subseteq \stloc_{\LL}(\chi_i)$.

It remains to show the reverse inclusion. By Proposition \ref{prop: kernels are blocks}, we know that there exists an orthogonal projector $\mu$ (sum of atomic projectors of $\cz(\cons_{\LL}(\chi))$) such that $\sigma^{\chi}_{|\mu \cons_{\LL}(\chi)} : \mu \cons_{\LL}(\chi) \longrightarrow \stloc_{\LL}(\chi)$ is an isomorphism of von Neumann algebras. In particular it is such that the $\tilde{\pi}_i  = (\sigma^{\chi}_{|\mu \cons_{\LL}(\chi)})^{-1}\pi_i$ are atomic projectors of $\cz(\cons_{\LL}(\chi))$ (they're precisely the ones that are not killed by $\sigma^{\chi}$). Let $A \in \cons_{\LL}(\chi_i)$, i.e. such that $ (A \otimes \id) \chi \pi_i \chi^{\dagger} = \chi \pi_i \chi^{\dagger} (A \otimes \id)$. By definition of the $\tilde{\pi}_i$, this implies that $(A \otimes \id) \chi \chi^{\dagger} (\tilde{\pi}_i \otimes \id) \chi \chi^{\dagger} =  \chi \chi^{\dagger} (\tilde{\pi}_i \otimes \id) \chi \chi^{\dagger} (A \otimes \id)$ and then that $(\tilde{\pi}_i A \otimes \id) \chi \chi^{\dagger} =  \chi \chi^{\dagger} (\tilde{\pi}_i A \otimes \id)$ meaning that $\tilde{\pi}_i A$ is left $\chi$-consistent. It follows that $ \sigma^{\chi_i}(A) = \chi_i^{\dagger} (A \otimes \id) \chi_i = \pi_i \chi^{\dagger} (A \otimes \id) \chi \pi_i = \pi_i \chi^{\dagger}  (\tilde{\pi}_i A \otimes \id) \chi = \pi_i \sigma^{\chi}(\tilde{\pi}_i A) \in \pi_i \stloc_{\LL}(\chi)$ proving the fourth point.

Let us finally prove the last point. By definition the $\tilde{\pi_i}$ are $\chi$-consistent on-site representatives of the $\pi_i \in \stloc_{\LL}(\chi)$, which yields that for all $i$, $\chi_i = \chi \pi_i = (\tilde{\pi_i} \otimes \id) \chi$. Then, as we have seen, the $\tilde{\pi_i}$ are atomic projectors of the algebra $\cons_{\LL}(\chi)$ and are thus pairwise orthogonal projectors. Taking $\ch_{\LL}^{i} = \Im(\tilde{\pi_i})$ concludes the proof. 

\end{proof}

Observe that a similar \lq \lq decomposition\rq \rq can be done with respect to $\stloc_{\R}(\chi)$ and that it will in general be different from the first one, because the two algebras have different centers which thus have a priori different atomic projectors. However, note that when $\chi$ is balanced, i.e.\ when $\stloc_{\LL}(\chi) = \stloc_{\R}(\chi)'$, the two algebras share the same center  and thus the two decompositions are the same. This allows us to infer the shape of general balanced splitting maps from the ones whose algebras of strictly local operators are factor. Let us thus focus on the latter ones.

\begin{lemma}\label{shape of good split factor}
Let $\chi \in \balanced(\ch)$ and suppose that $\ca = \stloc_{\LL}(\chi)$ is a factor. Then there exists an orthonormal basis $\{ \ket{lr} \}$ of $\ch$ and families of orthonormal elements $\{ \ket{l+m} \}$ in $\chlc$ and $\{ \ket{r+m} \}$ in $\chrc$ such that $\ket{\phi_{lr}} = \chi(\ket{lr}) = \sum_m \lambda_m \ket{l+m}\ket{r+m}$.
\end{lemma}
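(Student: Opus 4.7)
\emph{Proof sketch.} The plan is to apply Artin--Wedderburn twice---once to the factor $\ca$ itself, and once to the algebra $\cons_\LL(\chi)$ of consistent operators---and then read off the claimed Schmidt form using strict locality. First, since $\ca = \stloc_\LL(\chi)$ is factor and $\chi$ is balanced, Theorem~\ref{AW} yields a unitary identification $\ch \cong \ch_\LL \otimes \ch_\R$ under which $\ca = \cl(\ch_\LL) \otimes \id_{\ch_\R}$ and $\ca' = \stloc_\R(\chi) = \id_{\ch_\LL} \otimes \cl(\ch_\R)$. Fix orthonormal bases $\{\ket{l}\}$ of $\ch_\LL$ and $\{\ket{r}\}$ of $\ch_\R$ and set $\ket{lr} := \ket{l} \otimes \ket{r}$.

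Applying Artin--Wedderburn to $\cons_\LL(\chi) \subseteq \cl(\chlc)$ next yields $\chlc \cong \bigoplus_i \ch_\alpha^i \otimes \ch_\beta^i$ with $\cons_\LL(\chi) \cong \bigoplus_i \cl(\ch_\alpha^i) \otimes \id_{\ch_\beta^i}$. The restriction $\sigma^\chi|_{\cons_\LL(\chi)}$ is a $*$-algebra homomorphism whose image is the factor $\ca$, so its kernel is a sum of summands and exactly one summand $i_0$ survives; the resulting isomorphism $\cl(\ch_\alpha^{i_0}) \otimes \id_{\ch_\beta^{i_0}} \xrightarrow{\sim} \cl(\ch_\LL)$ forces $\ch_\alpha^{i_0} \cong \ch_\LL$, and by Skolem--Noether I may absorb the induced automorphism of $\cl(\ch_\LL)$ into the choice of basis. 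Writing $\ch_\beta := \ch_\beta^{i_0}$ and $\chlc = (\ch_\LL \otimes \ch_\beta) \oplus \ch_{\perp,\LL}$, the projector $P_{\perp,\LL}$ lies in $\ker \sigma^\chi$, so $\chi^\dagger (P_{\perp,\LL} \otimes \id)\chi = 0$; by positivity this gives $(P_{\perp,\LL} \otimes \id)\chi = 0$, hence $\Im(\chi) \subseteq (\ch_\LL \otimes \ch_\beta) \otimes \chrc$. The symmetric argument on the right produces $\chrc = (\ch_\R \otimes \ch_\gamma) \oplus \ch_{\perp,\R}$ and altogether $\Im(\chi) \subseteq (\ch_\LL \otimes \ch_\beta) \otimes (\ch_\R \otimes \ch_\gamma)$.

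With this inclusion, I expand $\chi\ket{lr} = \sum_{l',r',b,c} c^{lr}_{l'r'bc} \ket{l'}\ket{b}\ket{r'}\ket{c}$ and apply strict locality for $\ketbra{l'}{l} \otimes \id_{\ch_\R}$ and $\id_{\ch_\LL} \otimes \ketbra{r'}{r}$, whose on-site representatives on the good summands are $\ketbra{l'}{l} \otimes \id_{\ch_\beta}$ and $\ketbra{r'}{r} \otimes \id_{\ch_\gamma}$ respectively. Matching coefficients in both resulting identities forces $c^{lr}_{l'r'bc} = \delta_{ll'}\delta_{rr'} f_{bc}$ for some $f_{bc}$ independent of $l$ and $r$, so $\chi\ket{lr} = \ket{l}\ket{r} \otimes F$ with $F := \sum_{bc} f_{bc}\ket{b}\ket{c} \in \ch_\beta \otimes \ch_\gamma$ of unit norm (since $\chi$ is an isometry). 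Schmidt-decomposing $F = \sum_m \lambda_m \ket{\beta_m}\ket{\gamma_m}$ and setting $\ket{l+m} := \ket{l} \otimes \ket{\beta_m}$, $\ket{r+m} := \ket{r} \otimes \ket{\gamma_m}$ immediately delivers the two orthonormal families together with $\chi\ket{lr} = \sum_m \lambda_m \ket{l+m}\ket{r+m}$.

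The main delicacy is the structural step of the second paragraph: isolating the unique tensor-factor summand of $\chlc$ through which $\chi$ factors. The factor hypothesis on $\ca$ is exactly what forces a single Artin--Wedderburn summand to survive $\sigma^\chi$, while the balanced hypothesis is what lets the identical argument run on $\chrc$; without either, the clean tensor form extracted in the final step would fail.
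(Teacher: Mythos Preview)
Your proof is correct and takes a cleaner, more structural route than the paper's. The paper does \emph{not} apply Artin--Wedderburn to $\cons_\LL(\chi)$; instead it Schmidt-decomposes $\chi\ket{00}$ and $\chi\ket{10}$ separately, lifts the rank-one operators $\ketbra{0}{0}\otimes\id$, $\ketbra{1}{1}\otimes\id$, $\ketbra{1}{0}\otimes\id$ through the isomorphism $(\sigma^\chi|_{\mu\cons_\LL(\chi)})^{-1}$ of Proposition~\ref{prop: kernels are blocks}, and uses the inherited relations ($\tilde C^\dagger\tilde C=\tilde A$, $\tilde C\tilde C^\dagger=\tilde B$, etc.) to argue by hand that the various Schmidt data match up. Your second paragraph short-circuits all of this: by Wedderburn-decomposing $\chlc$ with respect to $\cons_\LL(\chi)$ and observing that the factor hypothesis forces exactly one block to survive $\sigma^\chi$, you obtain $\Im(\chi)\subseteq(\ch_\LL\otimes\ch_\beta)\otimes(\ch_\R\otimes\ch_\gamma)$ with the on-site representatives rendered explicit, so that strict locality on both sides immediately yields $\chi\ket{lr}=\ket{l}\ket{r}\otimes F$ for a single $F$ independent of $l,r$. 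This essentially proves the subsequent corollary (the $\chi=(U_\LL\otimes U_\R)\circ(\zeta\otimes\phi)$ form) in the same breath. The paper's argument is more elementary in that it never names the block structure of $\cons_\LL(\chi)$, whereas yours makes transparent \emph{why} a single Schmidt state governs every $\chi\ket{lr}$; both ultimately rest on the same ingredient, namely the injective restriction of $\sigma^\chi$ furnished by Proposition~\ref{prop: kernels are blocks}.
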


\begin{proof}
First remember that by Theorem \ref{AW} there exists a unitary isomorphism $\ch \cong \bigoplus_i \ch_{\LL}^{i} \otimes  \ch_{\R}^{i}$ such that $\ca \cong \bigoplus_i \cl(\ch_{\LL}^{i}) \otimes  \id_{\ch_{\R}^{i}}$ (and symmetrically for $\ca'$). Because $\ca$ is a factor, this sum has a unique term $\ch \cong \ch_{\LL} \otimes \ch_{\R}$ and $\ca \cong \cl(\ch_{\LL}) \otimes  \id_{\ch_{\R}}$. Let's make a choice of orthonormal basis $\{ \ket{l }\}$ of $\ch_{\LL}$ and $\{ \ket{r }\}$ of $\ch_{\R}$; for the rest of the proof we will identify $\ch$ with $\ch_{\LL} \otimes \ch_{\R}$ equipped with the orthonormal basis $\{ \ket{l} \otimes \ket{r} \}$. Then by Proposition \ref{prop: kernels are blocks} there exists an orthogonal projector $\mu \in \cons_{\LL}(\chi)$ such that $\sigc_{|\mu(\cons_{\LL}(\chi))} : \mu(\cons_{\LL}(\chi)) \longrightarrow \stloc(\chi)$ is an isomorphism of von Neumann algebras.

For all $A\in \ca$ we define $\tilde{A} = (\sigc_{|\mu(\cons_{\LL}(\chi))})^{-1} (A) \in  \mu(\cons_{\LL}(\chi))$. Consider $\ket{\phi_{00}} = \chi(\ket{00})$; by Hilbert-Schmidt decomposition it can be decomposed as $\ket{\phi_{00}}= \sum_i \lambda^{00}_i \ket{i_{\LL}} \ket{i_{\R}}$ where $\{ \ket{i_{\LL}} \}_{i  \in I}$ is an orthonormal family of $\chlc$ and $\{ \ket{i_{\R}} \}_{i  \in I}$ is one of $\chrc$. Similarly $\ket{\phi_{10}} = \chi(\ket{10})$ can be decomposed as $\ket{\phi_{10}} = \sum_j \lambda^{10}_j \ket{j_{\LL}} \ket{j_{\R}}$. Let us now define $A = \ketbra{0}{0} \otimes \id_{\ch_{\R}}$ and $B = \ketbra{1}{1} \otimes \id_{\ch_{\R}}$, which are  two orthogonal and pairwise orthogonal projectors of $\cl(\ch)$. Because $\sigc_{|\mu(\cons_{\LL}(\chi))}$ is an isomorphism of  von Neumann algebras, $\tilde{A}$ and $\tilde{B}$ are also orthogonal and pairwise orthogonal projectors of $\cons_{\LL}(\chi)$ such that $(\tilde{A}\otimes \id) \ket{\phi_{00}} = \ket{\phi_{00}}$ and $ (\tilde{B} \otimes \id)\ket{\phi_{10}} = \ket{\phi_{10}}$, i.e. $\ket{\phi_{00}} = \sum_i \lambda^{00}_i (\tilde{A} \ket{i_{\LL}}) \ket{i_{\R}}$ and $\ket{\phi_{10}} = \sum_j \lambda^{10}_i (\tilde{B} \ket{j_{\LL}}) \ket{j_{\R}}$. Then because the $\{ \ket{i_{\R}} \}$ and $\{ \ket{j_{\R}} \}$ are families of orthonormal vectors, we obtain that for all $i$, $\ket{i_{\LL}} \in \Im(\tilde{A})$, and that for all $j$, $\ket{j_{\LL}} \in \Im(\tilde{B})$. As the two projectors are orthogonal we deduce that for all $i,j$, $\braket{i_{\LL}}{j_{\LL}} = 0$.

Now consider $C = \ketbra{1}{0} \otimes \id \in \ca$; its counterpart in $\mu(\cons_{\LL}(\chi))$, $\tilde{C}$, satisfies the same properties: $\tilde{C}^{\dagger} \tilde{C} = \tilde{A}$,  $\tilde{C}\tilde{C}^{\dagger} = \tilde{B}$ and $ \tilde{C}\ket{\phi_{00}} =\ket{\phi_{10}}$. The two first properties imply that $\tilde{C}$ acts unitarily between $\Im(\tilde{A})$ and $\Im(\tilde{B})$. The third one tells us that $\ket{\phi_{10}} = \sum_i \lambda_i^{00} (\tilde{C}\ket{i_{\LL}}) \ket{i_{\R}}$ and then by unitarity that the $ \tilde{C}\ket{i_{\LL}}$ form an orthonormal family. It follows that we can rewrite it $\ket{\phi_{10}} = \sum_i \lambda_i^{00} \ket{(i+1)_{\LL}} \ket{i_{\R}}$. This reasoning generalizes to any choice of basis element on the left and on the right symmetrically, allowing us to write for all $l$ and $r$, $\ket{\phi_{lr}} = \sum_m \lambda_m \ket{l+m}\ket{r+m}$.
\end{proof}

This proof can be understood as using the fact that $\chi : \ch_{\LL} \otimes \ch_{\R} \longrightarrow \chlc \otimes \chrc$ is a non-signalling perfect channel. The result of this lemma may seem a bit abstract but it can be stated more concretely when relating the shape of such splitting maps to canonical ones, as done in the next corollary.

\begin{corollary}
Let $\chi \in \balanced(\ch)$ and suppose that $\ca = \stloc_{\LL}(\chi)$ is factor, then there exists $\zeta : \ch \longrightarrow \ch_{\LL} \otimes \ch_{\R}$ a canonical splitting map for $\ca$, a shared (entangled) state $\phi \in \ch_{\MM} = \ch_{{\MM}_{\LL}} \otimes \ch_{{\MM}_{\R}}$ and isometries $U_{\LL} : \ch_{\LL} \otimes \ch_{{\MM}_{\LL}} \longrightarrow \chlc$ and $U_{\R} :  \ch_{{\MM}_{\R}} \otimes \ch_{\R} \longrightarrow \chrc$ such that 

\begin{equation}
  \tikzfig{figures/factorChiDecomposition}
\end{equation}

\end{corollary}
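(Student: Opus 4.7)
The plan is to unpack the Schmidt-style decomposition supplied by Lemma \ref{shape of good split factor} and repackage its data into the compositional form the corollary asserts. Because $\ca$ is factor, Artin--Wedderburn gives a unitary identification $\ch \cong \ch_{\LL}\otimes\ch_{\R}$ under which $\ca = \cl(\ch_{\LL})\otimes\id_{\ch_{\R}}$, and I would take $\zeta$ to be exactly this identifying unitary, which is then manifestly a canonical splitting map for $\ca$.

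In the basis $\{\ket{l}\otimes\ket{r}\}$ fixed by $\zeta$, the lemma produces coefficients $\lambda_m$ together with orthonormal families $\{\ket{l+m}\}_{l,m}\subset\chlc$ and $\{\ket{r+m}\}_{r,m}\subset\chrc$ such that
\[
\chi(\ket{l}\otimes\ket{r}) = \sum_m \lambda_m \, \ket{l+m}\otimes\ket{r+m} .
\]
I would then introduce auxiliary Hilbert spaces $\ch_{{\MM}_{\LL}}$ and $\ch_{{\MM}_{\R}}$, each of dimension equal to the range of the Schmidt index $m$, equipped with orthonormal bases $\{\ket{m_{\LL}}\}$ and $\{\ket{m_{\R}}\}$; set
\[
\ket{\phi} = \sum_m \lambda_m \, \ket{m_{\LL}}\otimes\ket{m_{\R}} \in \ch_{{\MM}_{\LL}}\otimes\ch_{{\MM}_{\R}} ,
\]
which is normalised since $\chi$ being an isometry forces $\sum_m |\lambda_m|^2 = 1$; and define $U_{\LL}$ and $U_{\R}$ as the linear extensions of $\ket{l}\otimes\ket{m_{\LL}}\mapsto\ket{l+m}$ and $\ket{m_{\R}}\otimes\ket{r}\mapsto\ket{r+m}$ respectively. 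These are genuine isometries precisely because their target families are orthonormal by the lemma.

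The verification that the diagrammatic composition reproduces $\chi$ is then a one-line check on basis elements: $(U_{\LL}\otimes U_{\R})\circ(\id_{\ch_{\LL}}\otimes\ket{\phi}\otimes\id_{\ch_{\R}})\circ\zeta$ sends $\ket{l}\otimes\ket{r}$ to $\sum_m \lambda_m\,\ket{l+m}\otimes\ket{r+m}$, which coincides with $\chi(\ket{lr})$ by the lemma. I do not expect any serious obstacle here, since the real structural work has already been discharged in the lemma and the corollary's content is essentially to recast its conclusion as an explicit compositional diagram. The only subtlety worth flagging is that this procedure implicitly uses the balanced hypothesis (already consumed by the lemma) to ensure that a single Schmidt index $m$ coherently parametrises both mediator spaces at once; for an unbalanced $\chi$, the left and right block structures would generally fail to align in this way, and no such simultaneous decomposition would exist.
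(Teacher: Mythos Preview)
Your proposal is correct and follows essentially the same approach as the paper: both invoke Lemma~\ref{shape of good split factor}, take $\zeta$ to be the canonical splitting map $\ket{lr}\mapsto\ket{l}\otimes\ket{r}$, set $\phi=\sum_m\lambda_m\ket{m}\ket{m}$, and define $U_{\LL},U_{\R}$ by $\ket{l}\otimes\ket{m}\mapsto\ket{l+m}$ and $\ket{m}\otimes\ket{r}\mapsto\ket{r+m}$. Your version is somewhat more explicit about why $U_{\LL},U_{\R}$ are isometries and why $\phi$ is normalised, but the underlying argument is identical.
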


\begin{proof}
Using Lemma \ref{shape of good split factor}, the result is immediate by taking $\zeta$ the canonical splitting map associated to the unitary $U : \ket{lr} \longrightarrow \ket{l} \otimes \ket{r} \in \ch_{\LL} \otimes \ch_{\R}$ (that appears in the proof), $\phi = \sum_m \lambda_m \ket{m}\ket{m}$, $U_{\LL} : \ket{l} \otimes \ket{m} \longrightarrow \ket{l+m}$ and $U_{\R} : \ket{m} \otimes \ket{r} \longrightarrow \ket{r+m}$.
\end{proof}

Finally let us come back to general balanced splitting maps.

\begin{corollary}\label{shape of good split}
Let $\chi \in \balanced(\ch)$ and $\ca = \stloc(\chi)$. Let $\{\pi_i\}$ be the set of atomic projectors of $\cz(\ca)$, then there exists a family of canonical splitting maps for $\pi_i \ca$, $\zeta_i : \pi_i \ch \longrightarrow \ch_{{\LL}_i}\otimes \ch_{{\R}_i}$, a family of entangled states $\phi_i \in \ch_{{\MM}_i} = \ch_{{\MM}_{{\LL}_i}} \otimes \ch_{{\MM}_{{\R}_i}}$ and families of isometries $U_{{\LL}_i} : \ch_{{\LL}_i} \otimes \ch_{{\MM}_ {{\LL}_i}} \longrightarrow \chlc$ and $U_{{\R}_i} :  \ch_{{\MM}_{{\R}_i}} \otimes \ch_{{\R}_i} \longrightarrow \chrc$ such that 

\begin{equation}
  \tikzfig{figures/chiDecomposition}
\end{equation}

\end{corollary}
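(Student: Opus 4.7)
The plan is to reduce the general balanced case to the factor balanced case already handled by the previous corollary, using Lemma \ref{decomposition of split} to slice $\chi$ along the atomic projectors of its strictly local algebra. First I would let $\{\pi_i\}$ be the atomic projectors of $\cz(\ca) = \cz(\stloc_{\LL}(\chi))$. Because $\chi$ is balanced, $\stloc_{\R}(\chi) = \stloc_{\LL}(\chi)'$, so $\cz(\stloc_{\R}(\chi)) = \cz(\stloc_{\LL}(\chi))$ and the atomic projectors coincide on both sides. Then I would set $\ch_i := \pi_i \ch$ and $\chi_i := \chi\pi_i$, so that $\ch = \bigoplus_i \ch_i$, $\chi = \sum_i \chi_i$, and each $(\chi_i)_{|\ch_i}$ is a splitting map on $\ch_i$ by Lemma \ref{decomposition of split}.

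Next, I would verify that each $(\chi_i)_{|\ch_i}$ is itself balanced with $\stloc_{\LL}((\chi_i)_{|\ch_i}) = \pi_i\stloc_{\LL}(\chi) = \pi_i \ca$, which is factor by Theorem \ref{blocks are factor}. The left identification is precisely the fourth bullet of Lemma \ref{decomposition of split}; applying the analogous decomposition with respect to $\stloc_{\R}(\chi)$ (allowed because the right-hand atomic projectors are the same $\pi_i$) yields $\stloc_{\R}((\chi_i)_{|\ch_i}) = \pi_i\stloc_{\R}(\chi) = \pi_i\ca' = (\pi_i \ca)'$ within $\cl(\ch_i)$, so $(\chi_i)_{|\ch_i}$ is balanced.

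Now I would apply the previous corollary (the factor-case decomposition) to each $(\chi_i)_{|\ch_i}$, obtaining a canonical splitting map $\zeta_i : \ch_i \rightarrow \ch_{\LL_i}\otimes\ch_{\R_i}$ for $\pi_i\ca$, a state $\phi_i \in \ch_{\MM_i} = \ch_{\MM_{\LL_i}}\otimes\ch_{\MM_{\R_i}}$, and isometries $U_{\LL_i}, U_{\R_i}$ witnessing the required diagram locally on $\ch_i$. The last bullet of Lemma \ref{decomposition of split} gives pairwise orthogonal subspaces $\ch_{\LL}^i \subseteq \chlc$ with $\Im(\chi_i) \subseteq \ch_{\LL}^i \otimes \chrc$; the symmetric statement on the right provides pairwise orthogonal $\ch_{\R}^i \subseteq \chrc$. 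These orthogonalities are what allow me to compose the local $U_{\LL_i}$ and $U_{\R_i}$ into well-defined isometries into $\chlc$ and $\chrc$ respectively, and to sum the per-block diagrams along the decomposition $\ch = \bigoplus_i \ch_i$ to obtain the claimed global diagrammatic form for $\chi$.

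The main obstacle I expect is the bookkeeping in step two: ensuring that restricting to $\ch_i$ really produces a balanced splitting map, which hinges on the fact that balancedness of $\chi$ forces both algebras to share the same atomic projectors, so that the ``symmetric decomposition'' advertised after Lemma \ref{decomposition of split} can be invoked to pin down $\stloc_{\R}((\chi_i)_{|\ch_i})$ as the commutant of $\stloc_{\LL}((\chi_i)_{|\ch_i})$ inside $\cl(\ch_i)$. Once that is in hand, the assembly is largely formal: it amounts to checking that the direct sum of factor-case diagrams, viewed inside the pairwise orthogonal subspaces $\ch_{\LL}^i$ and $\ch_{\R}^i$, reproduces $\chi$.
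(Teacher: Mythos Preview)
Your proposal is correct and follows essentially the same route as the paper: decompose $\chi$ via Lemma~\ref{decomposition of split}, use balancedness to argue that both sides share the same atomic projectors so that each $(\chi_i)_{|\ch_i}$ is balanced with factor strictly local algebra, apply the factor-case result to each block, and sum. Your discussion of the orthogonality of the $\ch_{\LL}^i$ and $\ch_{\R}^i$ needed to assemble the per-block isometries is a detail the paper leaves implicit in ``summing them concludes the proof''.
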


\begin{proof}
By Lemma \ref{decomposition of split} there exists a family of splitting maps $\chi_i$ such that $\chi = \sum_i (\chi_i)_{|\ch_i}$. Because $\stloc_{\LL}(\chi) = \stloc_{\R}(\chi)'$, both algebras have the same centre and thus the decomposition of $\chi$ for the left algebra is also the decomposition of $\chi$ for the right algebra which implies that $\stloc_{\LL}((\chi_i)_{|\ch_i}) = \pi_i \stloc_{\LL}(\chi) = \pi_i \stloc_{\R}(\chi)' =  \pi_i (\pi_i \stloc_{\R}(\chi))' = \pi_i \stloc_{\R}((\chi_i)_{|\ch_i})' = \stloc_{\R}((\chi_i)_{|\ch_i})' $ with $\stloc_{\LL}((\chi_i)_{|\ch_i})$ and $\stloc_{\R}((\chi_i)_{|\ch_i})$ factor. Then Lemma \ref{shape of good split factor} gives us, for all $i$, the existence of the $\zeta_i $, $\phi_i$, $U_{{\LL}_i}$, $U_{{\R}_i}$ and summing them concludes the proof.
\end{proof}

Observe that the balanced splitting maps have this property that we had witnessed with the examples $\chi_{\otimes}$ and $\chi_{\oplus}$ in section \ref{section3}: there is no difference between their local and strictly local operators. Indeed, let $\chi \in \balanced(\ch)$, we have that $\stloc_{\LL}(\chi) \subseteq \loc_{\LL}(\chi) \subseteq \stloc_{\R}(\chi)' = \stloc_{\LL}(\chi)$ with the inclusions being true in general and the equality being the definition of $\chi$ being balanced. This proves that $\loc_{\LL}(\chi) = \stloc_{\LL}(\chi)$ and a similar equality for the right local operators can be proved symmetrically.

Before proving the equivalence between balanced splitting maps and von Neumann algebras, let us focus briefly on lean splitting maps and show that the algebraic requirement of being lean makes the shared states $\phi_i$ in Corollary \ref{shape of good split} separable; meaning that lean splitting maps are none other than canonical splitting maps followed by some local isometries.

\begin{proposition}\label{decomposition of lean}
Let $\chi \in lean(\ch)$ and $\ca = \stloc(\chi)$. Then, there exists a canonical splitting map $\zeta$ for $\ca$ and isometries $U_{\LL}$ and $U_{\R}$ such that $\chi = (U_{\LL} \otimes U_{\R})\zeta$, i.e.

\begin{equation}
  \tikzfig{figures/chiDecomposition2}
\end{equation}

\end{proposition}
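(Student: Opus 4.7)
The plan is to start from the block decomposition of $\chi$ furnished by Corollary \ref{shape of good split} (which applies since every lean splitting map is balanced), and then use the lean condition to force the ancillary factors $\ch_{\MM_{\LL_i}}$ and $\ch_{\MM_{\R_i}}$ carrying the shared states $\phi_i$ to be trivial. Concretely, one has $\chi = \sum_i (U_{\LL_i} \otimes U_{\R_i})(\id_{\ch_{\LL_i}} \otimes \phi_i \otimes \id_{\ch_{\R_i}}) \zeta_i \pi_i$, with $\zeta_i$ a canonical splitting of the factor block $\pi_i \ca$. As a preliminary step, absorbing the orthogonal complements of the supports of the reduced density matrices of $\phi_i$ into the $U_{\LL_i}$ and $U_{\R_i}$ lets us assume that each $\phi_i$ has full Schmidt rank on both $\ch_{\MM_{\LL_i}}$ and $\ch_{\MM_{\R_i}}$.

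The crucial step is then to compute $\cons_{\LL}(\chi)$ block by block. The images $\ch_{\LL}^i := \Im(U_{\LL_i})$ are pairwise orthogonal in $\chlc$ by Lemma \ref{decomposition of split}, so $\pi^{\chi} = \sum_i \pi^{\chi_i}$; and under the identification $\ch_{\LL}^{i} \otimes \ch_{\R}^{i} \cong \ch_{\LL_i} \otimes \ch_{\MM_{\LL_i}} \otimes \ch_{\MM_{\R_i}} \otimes \ch_{\R_i}$, we have $\pi^{\chi_i} = \id_{\ch_{\LL_i}} \otimes \ketbra{\phi_i}{\phi_i} \otimes \id_{\ch_{\R_i}}$. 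Writing the consistency equation $(B \otimes \id)\pi^{\chi} = \pi^{\chi}(B \otimes \id)$ in this block form, full Schmidt rank of $\phi_i$ on $\ch_{\MM_{\LL_i}}$ forces $(i)$ all off-diagonal components of $B$ between distinct $\ch_{\LL}^i$ and $\ch_{\LL}^j$ to vanish, and $(ii)$ each diagonal block $P_i B P_i$ to take the product form $C_i \otimes \id_{\ch_{\MM_{\LL_i}}}$ with $C_i \in \cl(\ch_{\LL_i})$. Hence $\cons_{\LL}(\chi) = \bigoplus_i \cl(\ch_{\LL_i}) \otimes \id_{\ch_{\MM_{\LL_i}}}$, from which one computes $\cons_{\LL}(\chi)' = \bigoplus_i \id_{\ch_{\LL_i}} \otimes \cl(\ch_{\MM_{\LL_i}})$ and $\cz(\cons_{\LL}(\chi)) = \bigoplus_i \CC\id_{\ch_{\LL_i}\otimes\ch_{\MM_{\LL_i}}}$; the lean hypothesis $\cons_{\LL}(\chi)' = \cz(\cons_{\LL}(\chi))$ then forces $\dim \ch_{\MM_{\LL_i}} = 1$ for all $i$, and the symmetric right-handed argument gives $\dim \ch_{\MM_{\R_i}} = 1$.

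With the ancillas trivial, each $\phi_i$ reduces to a unit scalar which may be absorbed into $U_{\LL_i}$, and the decomposition collapses to $\chi = \sum_i (V_{\LL_i} \otimes V_{\R_i}) \zeta_i \pi_i$ for isometries $V_{\LL_i} : \ch_{\LL_i} \to \chlc$ and $V_{\R_i} : \ch_{\R_i} \to \chrc$ whose images remain pairwise orthogonal. Setting $U_{\LL} := \bigoplus_i V_{\LL_i}$, $U_{\R} := \bigoplus_i V_{\R_i}$, and letting $\zeta : \ch \to (\bigoplus_i \ch_{\LL_i}) \otimes (\bigoplus_i \ch_{\R_i})$ be the canonical splitting of $\ca$ built from the $\zeta_i \pi_i$, the orthogonality of the images makes $U_{\LL}$ and $U_{\R}$ genuine isometries, and one verifies directly that $\chi = (U_{\LL} \otimes U_{\R}) \zeta$. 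The main obstacle I anticipate is the block-by-block calculation of $\cons_{\LL}(\chi)$, and in particular the use of full Schmidt rank of $\phi_i$ to eliminate cross-block consistent operators: the full-rank reduction carried out at the outset is not a cosmetic simplification but a genuine ingredient in that step.
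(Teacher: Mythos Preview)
Your approach is essentially the paper's: both reduce to the block decomposition of Corollary~\ref{shape of good split}, analyse $\cons_{\LL}(\chi)$ block by block, and use the lean hypothesis to force each ancillary factor $\ch_{\MM_{\LL_i}}$ (and symmetrically $\ch_{\MM_{\R_i}}$) to be one-dimensional. The only tactical difference is that the paper argues by contradiction---exhibiting, when some $\phi_i$ has Schmidt rank $\geq 2$, an explicit operator $U_{\LL_i}(\id \otimes \ketbra{m_0}{m_1})U_{\LL_i}^{\dagger}$ lying in $\cons_{\LL}(\chi)'$ but not in $\cons_{\LL}(\chi)$---whereas you compute $\cons_{\LL}(\chi)$, its commutant and its centre outright. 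Your preliminary full-Schmidt-rank reduction plays the role that the paper's choice of the basis $\{\ket{l+m}\}$ from Lemma~\ref{shape of good split factor} plays there.

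There is one small gap. Your formula $\cons_{\LL}(\chi) = \bigoplus_i \cl(\ch_{\LL_i}) \otimes \id_{\ch_{\MM_{\LL_i}}}$ tacitly assumes that the images $\ch_{\LL}^{i}=\Im(U_{\LL_i})$ span all of $\chlc$; in general there is a residual orthogonal complement $\ch_{\LL}^{\perp}=\chlc \ominus \bigoplus_i \ch_{\LL}^{i}$, and any operator supported on $\ch_{\LL}^{\perp}$ is automatically $\chi$-consistent (since $\Im(\chi)\subseteq(\bigoplus_i\ch_{\LL}^{i})\otimes\chrc$). The correct statement is $\cons_{\LL}(\chi)=\big(\bigoplus_i \cl(\ch_{\LL_i}) \otimes \id_{\ch_{\MM_{\LL_i}}}\big)\oplus\cl(\ch_{\LL}^{\perp})$; the paper records this extra summand explicitly as the ``$\tilde B$'' piece. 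Fortunately this does not affect your conclusion: the extra block contributes $\CC\,\id_{\ch_{\LL}^{\perp}}$ to both the commutant and the centre, so the lean condition is vacuous there and you still obtain $\dim\ch_{\MM_{\LL_i}}=1$ on every nontrivial block.
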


\begin{proof}
Let $\chi \in \lean(\ch)$ and $\ca = \stloc(\chi)$. We know, by Lemma \ref{decomposition of split}, that we can decompose $\ch = \bigoplus_i \ch_i = \bigoplus \pi_i \ch$ and $\chi = \sum_i \chi_i = \sum_i \chi \pi_i$; and that there exist families $\{\ch_{\LL}^{i}\}$ of pairwise orthogonal subspaces of $\chlc$, and $\{\ch_{\R}^{i}\}$ of pairwise orthogonal subspaces of $\chrc$ such that $\Im(\chi_i) \subseteq \ch_{\LL}^{i} \otimes \ch_{\R}^{i}$. We also know that, for all $i$, $\ker(\chi_i) = (\ch_i)^{\perp}$ and that, by Lemma \ref{shape of good split factor}, $(\chi_i)_{|\ch_i} : \ket{lr} = \sum_m \lambda_m \ket{l+m}\ket{r+m} \in \ch_{\LL}^{i} \otimes \ch_{\R}^{i}$. Observe that, because $\{ \ch_i\}$, $\{ \ch_{\LL}^{i} \}_i$ and $\{ \ch_{\R}^{i} \}_i$ are all families of pairwise orthogonal subspaces, commuting with $\chi \chi^{\dagger}$ is precisely equivalent to commuting with each of the $\chi_i \chi_i^{\dagger}$. It follows that $\cons_{\LL}(\chi) = \bigcap_i \cons_{\LL}((\chi_i)_{|\ch_i})$. Let $A \in \cons_{\LL}((\chi_i)_{|\ch_i})$, which we know by Lemma \ref{stable consistent} to be the algebra of operators $A \in \chlc$ such that $\Im(\chi_i)$ and $\Im(\chi_i)^{\perp}$ are invariant subspaces for $A \otimes \id$. One can check that $A$ is of the form $U_{{\LL}_i} (\tilde{A} \otimes \id) U_{{\LL}_i}^{\dagger} \oplus \tilde{B}$ where $U_{{\LL}_i}$ is the isometry (defined in Corollary \ref{shape of good split}) such that $U_{{\LL}_i} : \ket{l}\ket{m} \rightarrow \ket{l+m}$ and $\tilde{B}$ an operator in  $\cl( \Span(\{ \ket{l+m} \}) ^{\perp})$. Conversely, one can also check that any $A$ of this form is $\chi_i$-consistent.

Let us now suppose that there exists $i$ such that $\phi_i$ (also defined in Corollary \ref{shape of good split}) is not separable, or equivalently such that the corresponding decomposition $(\chi_i)_{|\ch_i} : \ket{lr} = \sum_m \lambda_m \ket{l+m}\ket{r+m}$ has more than one term. Let us call $m_0$ and $m_1$ the two first indices and consider the non-trivial map $C = U_{{\LL}_i} (\id \otimes \ketbra{m_0}{m_1}) U_{{\LL}_i}^{\dagger}$. This operator commutes by construction with every $\chi_i$-consistent operator. It follows that $C \in \cons_{\LL}((\chi_i)_{|\ch_i})' \subseteq \cons_{\LL}(\chi)'$. However $C \notin \cons_{\LL}((\chi_i)_{|\ch_i})$  and thus $C \notin \cons_{\LL}(\chi)$, which contradicts the assumption of $\chi$ being lean. It follows that for all $i$, $\phi_i$ is separable or equivalently that $(\chi_i)_{|\ch_i} : \ket{lr} \rightarrow \ket{l+0} \ket{r+0}$. Defining $\zeta : \ket{lr} \rightarrow \ket{l} \otimes \ket{r}$, $U_{\LL} : \ket{l} \rightarrow \ket{l+0}$ and $U_{\R} : \ket{r} \rightarrow \ket{r+0}$ provides the desired decomposition.
\end{proof}

We will now express our equivalence theorem between subalgebras of $\cl(\ch)$ and their inclusions, and splitting maps on $\ch$ and their comprehension.

\begin{theorem}\label{equivalence of preorders}
The class of balanced splitting maps is the biggest class of splitting maps making $F/G$ an equivalence of preorders.
\end{theorem}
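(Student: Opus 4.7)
The plan is to establish two things: first, that the pair $(F, G)$ restricted to $\vnalg(\ch)$ and $\balanced(\ch)$ yields an equivalence of preorders; second, that any splitting map belonging to such an equivalence must be balanced, so $\balanced(\ch)$ is maximal.

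For the first point, the half $GF(\ca) = \ca$ follows directly from the proposition characterising canonical splitting maps. The substantive work is to show, for every balanced $\chi$ with $\ca := \stloc(\chi)$ and $\zeta := FG(\chi)$ any canonical splitting map for $\ca$, that both $\zeta \sqsubseteq \chi$ and $\chi \sqsubseteq \zeta$ hold. I would build the comprehension data explicitly from the structural decomposition of Corollary \ref{shape of good split}: on each atomic block $\pi_i \ch$, $\chi$ decomposes as a canonical $\zeta_i$ followed by tensoring with an entangled state $\phi_i = \sum_m \lambda_m^i \ket{m}\ket{m}$ in Schmidt form and applying the local isometries $U_{{\LL}_i}, U_{{\R}_i}$. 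Setting $\ch_{\MM} := \left(\bigoplus_i \mathbb{C}^{d_i}\right) \oplus \mathbb{C}^N$, with $d_i$ the Schmidt rank of $\phi_i$ and $N$ large, I would define for $\zeta \sqsubseteq \chi$: $\tikzcircle{4pt}(\ket{r}_{{\R}_i}) := \sum_m \lambda_m^i \ket{i,m}_{\ch_{\MM}} \otimes U_{{\R}_i}(\ket{m}\ket{r})$ and $\bigcirc(U_{{\LL}_i}(\ket{l}\ket{m})) := \ket{l} \otimes \ket{i,m}_{\ch_{\MM}}$. A block-by-block computation, using the normalisation $\sum_m |\lambda_m^i|^2 = 1$, will verify both isometry and the required equation $(\id \otimes \tikzcircle{4pt})\zeta = (\bigcirc \otimes \id)\chi$. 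The reverse direction $\chi \sqsubseteq \zeta$ is handled by the symmetric construction.

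The main technical subtlety is that $\bigcirc$ and $\tikzcircle{4pt}$ must extend to genuine isometries on all of $\chlc$ and $\chrc$, not merely partial isometries on $\bigoplus_i \Im(U_{{\LL}_i})$ and $\bigoplus_i \Im(U_{{\R}_i})$. Here the pairwise orthogonality of the subspaces $\ch_{{\LL}}^i$ and $\ch_{{\R}}^i$ provided by the last bullet of Lemma \ref{decomposition of split} ensures the partial definitions are isometric on their natural domains; the residual orthogonal complement can then be mapped isometrically into the auxiliary factor of $\ch_{\MM}$ by taking $N$ sufficiently large. This is precisely why the definition of comprehension leaves $\ch_{\MM}$ unconstrained in size.

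For the second point, Remark \ref{equivalence implies balanced} already spells out the argument: any $\chi$ with $\chi \sqsubseteq FG(\chi) \sqsubseteq \chi$ satisfies $\stloc_{\LL}(\chi) = \stloc_{\R}(\chi)'$, by applying Proposition \ref{comprehension implies inclusion} to both comprehensions and using that $FG(\chi)$ is canonical (so $\stloc_{\LL}(FG(\chi)) = \stloc_{\R}(FG(\chi))'$). Consequently no class of splitting maps strictly larger than $\balanced(\ch)$ can make $F/G$ an equivalence, which together with the first point establishes that $\balanced(\ch)$ is exactly the largest such class.
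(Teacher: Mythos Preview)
Your proposal is correct and follows essentially the same route as the paper: both use the structural decomposition of balanced splitting maps (Lemma~\ref{shape of good split factor}/Corollary~\ref{shape of good split}) to construct the comprehension isometries explicitly, and both invoke Remark~\ref{equivalence implies balanced} for maximality. The only organisational difference is that the paper first reduces to the factor case and then glues blockwise, whereas you work directly with the global decomposition; your explicit handling of the isometry extension via the auxiliary $\mathbb{C}^N$ factor in $\ch_{\MM}$ is in fact more careful than the paper, which leaves this point implicit.
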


\begin{proof}
First we remind that $F$ and $G$ being preorder preserving is respectively the result of Proposition \ref{comprehension implies inclusion} and Proposition \ref{inclusion implies comprehension}.

Let us now prove that these two maps yield an equivalence of preorders. Let $\ca \in \vnalg(\ch)$, $F \circ G(\ca) = \stloc(\chi)$ with $\chi$ a canonical splitting map for $\ca$. It immediately follows that $F \circ G(\ca) = \ca$ and thus that $F \circ G(\ca) \subseteq \ca$ and $\ca \subseteq F \circ G(\ca)$. 

Conversely, let $\chi$ be a balanced splitting map. We then have that $G \circ F(\chi) = G(\stloc(\chi)) = \zeta$ is a canonical splitting map for $\ca = \stloc(\chi)$ and we want to prove that $\zeta \sqsubseteq \chi$ and $\chi \sqsubseteq \zeta$. First, let us Observe that it is sufficient to prove the result in the case where $\ca$ is factor. Indeed, by Lemma \ref{decomposition of split} we know that the $\chi_i = \chi \pi_i$ are such that $\Im(\chi_i) \subseteq \ch_{\LL}^{i} \otimes \ch_{\R}^{i}$ where the $\ch_{\LL}^{i}$ are pairwise orthogonal subspaces of $\chlc$ and the $\ch_{\R}^{i}$ pairwise orthogonal subspaces of $\chrc$. In the same way, we can obtain a decomposition of $\zeta$, as the sum of the $\zeta_i = \zeta \pi_i$ whose images have similar orthogonality properties as the ones of the $\chi_i$. Remark then that, for all $i$, the $(\chi_i)_{|\ch_i = \pi_i \ch}$ and $(\zeta_i)_{|\ch_i}$ are respectively balanced and canonical splitting maps for the factor algebra $\ca_i = \pi_i \ca$. Suppose now that, for all $i$, we have $\chi_i \sqsubseteq \zeta_i$, i.e.\ we have isometries $\bigcirc_i$ and $\tikzcircle{4pt}_i$ such that $(\bigcirc_i \otimes \id)\chi_i = (\id \otimes \tikzcircle{4pt}_i)\zeta_i$. Because the $\ch_{\LL}^{i}$ such that $\Im(\chi_i) \subseteq \ch_{\LL}^{i} \otimes \ch_{\R}^{i}$ are orthogonal (and because the same can be said for the lefty branch of the $\zeta_i$), choosing $\bigcirc = \oplus (\bigcirc_i)_{|\ch_i}$ and similarly for $\tikzcircle{4pt}$, we obtain that $(\bigcirc \otimes \id)\chi = (\id \otimes \tikzcircle{4pt})\zeta$ proving that $\chi \sqsubseteq \zeta$. A symmetric reasoning proves the implication for the reverse comprehension. It remains to show that, for all $i$, $\chi_i \sqsubseteq \zeta_i$ and $\zeta_i  \sqsubseteq \chi_i$, i.e.\ to prove the result in the factor case.

Suppose thus that $\ca$ is factor. By Lemma \ref{shape of good split factor}, there exists a basis $\{ \ket{lr} \}$ orthonormal families $\{ \ket{l+m} \}$ and $\{ \ket{r+m} \}$ such that $\chi(\ket{lr}) = \sum_m \lambda_m \ket{l+m}\ket{r+m}$. By construction of $\{ \ket{lr} \}$, and because $\zeta$ is a canonical splitting map associated to the same algebra $\ca$, we have that $\zeta(\ket{lr}) = \ket{\phi_l} \otimes \ket{\psi_r}$. Let's define $\bigcirc: \ket{l+m} \longrightarrow \ket{\phi_l} \otimes \ket{m}$ and $\tikzcircle{4pt} : \ket{\phi_r} \longrightarrow \sum_m \lambda_m \ket{m} \otimes \ket{m+r}$, we get that $(\bigcirc \otimes \id)\chi(\ket{lr}) = \sum_m \lambda _m \ket{\phi_l} \otimes \ket{m} \otimes \ket{m+r}=  \ket{\phi_l} \otimes ( \sum_m \lambda _m \ket{m} \otimes \ket{m+r}) = (\id \otimes \tikzcircle{4pt})\zeta(\ket{lr})$. A symmetric reasoning shows that $\chi \sqsubseteq \zeta$, proving that $F/G$ is then an equivalence. The fact that is the biggest class making it an equivalence is then the result of Remark \ref{equivalence implies balanced}.
\end{proof}

As a result, we have established a basic equivalence (as preorders) between the inclusion of von Neumann algebras and the comprehension of balanced splitting maps,
\begin{equation}
(\vnalg(\ch), \sqsubseteq) \cong (\balanced(\ch), \subseteq).
\end{equation}

After showing how to relate the algebraic view on quantum subsystems with our diagrammatic view, we have proved that the latter was able to fully capture the former by looking at the subclass of balanced splitting maps that form an equivalent theory to the one of general algebraic quantum subsystems. We will now make use of the diagrammatic approach to develop an understanding of the relationship between causality and localisability for general quantum subsystems. 

\section{Trace and non-signalling}

In the previous sections, we defined splitting maps in the context of pure quantum theory, as isometries splitting the space of states. In this section we will extend, in a very natural way, the notion of splitting map to isometric channels acting on density matrices. We will then define a notion of trace over a splitting map and show how it can be related to definition \ref{trace over an algebra} of the trace over a von Neumann algebra. We will then use these tools to prove a new result of causal decompositions of quantum channels that are non-signalling from an algebra to another.

\subsection{Mixed splitting maps and $\chi$-trace}

In the previous sections of this article, we had taken the point of view of pure quantum theory: the global system that we were trying to split was described by the Hilbert space $\ch$ and was given some additional structure through the splitting map. The question of extending splitting maps to acting on mixed quantum states follows very naturally and it can actually be answered quite simply, by doubling the isometry, just as it would be done for a unitary that we would like to see as a channel \cite{coecke_kissinger_2017}.
\begin{definition}[Mixed splitting map]
Given a pure splitting map $\chi : \ch \rightarrow \chlc \otimes \chrc$, we define the associated {\em mixed splitting map} $\chi : \cl(\ch) \rightarrow \cl( \chlc \otimes \chrc)$ as $\chi (\rho) = \chi \rho \chi^{\dagger}$. Diagrammatically, we will write it as

\begin{equation}
\tikzfig{figures/mixed_1}
\end{equation}
where the right part is an unfolding in pure quantum theory diagrams of the left, which is a mixed theory diagram.
\end{definition}

The natural follow-up is the question of the partial trace over a splitting map. The side of the splitting over which we will trace has to be chosen as convention; we decide to call trace over $\chi$, the trace over the right side of $\chi$, matching the definition of \cite{Arrighi2024quantumnetworks}.

\begin{definition}[$\chi$-trace]
Let $\chi : \ch \rightarrow \chlc \otimes \chrc$. The {\em partial trace over $\chi$}, which we will call {\em $\chi$-trace}, is the map $\Tr_{\chi} : \cl(\ch) \rightarrow \cl(\chlc)$ defined by $\Tr_{\chi}(\cdot) = \Tr_{\chrc}(\chi \cdot \chi^{\dagger})$. Diagrammatically we will write it

\begin{equation}
\tikzfig{figures/mixed_2} \, .
\end{equation}

\end{definition}

In particular, when $\chi$ is a lean splitting map, we obtain that the $\chi$-trace is equal (up to an isometry due to the choice of representation) to the trace (as it is defined in Definition \ref{trace over an algebra}) over the algebra $\stloc_{\R}(\chi)$; this strengthens the idea that splitting maps can capture perfectly the theory of quantum systems as von Neumann algebras.

\begin{theorem}\label{equivalence of traces}
Let $\cb \in \cl(\ch)$ be a von Neumann algebra and let $\chi : \ch \rightarrow \chlc \otimes \chrc$ be a lean splitting map such that $\stloc_{\R}(\chi) = \cb$. Then there exists an isometry $U$ such that $\Tr_{\chi}(\cdot) = U \Tr_{\cb}(\cdot) U^{\dagger}$.
\end{theorem}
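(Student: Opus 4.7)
The plan is to reduce the statement to a calculation for a canonical splitting map by invoking the structure theorem for lean splitting maps, and then verify that tracing the right factor of a canonical splitting map reproduces the algebraic Von Neumann trace of Definition~\ref{trace over an algebra}.

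First, I would use the fact that lean implies balanced to identify $\stloc_{\LL}(\chi) = \stloc_{\R}(\chi)' = \cb'$. Applying Proposition~\ref{decomposition of lean} to $\chi$ then furnishes a canonical splitting map $\zeta : \ch \to (\bigoplus_i \ch_{\LL}^i) \otimes (\bigoplus_i \ch_{\R}^i)$ for $\cb'$, together with isometries $U_{\LL} : \bigoplus_i \ch_{\LL}^i \to \chlc$ and $U_{\R} : \bigoplus_i \ch_{\R}^i \to \chrc$ such that $\chi = (U_{\LL} \otimes U_{\R})\zeta$. The candidate isometry witnessing the theorem will be $U := U_{\LL}$.

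Second, I would peel off the $U_{\LL} \otimes U_{\R}$ layer from the $\chi$-trace. For any $X \in \cl\bigl((\bigoplus_i \ch_{\LL}^i) \otimes (\bigoplus_i \ch_{\R}^i)\bigr)$, the identity $U_{\R}^\dagger U_{\R} = \id$ together with the cyclic property of the partial trace gives
\begin{equation}
\Tr_{\chrc}\bigl((U_{\LL} \otimes U_{\R}) X (U_{\LL} \otimes U_{\R})^\dagger\bigr) \;=\; U_{\LL}\, \Tr_{\bigoplus_i \ch_{\R}^i}(X)\, U_{\LL}^\dagger ,
\end{equation}
a small calculation most easily verified on elementary tensors $X = A \otimes B$. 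Applying this to $X = \zeta \rho \zeta^\dagger$ yields $\Tr_\chi(\rho) = U_{\LL}\, \Tr_\zeta(\rho)\, U_{\LL}^\dagger$.

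Third, I would show $\Tr_\zeta(\rho) = \Tr_\cb(\rho)$ under the canonical block-diagonal inclusion $\bigoplus_i \cl(\ch_{\LL}^i) \hookrightarrow \cl(\bigoplus_i \ch_{\LL}^i)$. By construction of the canonical splitting map (Theorem~\ref{AW} applied to $\cb'$), $\zeta$ is the composition of the representation unitary $V : \ch \to \bigoplus_i \ch_{\LL}^i \otimes \ch_{\R}^i$ for $\cb'$ with the block inclusion into the ambient tensor product; so $\zeta \rho \zeta^\dagger$ is supported on $\bigoplus_i \ch_{\LL}^i \otimes \ch_{\R}^i$, and the A-W projectors $\pi_i' \in \cl\bigl(\bigoplus_j \ch_{\LL}^j \otimes \ch_{\R}^j\bigr)$ onto the blocks pull back through $V$ to the atomic projectors $\pi_i$ of $\cb = (\cb')'$. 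Choosing a basis of $\bigoplus_i \ch_{\R}^i$ adapted to the direct-sum decomposition and tracing block by block gives
\begin{equation}
\Tr_\zeta(\rho) \;=\; \bigoplus_i \Tr_{\ch_{\R}^i}\bigl(\pi_i' (\zeta \rho \zeta^\dagger)\, \pi_i'\bigr) \;=\; \bigoplus_i \Tr_{\ch_{\R}^i}(\pi_i\, \rho\, \pi_i) \;=\; \Tr_\cb(\rho) ,
\end{equation}
where the last equality is exactly Definition~\ref{trace over an algebra}. Combining with step two yields $\Tr_\chi(\rho) = U\, \Tr_\cb(\rho)\, U^\dagger$ as required.

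The main obstacle is bookkeeping rather than conceptual: one must carefully distinguish the codomain $\bigoplus_i \cl(\ch_{\LL}^i)$ of $\Tr_\cb$ from $\cl(\bigoplus_i \ch_{\LL}^i)$ in which $\Tr_\zeta$ lives, and track the implicit use of the representation unitary $V$ so that the identification of the A-W block projectors $\pi_i'$ with the atomic projectors $\pi_i \in \cl(\ch)$ used in Definition~\ref{trace over an algebra} is legitimate. Once those identifications are made explicit, the proof reduces to the elementary partial-trace calculation carried out in step two.
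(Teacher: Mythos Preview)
Your proposal is correct and follows essentially the same two-step structure as the paper's proof: reduce a general lean $\chi$ to a canonical $\zeta$ via Proposition~\ref{decomposition of lean} and peel off the local isometries, then verify the canonical case directly. The paper runs these steps in the opposite order (canonical first, then lean), but the content is the same.

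The one place where the paper is more explicit than you is your step three. You assert $\Tr_\zeta(\rho)=\Tr_\cb(\rho)$, but Definition~\ref{trace over an algebra} fixes $\Tr_\cb$ relative to \emph{some} representation unitary $V$, while the canonical $\zeta$ produced by Proposition~\ref{decomposition of lean} comes from its own representation unitary $\tilde V$; there is no reason these coincide. The paper handles this by noting that any two representation unitaries for the same algebra differ by a block-diagonal $\bigoplus_i(U_{\LL}^i\otimes U_{\R}^i)$, and then computes that this discrepancy contributes an extra unitary $\bigoplus_i U_{\LL}^i$ conjugation. In your write-up this would simply replace your candidate isometry $U_{\LL}$ by $U_{\LL}\cdot(\bigoplus_i U_{\LL}^i)$, so the conclusion is unaffected; you flag this bookkeeping in your final paragraph, but it is worth carrying out rather than leaving implicit.
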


\begin{proof}
Let $\cb \in \vnalg(\ch)$. First, let us prove this result when $\chi$ is canonical. By Theorem \ref{AW}, there exists a representation unitary $V: \ch \rightarrow \bigoplus_i (\ch_{{\LL}}^{i} \otimes \ch_{{\R}}^{i})$ for the pair of algebras $\cb'$ on the left and $\cb$ on the right. In the rest of this proof, we will identify $\ch$ and $ \bigoplus_i (\ch_{{\LL}}^{i} \otimes \ch_{{\R}}^{i})$, as it is only a matter of defining some structure on $\ch$ relative to the algebras $\cb'$ and $\cb$. The trace over $\cb$ is then defined as $\Tr_{\cb}(\cdot) = \bigoplus_i \Tr_{\ch_{\R}^{i}}(\pi_i \cdot \pi_i)$, where the $\pi_i$ are the projectors on the $\ch_{\LL}^{i} \otimes \ch_{\R}^{i}$, i.e.\ the atomic projectors of $\cb$.

Let now $\chi$ be a canonical splitting map such that $\stloc_{\R}(\chi) = \cb$. By definition there exists another representation unitary $\tilde{V} : \ch \rightarrow \bigoplus_i (\tilde{\ch}_{{\LL}}^{i} \otimes \tilde{\ch}_{{\R}}^{i})$, such that $\chi$ is simply the embedding $\ch \cong \bigoplus_i (\tilde{\ch}_{{\LL}}^{i} \otimes \tilde{\ch}_{{\R}}^{i}) \hookrightarrow (\oplus_i \tilde{\ch}_{{\LL}}^{i}) \otimes (\oplus_i \tilde{\ch}_{{\R}}^{i})$. We will call $\tilde{\ch}_{\LL} = \oplus_i \tilde{\ch}_{{\LL}}^{i}$, $\tilde{\ch}_{\R} = \oplus_i \tilde{\ch}_{{\R}}^{i}$ and $\tilde{\pi}_i $ the respective projectors on the $\tilde{\ch}_{\LL}^{i} \otimes \tilde{\ch}_{\R}^{i}$. Because $\tilde{V}$ is also a representation unitary for the pair of algebras $\cb'$ on the left and $\cb$ on the right, it is of the form $\tilde{V} = \bigoplus_i (U_{\LL}^{i} \otimes U_{\R}^{i})$ where the $U_{\LL}^{i} : \ch_{\LL}^{i} \rightarrow \tilde{\ch}_{\LL}^{i}$ are unitaries (and similarly on the right for the $U_{\R}^{i}$). We then have that, for all $\rho \in \cl(\ch)$, $\Tr_{\chi}(\rho) = \Tr_{\tilde{\ch}_{\R}}(\chi \rho \chi^{\dagger}) = \sum_{i,j} \Tr_{\tilde{\ch}_{\R}}(\chi \pi_i \rho \pi_j \chi^{\dagger}) = \sum_{i,j,k} \Tr_{\tilde{\ch}_{\R}^{k}}(\chi \pi_i \rho \pi_j \chi^{\dagger}) = \sum_{i,j,k} \Tr_{\tilde{\ch}_{\R}^{k}} ((U_{\LL}^{i} \otimes U_{\R}^{i}) \pi_i \rho \pi_j (U_{\LL}^{j} \otimes U_{\R}^{j})^{\dagger})  = \sum_i \Tr_{\tilde{\ch}_{\R}^{i}} ((U_{\LL}^{i} \otimes U_{\R}^{i}) \pi_i \rho \pi_i (U_{\LL}^{i} \otimes U_{\R}^{i})^{\dagger})  = \sum_i \Tr_{\ch_{\R}^{i}} ((U_{\LL}^{i} \otimes \id) \pi_i \rho \pi_i (U_{\LL}^{i} \otimes \id)^{\dagger}) = U \Tr(\rho) U^{\dagger}$ where $U = \bigoplus_i U_{\LL}^{i}$.

Let us now  go back to the general case and suppose that $\chi$ is any lean splitting map. Let $\ca = \stloc(\chi)$ and $\cb = \ca' = \stloc_{\R}(\chi)$. By Proposition \ref{decomposition of lean}, we know that there exist a canonical splitting map for $\ca$ and isometries $W_{\LL}$ and $W_{\R}$ such that $\chi = (W_{\LL} \otimes W_{\R}) \zeta$.  It follows that $\Tr_{\chi}(\cdot) = \Tr_{\chrc}(\chi \cdot \chi^{\dagger}) = \Tr_{\chrc}((W_{\LL} \otimes W_{\R}) \zeta \cdot \zeta^{\dagger}(W_{\LL}^{\dagger} \otimes W_{\R}^{\dagger})) = \Tr_{\chrz}((W_{\LL} \otimes \id) \zeta \cdot \zeta^{\dagger}(W_{\LL}^{\dagger} \otimes \id)) = W_{\LL} \Tr_{\chrz}(\zeta \cdot \zeta^{\dagger}) W_{\LL}^{\dagger} = W_{\LL} U \Tr_{\cb}(\cdot) U^{\dagger} W_{\LL}^{\dagger}$, where the last equality comes from applying this theorem that we've just proved to be true in the canonical case to $\zeta$.
\end{proof}

Let us quickly Observe that this theorem is not true in the case of general balanced splitting maps but only in the case of the lean ones. The idea is that if one of the $\phi_i$ is not separable in the decomposition of $\chi$ given by Corollary \ref{shape of good split factor}, then the $\chi$-trace, by tracing the right part of $\phi_i$, will induce some decoherence compared to the von Neumann algebra trace.

As we want to focus, in this last section, on signalling between algebras and causal decompositions, we will only consider splitting maps that behave like von Neumann algebras for their inclusion but also for their trace, i.e.\ lean splitting maps. Let us now express the notions of semi-causality and semi-locality in terms of splitting maps.

\begin{definition}[Heisenberg semi-causality]
We say that a unitary $U: \ch \rightarrow \ck$ is {\em Heisenberg semi-causal} from $\mathcal{A} \subseteq \cl(\ch)$ to $\mathcal{B} \subseteq \cl(\ck)$ if 
\begin{equation}  [U^{\dagger} \mathcal{B} U, \mathcal{A}] = 0 \, . \end{equation}
In terms of splitting maps, this is equivalent to asking that for any choice of lean splitting maps $\chi_{\ca}$, $\chi_{\ca'}$ and $\chi_{\cb}$ representing the corresponding algebras and for any $B \in \ch_{\LL}^{\chi_{\cb}}$, there exist $A_1 \in \ch_{\LL}^{\chi_{\ca'}}$ and $A_2 \in \ch_{\R}^{\chi_{\ca}}$ such that 
\begin{equation}
\tikzfig{figures/HSBtoSC_0} \, .
\end{equation}
\end{definition}

\begin{definition}[Schrödinger semi-causality]
We say that a channel $\ce: \cl(\ch) \rightarrow \cl(\ck)$ is {\em Schrödinger semi-causal} from $\mathcal{A} \subseteq \mathcal{L}(\ch)$ to $\mathcal{B} \subseteq \mathcal{L}(\ck)$ if there exists a completely positive and trace-preserving (CPTP) map $\ce'$ such that 

\begin{equation}\Tr_{\mathcal{B'}}[\ce(\cdot)] = \ce'(\Tr_{\mathcal{A}}[\cdot]) \, . \end{equation}
This condition can equivalently be phrased as the fact that for any choice of lean splitting maps $\chi_{\ca'}$ and $\chi_{\cb}$ there exists a CPTP map $\tilde{\ce}$ such that
\begin{equation}\label{semi-causality equation}
\tikzfig{figures/causal_1} \, .
\end{equation}

\end{definition}

In the particular case where $\ce$ is a unitary channel, i.e.\ of the form $\ce(\cdot) = U \cdot U^{\dagger}$ with $U : \ch \rightarrow \ck$ unitary, we obtain the following result:

\begin{theorem}
A unitary is Heisenberg semi-causal from $\ca$ to $\cb$ if and only if it is Schrödinger semi-causal from $\ca$ to $\cb$ . 
\end{theorem}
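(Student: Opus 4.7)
The plan is to establish both implications through a Hilbert–Schmidt duality between strict $\chi$-locality and the $\chi$-trace. The bridging identity, immediate from cyclicity of the trace and the definitions of $\sigma^\chi$ and $\Tr_\chi$, is that for any splitting map $\chi$, any $\rho \in \cl(\ch)$, and any on-site operator $\tilde{X}$ on the left of $\chi$,
\begin{equation}
\Tr[\tilde{X}\cdot\Tr_\chi(\rho)] \;=\; \Tr[\sigma^\chi(\tilde{X})\cdot\rho].
\end{equation}
Applied to the splittings $\chi_\cb$ and $\chi_{\ca'}$, and combined with the standard pairing $\Tr[B\cdot\ce(\rho)] = \Tr[U^\dagger B U\cdot\rho]$, this identity lets each semi-causality condition be read off from the other.

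For Heisenberg $\Rightarrow$ Schrödinger, I would pick any on-site $\tilde{B}$ for $\chi_\cb$ and set $B = \sigma^{\chi_\cb}(\tilde{B}) \in \cb$. By Heisenberg semi-causality, $U^\dagger B U \in \ca'$, and since $\chi_{\ca'}$ is lean, hence balanced, this means $U^\dagger B U$ is strictly left $\chi_{\ca'}$-local, written as $\sigma^{\chi_{\ca'}}(\tilde{A})$ for some on-site $\tilde{A}$. Applying the bridging identity on each side yields
\begin{equation}
\Tr[\tilde{B}\cdot\Tr_{\chi_\cb}(\ce(\rho))] \;=\; \Tr[U^\dagger B U\cdot\rho] \;=\; \Tr[\tilde{A}\cdot\Tr_{\chi_{\ca'}}(\rho)].
\end{equation}
As $\tilde{B}$ ranges over all on-site operators, this forces $\Tr_{\chi_\cb}(\ce(\rho))$ to depend on $\rho$ only through $\Tr_{\chi_{\ca'}}(\rho)$, and so defines a linear map $\tilde{\ce}$ with $\Tr_{\chi_\cb}\circ\ce = \tilde{\ce}\circ\Tr_{\chi_{\ca'}}$. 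The CPTP property of $\tilde{\ce}$ is then inherited from the fact that $\Tr_{\chi_\cb}\circ\ce$ is CPTP and that $\Tr_{\chi_{\ca'}}$ is itself CPTP and surjective onto the operator system on which $\tilde{\ce}$ is thereby pinned down.

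For Schrödinger $\Rightarrow$ Heisenberg, given a CPTP $\tilde{\ce}$ with $\Tr_{\chi_\cb}\circ\ce = \tilde{\ce}\circ\Tr_{\chi_{\ca'}}$, I would take any on-site $\tilde{B}$ for $\chi_\cb$ and let $B = \sigma^{\chi_\cb}(\tilde{B}) \in \cb$. Pairing both sides of the Schrödinger equation with $\tilde{B}$ and twice invoking the bridging identity gives, for every $\rho$,
\begin{equation}
\Tr[U^\dagger B U\cdot\rho] \;=\; \Tr[\tilde{\ce}^*(\tilde{B})\cdot\Tr_{\chi_{\ca'}}(\rho)] \;=\; \Tr[\sigma^{\chi_{\ca'}}(\tilde{\ce}^*(\tilde{B}))\cdot\rho],
\end{equation}
where $\tilde{\ce}^*$ denotes the Hilbert–Schmidt adjoint of $\tilde{\ce}$. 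Hence $U^\dagger B U = \sigma^{\chi_{\ca'}}(\tilde{\ce}^*(\tilde{B})) \in \loc_\LL(\chi_{\ca'})$. The crucial non-factor step is then to invoke that $\chi_{\ca'}$ is lean, hence balanced, so that $\loc_\LL(\chi_{\ca'}) = \stloc_\LL(\chi_{\ca'}) = \ca'$; since $B$ was arbitrary in $\cb$, this yields $U^\dagger \cb U \subseteq \ca'$, i.e.\ Heisenberg semi-causality.

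The diagram chase itself is short; the main obstacle is precisely the collapse $\loc_\LL = \stloc_\LL$ for balanced splitting maps that underwrites the reverse direction. This equality, which fails for arbitrary splitting maps as remarked in Section \ref{section3}, is the technical ingredient that makes the factor-subsystem equivalence survive the passage to general Von Neumann subalgebras: it closes the gap between an operator being representable on-site (merely local) and being the reduction of a consistent on-site operator (strictly local). A secondary bookkeeping point is the CP verification for $\tilde{\ce}$ in the forward direction, but this follows from a standard dilation argument applied to the CPTP maps $\ce$, $\Tr_{\chi_\cb}$, and $\Tr_{\chi_{\ca'}}$.
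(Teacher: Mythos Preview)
Your argument is correct and is in essence the Hilbert--Schmidt dual reformulation of the paper's diagrammatic proof: the bridging identity $\Tr[\tilde X\cdot\Tr_\chi(\rho)]=\Tr[\sigma^\chi(\tilde X)\cdot\rho]$ is exactly what the paper's ``bend the wire and remove $B$/$\rho$'' moves encode, and your reverse direction ($U^\dagger B U\in\loc_\LL(\chi_{\ca'})=\ca'$ via balancedness) matches the paper's line for line once one unpacks the diagrams. The one place where the paper does noticeably more than you is the CPTP step in the forward direction. You say complete positivity of $\tilde\ce$ is ``inherited'' and defer to a ``standard dilation argument'', but $\tilde\ce$ is only pinned down on the image of $\Tr_{\chi_{\ca'}}$, which for a lean splitting map is the proper block-diagonal subalgebra $\bigoplus_i\cl(\ch_\LL^i)\subsetneq\cl(\ch_\LL^{\chi_{\ca'}})$; neither inheritance nor a generic dilation suffices to produce a CPTP map on the ambient algebra. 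The paper resolves this by explicitly constructing a CPTP section $\cf$ of $\Tr_{\chi_{\ca'}}$ (sending each block $\ch_\LL^i$ back into $\ch$ via a fixed reference state on $\ch_\R^i$, and dumping the orthogonal complement onto a fixed state) and then setting $\tilde\ce:=\Tr_{\chi_\cb}\circ(U\cdot U^\dagger)\circ\cf$, which is manifestly CPTP. Your outline is otherwise complete, and the section construction is exactly the missing concrete step.
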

\begin{proof}
Let $U : \ch \rightarrow \ck$ be a unitary and let $\chi_{\ca'}$ and $\chi_{\cb}$ be lean splitting maps representing the corresponding algebras. Suppose that for all $B$, there exists $A$ such that

\begin{equation}
\tikzfig{figures/HSBtoSC_0.5} , .
\end{equation}
It follows that $\Tr_{\chi_{\ca'}}(\rho) = \Tr_{\chi_{\ca'}}(\sigma) \implies \Tr_{\chi_{\cb}}(U\rho U^\dagger) = \Tr_{\chi_{\cb}}(U \sigma U^\dagger)$. Indeed, let $\rho$ and $\sigma$ satisfying the aforementioned premise; then for all $B$,

\begin{equation}
\tikzfig{figures/HSBtoSC_1} \, ,
\end{equation}
and by removing the map and bending the wires back (which we can do because the equality is true for all $B$), we have proved that $\Tr_{\chi_{\cb}}(U\rho U^\dagger) = \Tr_{\chi_{\cb}}(U \sigma U^\dagger)$.

Recall now that, as $\chi_{\ca'}$ is lean, its image is of the form $\Im(\chi_{\ca'}) = \bigoplus_i \ch_{\LL}^{i} \otimes \ch_{\R}^{i} \subseteq \ch_{\LL}^{\chi_{\ca'}} \otimes \ch_{\R}^{\chi_{\ca'}} $. Let us construct the isometry $V : \bigoplus_i \ch_{\LL}^{i}  \rightarrow \ch$ such that, for all $i$, $ V_{|\ch_{\LL}^{i}} \ket{\psi^{i}_{\LL}} := \chi_{\ca'}^{\dagger}(\ket{\psi^{i}_{{\LL}}}\otimes\ket{0^{i}_{\R}})$ with $\ket{0^{i}_{\R}} \in \ch_{\R}^{i}$. Let $\{ \ket{k} \}_k$ be an orthonormal basis of $(\bigoplus_i \ch_{\LL}^{i})^{\perp}$ and choose some $\ket{\phi} \in \ch$. We define for all $k$ a map $V_k = \ketbra{\phi}{k}$. It follows that $\cf = V\cdot V^{\dagger} + \sum_k V_k \cdot V_k^{\dagger}$ is a quantum channel from $\cl(\ch_{\LL}^{\chi_{\ca'}})$ to $\cl(\ch)$ that has the property that $\Tr_{\chi_{\ca'}}(\cf(\Tr_{\chi_{\ca'}}(\rho))) = \Tr_{\chi_{\ca'}}(\rho)$. We finally define the channel $\tilde{\ce} : \rho \rightarrow \Tr_{\chi_{\cb}}(U \cf (\rho) U^{\dagger})$. We can compute that $\tilde{\ce}(\Tr_{\chi_{\ca'}}(\rho))  = \Tr_{\chi_{\cb}}(U \cf( \Tr_{\chi_{\ca'}}(\rho)) U^{\dagger}) = \Tr_{\chi_{\cb}}(U  \rho U^{\dagger})$ where the last equality comes from the fact that $\rho$ and $\cf(\Tr_{\chi_{\ca'}}(\rho))$ have the same $\chi_{\ca'}$-trace and thus the same $\chi_{\cb}$-trace after applying $U \cdot U^{\dagger}$.

Conversely, suppose that $U$ is Schrödinger semi-causal. Then for all $\rho$,

\begin{equation}
 \scalebox{0.95}{ \tikzfig{figures/SCtoHSB_1}} \, ,
\end{equation}
which again proves, by removing $\rho$ and bending the wires back, that

\begin{equation}
\tikzfig{figures/SCtoHSB_2} \, .
\end{equation}
As these implications are true for any choice of $\chi_{\ca'}$ and $\chi_{\cb}$, we have proved that $U$ is Heisenberg semi-causal if and only if it is Schrödinger semi-causal.
\end{proof}

This shows that the tracing-style definition, or equivalently the discarding-style definition, is not in some way incomplete and that it captures precisely the usual coherent notion of non-signalling in terms of commutation of operators when restricted to unitaries.

Let us now Observe that the Schrödinger causality, in its splitting maps phrasing, plays naturally with the notion of comprehension. Indeed, suppose that $\ce$ is semi-causal from $\ca$ to $\cb$ and let $\chi \sqsubseteq \chi_B$; then

\begin{equation}
\tikzfig{figures/causal_3} \, ,
\end{equation}
and a similar reasoning shows that if $\chi_{\ca'} \sqsubseteq \zeta$, then:

\begin{equation}
\tikzfig{figures/causal_4} \, .
\end{equation}

Observe that this implies that, to prove Schrödinger semi-causality, it is sufficient to prove equality  (\ref{semi-causality equation}) for one pair of corresponding lean splitting maps, as the mutual comprehension of lean splitting maps representing the same algebra will imply that the equality is true for any such pair. This computation also proves that even if the $\chi$-trace and the algebraic trace agree only in the case of lean splitting maps, semi-causality can actually be stated equivalently for any pair of balanced splitting maps respectively representing $\ca$ and $\cb$.  
Let us now give the definition of semi-localisability between a pair of splitting maps. We state it here in full generality but we will actually only use it in the rest of this section in the case of lean splitting maps.

\begin{definition}[Semi-localisability]
A channel $\ce : \cl(\ch) \longrightarrow \cl(\ck)$ is {\em semi-localisable} from $\chi \in \spl(\ch)$ to $\zeta \in \spl(\ck)$ if and only if there exist channels $\ce_1$ and $\ce_2$ such that it can be decomposed as

\begin{equation}
\tikzfig{figures/semi-localizable_1} \, ,
\end{equation}
with 

\begin{equation}\label{sl2}
\tikzfig{figures/semi-localizable_2} \, .
\end{equation}
\end{definition}
Note that if $\ch = \ck$, (\ref{sl2}) is equivalent to the $\zeta$-consistency of

\begin{equation}
\tikzfig{figures/semi-localizable_3} \, .
\end{equation}

\begin{theorem}
A channel $\ce : \cl(\ch) \longrightarrow \cl(\ck)$ is semi-causal from $\ca \subseteq  \cl(\ch)$ to $\cb \subseteq \cl(\ck)$ if and only if there exist $\chi_{\ca'} \in \lean(\ch)$ and $\chi_{\cb} \in \lean(\ck)$, respectively representing $\ca'$ and $\cb$, such that $\ce$ is semi-localisable from $\chi_{\ca'}$ to $\chi_{\cb}$.
\end{theorem}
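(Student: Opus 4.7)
The statement splits into two implications and the substantive content lies in the forward direction. For the backward direction I would unfold the semi-localisability decomposition of $\ce$ via $(\ce_1,\ce_2)$ inside $\Tr_{\chi_\cb}\circ\,\ce$ diagrammatically: the trace over the right branch of $\chi_\cb$ discards the output wire that carries the $\ce_1$ contribution, and the consistency condition~(\ref{sl2}) allows me to slide this trace past $\ce_2$ so that what remains acts only on the right branch of $\chi_{\ca'}$, i.e.\ only on $\Tr_{\chi_{\ca'}}(\cdot)$. Defining $\tilde{\ce}$ to be the residual channel yields Schrödinger semi-causality for this particular pair of lean splitting maps, and hence, by the remark preceding the theorem that semi-causality can be stated for any pair of balanced (in particular lean) splitting maps representing $\ca'$ and $\cb$, for every such pair.

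For the forward direction, my plan is a reduction to the known factor-case equivalence \cite{Eggeling_2002, beckman_localisable}. First, using Proposition~\ref{decomposition of lean}, any lean splitting map can be written as a canonical splitting map post-composed with local isometries; absorbing these isometries into $\ce$, I reduce to the case in which $\chi_{\ca'}$ and $\chi_\cb$ are themselves canonical. Canonicity makes the Artin--Wedderburn decompositions $\ch \cong \bigoplus_i \ch_{\LL}^i \otimes \ch_{\R}^i$ and $\ck \cong \bigoplus_j \ck_{\LL}^j \otimes \ck_{\R}^j$ diagrammatically explicit, with $\ca$ and $\cb$ acting as the left-factor algebras on each block. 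I would then use the atomic projectors of $\ca$ and $\cb$ to decompose $\ce$ into block-to-block components $\ce_{ij}$, and check that the semi-causal hypothesis from $\ca$ to $\cb$ forces each $\ce_{ij}$ to be semi-causal in the usual factor sense between the $i$-th factor of $\ca$ and the $j$-th factor of $\cb$. Invoking the classical factor-case theorem produces a semi-localisable decomposition $\ce_{ij} = \ce_{ij}^{(1)} \circ \ce_{ij}^{(2)}$ with its own intermediate side system on each block.

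The final stage is to reassemble these per-block decompositions into a single global pair $(\ce_1,\ce_2)$. I would take the direct sum of the block-wise side systems and package the $\ce_{ij}^{(2)}$ into a single $\ce_2$ whose right-branch output lands block-diagonally inside the image of $\chi_\cb$, and similarly for $\ce_1$. Orthogonality of the blocks in both the input decomposition of $\ch$ and the output decomposition of $\ck$ should kill all the cross-terms, so that the global consistency condition~(\ref{sl2}) reduces to the conjunction of the per-block consistency conditions already provided by the factor case. I expect this gluing to be the main obstacle: one must carefully track how the atomic projectors of $\ca$ and $\cb$ interact with the left/right branches of the canonical splitting maps, and check that the chosen block-wise dilations of $\ce_2$ can indeed be combined into a single CPTP map rather than a merely block-diagonal family. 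Once this compatibility is verified, the reassembled pair $(\ce_1,\ce_2)$ exhibits $\ce$ as semi-localisable from $\chi_{\ca'}$ to $\chi_\cb$, completing the equivalence.
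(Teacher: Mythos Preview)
Your backward direction matches the paper's one-line diagram computation. The forward direction, however, follows a genuinely different route from the paper and contains a gap that I do not see how to close.

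The paper does \emph{not} reduce to the factor case block by block. It runs the factor-case mechanism globally through the splitting maps: take Stinespring dilations $U$ of $\ce$ and $V$ of the witness channel $\tilde{\ce}$, observe that $(\zeta_{\cb}\otimes\id)U$ and $(V\otimes\id)\chi_{\ca'}$ are two isometric dilations of the same map $\Tr_{\chi_\cb}\circ\ce=\tilde\ce\circ\Tr_{\chi_{\ca'}}$ (after replacing $\chi_\cb$ by $\zeta_\cb=(\id\otimes W)\chi_\cb$ to make the ancilla dimensions compatible), invoke essential uniqueness of Stinespring dilations (Lemma~\ref{lemma dilation}) to obtain an intertwining isometry $T$, and read off $(\ce_1,\ce_2)$ directly from $V$ and $T$. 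No block decomposition of $\ce$ is ever performed.

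Your block-wise strategy runs into the following obstruction. The pieces $\ce_{ij}(\cdot):=\pi_j\,\ce(\pi_i\cdot\pi_i)\,\pi_j$ are completely positive but not trace-preserving, so the factor-case theorem of \cite{Eggeling_2002} does not apply to them as stated. More fundamentally, these pieces do not determine $\ce$: a general channel acts non-trivially on the coherences $\pi_i\rho\,\pi_{i'}$ between distinct $\ca$-blocks, and the semi-causality hypothesis places no constraint whatsoever on that action, since $\Tr_{\chi_{\ca'}}$ annihilates such coherences before the hypothesis can see them. Yet the semi-localisable decomposition must reproduce $\ce$ exactly on \emph{all} inputs, coherences included. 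The per-block data you extract simply carries no information about this off-diagonal behaviour, so there is no way to glue the $\ce_{ij}^{(1)},\ce_{ij}^{(2)}$ into a global $(\ce_1,\ce_2)$ that recovers $\ce$. The ``main obstacle'' you anticipate is not a bookkeeping difficulty but a genuine loss of information that the block-wise route cannot repair; this is precisely why the paper works with a dilation of the full $\ce$ from the outset.
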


\begin{proof}
Suppose that the channel $\ce$ it is semi-localisable from $\chi_{\ca'}$ to $\chi_{\cb}$. Then immediately

\begin{equation}
\tikzfig{figures/sltosc} \, ,
\end{equation}
which proves that $\ce$ is semi-causal from $\ca$ to $\cb$. Conversely, suppose that $\ce$ is semi-causal; then for any choice of lean splitting maps $\chi_{\ca'}$ and $\chi_{\cb}$ there exists a channel $\tilde{\ce}$ such that 

\begin{equation}
\tikzfig{figures/causal_1} \, .
\end{equation}

By Stinespring dilation \cite{Stinespring:1955eig} of the channels $\ce$ and $\tilde{\ce}$ respectively, there exist Hilbert spaces $\ch_U$ and $\ch_V$ and isometries $U$ and $V$ such that 

\begin{equation}
\tikzfig{figures/sctosl_1} \, .
\end{equation}
Let us consider $\zeta_{\cb} = (\id \otimes W)\chi_{\cb}$ where $W : \ch_{\R}^{\chi_{\cb}} \rightarrow  \ch_{\R}^{\zeta_{\cb}}$ is an isometry such that $\dim(\ch_{\R}^{\zeta_{\cb}} \otimes \ch_U) \geq \dim(\ch_V \otimes \ch_{\R}^{\chi_{\ca'}})$. Observe that, by construction, $\zeta_{\cb}$ is also a lean splitting map representing the algebra $\cb$. We have that 

\begin{equation}
\tikzfig{figures/sctosl_2} \, .
\end{equation}
In particular this implies that 

\begin{equation}
\tikzfig{figures/sctosl_3}
\end{equation}
are two dilations of the same map. Because $\dim(\ch_{\R}^{\zeta_{\cb}} \otimes \ch_U) \geq \dim(\ch_V \otimes \ch_{\R}^{\chi_{\ca'}})$,  Lemma \ref{lemma dilation} tells us that there exists an isometry $T$ such that

\begin{equation}
\tikzfig{figures/sctosl_4}
\end{equation}

Then, because $U$ is a Stinespring dilation of $\ce$,

\begin{equation}
\tikzfig{figures/sctosl_5}
\end{equation}
and it directly follows that $\ce$ is semi-localisable from $\chi_{\ca'}$ to $\zeta_{\cb}$ as 

\begin{equation}
\tikzfig{figures/sctosl_6}
\end{equation}
and 

\begin{equation}
\tikzfig{figures/sctosl_7} \, .
\end{equation}

\end{proof}

\begin{remark}
The fact that the semi-causality of a channel $\ce$ is not equivalent to its semi-localisability with respect to any pair of lean splitting maps representing the algebras is only a problem of the dimensions of the Hilbert spaces involved in the representations not being compatible. Indeed when we construct $\zeta_{\cb}$, we have to ensure that $\dim(\ch_{\R}^{\zeta_{\cb}} \otimes \ch_U) \geq \dim(\ch_V \otimes \ch_{\R}^{\chi_{\ca'}})$ so that $T$ is an isometry from $\ch_V \otimes \ch_{\R}^{\chi_{\ca'}}$ to $\ch_{\R}^{\zeta_{\cb}} \otimes \ch_U$ and not the other way around.
\end{remark}

This is a causally faithful decomposition theorem for arbitrary finite-dimensional quantum subsystems. It shows both how splitting maps can be used to prove structural theorems regarding causality and locality, and even give a neat way to frame them to begin with.



\section{Conclusion}

In this article, we have defined the notion of splitting map as a tool to decompose a system into a pair of disjoint subsystems. We have shown that we were able to define some notion of (strict) locality with respect to a splitting map, through which we could link our diagrammatic approach to general algebraic quantum subsystems. We have equipped the set of splitting maps on a system with a preorder named comprehension and have then defined a few particular classes of splitting maps. The canonical splitting maps, given by the Artin-Wedderburn theorem, are constructive representations of algebraic subsystems. The balanced splitting maps form the biggest class that is equivalent (as a preordered set) to the theory of algebraic quantum subsystems. And the lean splitting maps (which form a subclass of the balanced ones) are able to diagrammatically capture the trace over the algebra they represent. Finally, we used this feature to obtain a causal decomposition for channels that are non-signalling from an algebra to another, generalising the result that was given in \cite{Eggeling_2002} in the factor case.



A natural follow-up is the question of how precisely this approach relates to that of \cite{Claeys_2024,allen2024cpinftybeyond2categoricaldilation,Reutter_2019}, which develops a compositional representation for non-factor subsystems using additional features and known graphical representations for $2$-categories. For instance, whilst there is a natural way to imagine constructing $2$-categories analogous to the $2$-category of $2$-Hilbert spaces \cite{baez1996higherdimensionalalgebraii2hilbert} from bimonoidal categories which add an additional monoidal structure analogous to the direct sum \cite{johnson2021bimonoidalcategoriesenmonoidalcategories}, the approach outlined in this paper suggests there could be a construction which achieves the same, but which simply leverages the basic notion of splitting internal to a monoidal category. In other words, we wonder if it could be possible to consider the identification of splitting maps and comprehension within monoidal categories as a new way to split an entire dagger monoidal category into a $2$-category.



The results presented in this paper appear to be a satisfactory approach to identifying non-factor subsystems within the language of symmetric monoidal categories, but possibly not a satisfactory diagrammatic approach. While splitting, locality, trace, and comprehension are simple interpretable diagrammatic equations or definitions, the same cannot be said for the representation that we have given for maps that are balanced or lean. In future work we intend to give a completely equational diagrammatic representation of those features too, in order to take a step closure to a fully diagrammatic identification of general quantum subsystems.

Finally, the traditional way to think of non-factor subsystems is in terms of decomposition; however, the dagger of the splitting map appears to represent an internal compositional representation in terms of a generalised tensor product of the kind explored in \cite{Arrighi2024quantumnetworks,arrighi2023generalisedtensorstraces}. For future work, it therefore seems promising to explore whether the established properties of splitting maps, in particular comprehension which mimics the associativity of composition for factor systems, really allow for a constructive/compositional way to reason with non-factor subsystems.

\section*{Acknowledgements}
It is a pleasure to thank Kathleen Barsse, Timothée Hoffreumon and James Hefford for helpful discussions and comments.

OM, MW and PA are partially funded by the European Union through the MSCA SE project QCOMICAL, by the French National Research Agency (ANR): projects TaQC ANR-22-CE47-0012 and within the framework of `Plan France 2030', under the research projects EPIQ ANR-22-PETQ-0007, OQULUS ANR-23-PETQ-0013, HQI-Acquisition ANR-22-PNCQ-0001 and HQI-R\&D
ANR-22-PNCQ-0002, and by the ID \#62312 grant from the John Templeton Foundation, as part of the \href{https://www.templeton.org/grant/the-quantum-information-structure-of-spacetime-qiss-second-phase}{‘The Quantum Information Structure of Spacetime’ Project (QISS)}.
MW was funded by the Engineering and Physical Sciences Research Council [grant number EP/W524335/1].
AV is supported by the STeP2 grant (ANR-22-EXES-0013) of Agence Nationale de la Recherche (ANR), the PEPR integrated project EPiQ (ANR-22-PETQ-0007) as part of Plan France 2030, the ANR grant TaQC (ANR-22-CE47-0012), and the ID \#62312 grant from the John Templeton Foundation, as part of the \href{https://www.templeton.org/grant/the-quantum-information-structure-of-spacetime-qiss-second-phase}{‘The Quantum Information Structure of Spacetime’ Project (QISS)}.
The opinions expressed in this publication are those of the authors and do not necessarily reflect the views of the John Templeton Foundation.

\bibliographystyle{quantum}
\bibliography{biblio}

@misc{baez1996higherdimensionalalgebraii2hilbert,
	archiveprefix = {arXiv},
	author = {John C. Baez},
	date-added = {2025-10-17 12:49:04 +0200},
	date-modified = {2025-10-17 12:49:04 +0200},
	eprint = {q-alg/9609018},
	primaryclass = {q-alg},
	title = {Higher-Dimensional Algebra II: 2-Hilbert Spaces},
	url = {https://arxiv.org/abs/q-alg/9609018},
	year = {1996},
	bdsk-url-1 = {https://arxiv.org/abs/q-alg/9609018}}

@misc{johnson2021bimonoidalcategoriesenmonoidalcategories,
	archiveprefix = {arXiv},
	author = {Niles Johnson and Donald Yau},
	date-added = {2025-10-17 12:47:27 +0200},
	date-modified = {2025-10-17 12:47:27 +0200},
	eprint = {2107.10526},
	primaryclass = {math.CT},
	title = {Bimonoidal Categories, $E_n$-Monoidal Categories, and Algebraic $K$-Theory},
	url = {https://arxiv.org/abs/2107.10526},
	year = {2021},
	bdsk-url-1 = {https://arxiv.org/abs/2107.10526}}

@article{Reutter_2019,
	author = {Reutter, David J. and Vicary, Jamie},
	date-added = {2025-10-17 12:10:52 +0200},
	date-modified = {2025-10-17 12:10:52 +0200},
	doi = {10.21136/hs.2019.04},
	journal = {Higher Structures},
	month = mar,
	number = {1},
	pages = {109--154},
	publisher = {Institute of Mathematics, Czech Academy of Sciences},
	title = {Biunitary constructions in quantum information},
	url = {http://dx.doi.org/10.21136/HS.2019.04},
	volume = {3},
	year = {2019},
	bdsk-url-1 = {http://dx.doi.org/10.21136/HS.2019.04}}

@article{Coecke_2018,
	author = {Coecke, Bob and Selby, John and Tull, Sean},
	date-added = {2025-10-07 16:20:21 +0200},
	date-modified = {2025-10-07 16:20:21 +0200},
	doi = {10.4204/eptcs.266.7},
	issn = {2075-2180},
	journal = {Electronic Proceedings in Theoretical Computer Science},
	month = feb,
	pages = {104--118},
	publisher = {Open Publishing Association},
	title = {Two Roads to Classicality},
	url = {http://dx.doi.org/10.4204/EPTCS.266.7},
	volume = {266},
	year = {2018},
	bdsk-url-1 = {http://dx.doi.org/10.4204/EPTCS.266.7}}

@article{oreshkov_causal_order,
	abstract = {The idea that events obey a definite causal order is deeply rooted in our understanding of the world and at the basis of the very notion of time. But where does causal order come from, and is it a necessary property of nature? Here, we address these questions from the standpoint of quantum mechanics in a new framework for multipartite correlations that does not assume a pre-defined global causal structure but only the validity of quantum mechanics locally. All known situations that respect causal order, including space-like and time-like separated experiments, are captured by this framework in a unified way. Surprisingly, we find correlations that cannot be understood in terms of definite causal order. These correlations violate a 'causal inequality' that is satisfied by all space-like and time-like correlations. We further show that in a classical limit causal order always arises, which suggests that space-time may emerge from a more fundamental structure in a quantum-to-classical transition.},
	author = {Oreshkov, Ognyan and Costa, Fabio and Brukner, {\v C}aslav},
	date = {2012/10/02},
	date-added = {2025-09-24 12:10:09 +0200},
	date-modified = {2025-09-24 12:10:09 +0200},
	doi = {10.1038/ncomms2076},
	id = {Oreshkov2012},
	isbn = {2041-1723},
	journal = {Nature Communications},
	number = {1},
	pages = {1092},
	title = {Quantum correlations with no causal order},
	url = {https://doi.org/10.1038/ncomms2076},
	volume = {3},
	year = {2012},
	bdsk-url-1 = {https://doi.org/10.1038/ncomms2076}}

@article{PhysRevA.88.022318,
	author = {Chiribella, Giulio and D'Ariano, Giacomo Mauro and Perinotti, Paolo and Valiron, Benoit},
	date-added = {2025-09-24 12:08:47 +0200},
	date-modified = {2025-09-24 12:08:47 +0200},
	doi = {10.1103/PhysRevA.88.022318},
	issue = {2},
	journal = {Phys. Rev. A},
	month = {Aug},
	numpages = {15},
	pages = {022318},
	publisher = {American Physical Society},
	title = {Quantum computations without definite causal structure},
	url = {https://link.aps.org/doi/10.1103/PhysRevA.88.022318},
	volume = {88},
	year = {2013},
	bdsk-url-1 = {https://link.aps.org/doi/10.1103/PhysRevA.88.022318},
	bdsk-url-2 = {https://doi.org/10.1103/PhysRevA.88.022318}}

@article{Selby2021reconstructing,
	author = {Selby, John H. and Scandolo, Carlo Maria and Coecke, Bob},
	date-added = {2025-09-24 12:06:29 +0200},
	date-modified = {2025-09-24 12:06:29 +0200},
	doi = {10.22331/q-2021-04-28-445},
	issn = {2521-327X},
	journal = {{Quantum}},
	month = apr,
	pages = {445},
	publisher = {{Verein zur F{\"{o}}rderung des Open Access Publizierens in den Quantenwissenschaften}},
	title = {Reconstructing quantum theory from diagrammatic postulates},
	url = {https://doi.org/10.22331/q-2021-04-28-445},
	volume = {5},
	year = {2021},
	bdsk-url-1 = {https://doi.org/10.22331/q-2021-04-28-445}}

@article{deutsch1989quantum,
	author = {Deutsch, David Elieser},
	date-added = {2025-09-24 11:55:35 +0200},
	date-modified = {2025-09-24 11:55:35 +0200},
	doi = {10.1098/rspa.1989.0099},
	journal = {Proceedings of the royal society of London. A. mathematical and physical sciences},
	number = {1868},
	pages = {73--90},
	publisher = {The Royal Society London},
	title = {Quantum computational networks},
	volume = {425},
	year = {1989},
	bdsk-url-1 = {https://doi.org/10.1098/rspa.1989.0099}}

@article{Vanrietvelde_2021,
	author = {Vanrietvelde, Augustin and Kristj{\'a}nsson, Hl{\'e}r and Barrett, Jonathan},
	date-added = {2025-09-23 23:41:48 +0200},
	date-modified = {2025-09-23 23:41:48 +0200},
	doi = {10.22331/q-2021-07-13-503},
	issn = {2521-327X},
	journal = {Quantum},
	month = jul,
	pages = {503},
	publisher = {Verein zur Forderung des Open Access Publizierens in den Quantenwissenschaften},
	title = {Routed quantum circuits},
	url = {http://dx.doi.org/10.22331/q-2021-07-13-503},
	volume = {5},
	year = {2021},
	bdsk-url-1 = {http://dx.doi.org/10.22331/q-2021-07-13-503}}

@misc{arrighi2023generalisedtensorstraces,
	archiveprefix = {arXiv},
	author = {Pablo Arrighi and Am{\'e}lia Durbec and Matt Wilson},
	date-added = {2025-09-23 23:39:42 +0200},
	date-modified = {2025-09-23 23:39:42 +0200},
	eprint = {2202.11340},
	primaryclass = {quant-ph},
	title = {Generalised tensors and traces},
	url = {https://arxiv.org/abs/2202.11340},
	year = {2023},
	bdsk-url-1 = {https://arxiv.org/abs/2202.11340}}

@misc{wilson2021composableconstraints,
	archiveprefix = {arXiv},
	author = {Matt Wilson and Augustin Vanrietvelde},
	date-added = {2025-09-23 23:39:29 +0200},
	date-modified = {2025-09-23 23:39:29 +0200},
	eprint = {2112.06818},
	primaryclass = {math.CT},
	title = {Composable constraints},
	url = {https://arxiv.org/abs/2112.06818},
	year = {2021},
	bdsk-url-1 = {https://arxiv.org/abs/2112.06818}}

@misc{vanrietvelde2025causaldecompositions1dquantum,
	archiveprefix = {arXiv},
	author = {Augustin Vanrietvelde and Octave Mestoudjian and Pablo Arrighi},
	date-added = {2025-09-23 23:38:22 +0200},
	date-modified = {2025-09-23 23:38:22 +0200},
	eprint = {2506.22219},
	primaryclass = {quant-ph},
	title = {Causal Decompositions of 1D Quantum Cellular Automata},
	url = {https://arxiv.org/abs/2506.22219},
	year = {2025},
	bdsk-url-1 = {https://arxiv.org/abs/2506.22219}}

@misc{vanrietvelde2025partitionsquantumtheory,
	archiveprefix = {arXiv},
	author = {Augustin Vanrietvelde and Octave Mestoudjian and Pablo Arrighi},
	date-added = {2025-09-23 23:37:59 +0200},
	date-modified = {2025-09-23 23:37:59 +0200},
	eprint = {2506.22218},
	primaryclass = {quant-ph},
	title = {Partitions in quantum theory},
	url = {https://arxiv.org/abs/2506.22218},
	year = {2025},
	bdsk-url-1 = {https://arxiv.org/abs/2506.22218}}

@article{schumacher_locality,
	author = {Schumacher, Benjamin and Westmoreland, Michael D.},
	date-added = {2025-09-23 23:34:31 +0200},
	date-modified = {2025-09-23 23:34:31 +0200},
	doi = {10.1007/s11128-004-3193-y},
	journal = {Quantum Inf Process},
	pages = {13-34},
	title = {Locality and Information Transfer in Quantum Operations},
	volume = 4,
	year = 2005,
	bdsk-url-1 = {https://doi.org/10.1007/s11128-004-3193-y}}

@article{barrett_gpts,
	author = {Barrett, Jonathan},
	date-added = {2025-09-23 23:34:12 +0200},
	date-modified = {2025-09-23 23:34:12 +0200},
	doi = {10.1103/PhysRevA.75.032304},
	issue = {3},
	journal = {Phys. Rev. A},
	month = {Mar},
	numpages = {21},
	pages = {032304},
	publisher = {American Physical Society},
	title = {Information processing in generalized probabilistic theories},
	volume = {75},
	year = {2007},
	bdsk-url-1 = {https://doi.org/10.1103/PhysRevA.75.032304}}

@article{Barrett2021,
	author = {Jonathan Barrett and Ralph Lorenz and Ognyan Oreshkov},
	date-added = {2025-09-23 23:32:58 +0200},
	date-modified = {2025-09-23 23:32:58 +0200},
	doi = {10.1038/s41467-020-20456-x},
	journal = {Nature Communications},
	pages = {885},
	title = {Cyclic quantum causal models},
	url = {https://doi.org/10.1038/s41467-020-20456-x},
	volume = {12},
	year = {2021},
	bdsk-url-1 = {https://doi.org/10.1038/s41467-020-20456-x}}

@misc{vanrietvelde2023consistentcircuitsindefinitecausal,
	archiveprefix = {arXiv},
	author = {Augustin Vanrietvelde and Nick Ormrod and Hl{\'e}r Kristj{\'a}nsson and Jonathan Barrett},
	date-added = {2025-09-23 23:32:48 +0200},
	date-modified = {2025-09-23 23:32:48 +0200},
	eprint = {2206.10042},
	primaryclass = {quant-ph},
	title = {Consistent circuits for indefinite causal order},
	url = {https://arxiv.org/abs/2206.10042},
	year = {2023},
	bdsk-url-1 = {https://arxiv.org/abs/2206.10042}}

@article{beckman_localisable,
	author = {Beckman, David and Gottesman, Daniel and Nielsen, M. A. and Preskill, John},
	date-added = {2025-09-23 23:32:13 +0200},
	date-modified = {2025-09-23 23:32:13 +0200},
	doi = {10.1103/PhysRevA.64.052309},
	issue = {5},
	journal = {Phys. Rev. A},
	pages = {052309},
	publisher = {American Physical Society},
	title = {Causal and localizable quantum operations},
	volume = {64},
	year = {2001},
	bdsk-url-1 = {https://doi.org/10.1103/PhysRevA.64.052309}}

@article{lorenz_unitary,
	author = {Lorenz, Robin and Barrett, Jonathan},
	date-added = {2025-09-23 23:32:07 +0200},
	date-modified = {2025-09-23 23:32:07 +0200},
	doi = {10.22331/q-2021-07-28-511},
	issn = {2521-327X},
	journal = {{Quantum}},
	pages = {511},
	publisher = {{Verein zur F{\"{o}}rderung des Open Access Publizierens in den Quantenwissenschaften}},
	title = {Causal and compositional structure of unitary transformations},
	volume = {5},
	year = {2021},
	bdsk-url-1 = {https://doi.org/10.22331/q-2021-07-28-511}}

@book{heunen_categories,
	author = {Heunen, Chris and Vicary, Jamie},
	date-added = {2025-09-23 23:31:59 +0200},
	date-modified = {2025-09-23 23:31:59 +0200},
	doi = {10.1093/oso/9780198739623.001.0001},
	publisher = {Oxford University Press},
	title = {Categories for Quantum Theory: An Introduction},
	year = {2019},
	bdsk-url-1 = {https://doi.org/10.1093/oso/9780198739623.001.0001}}

@article{gogioso_cpt,
	author = {Gogioso, Stefano and Scandolo, Carlo Maria},
	date-added = {2025-09-23 23:31:51 +0200},
	date-modified = {2025-09-23 23:31:51 +0200},
	doi = {10.4204/EPTCS.266.23},
	journal = {EPTCS},
	pages = {367-385},
	title = {Categorical Probabilistic Theories},
	volume = {266},
	year = 2018,
	bdsk-url-1 = {https://doi.org/10.4204/EPTCS.266.23}}

@misc{gogioso_church,
	author = {Gogioso, Stefano},
	date-added = {2025-09-23 23:31:08 +0200},
	date-modified = {2025-09-23 23:31:08 +0200},
	doi = {10.48550/arXiv.1905.13117},
	eprint = {1905.13117},
	primaryclass = {quant-ph},
	title = {A Process-Theoretic Church of the Larger Hilbert Space},
	year = {2019},
	bdsk-url-1 = {https://doi.org/10.48550/arXiv.1905.13117}}

@book{coecke_kissinger_2017,
	author = {Coecke, Bob and Kissinger, Aleks},
	date-added = {2025-09-23 23:30:18 +0200},
	date-modified = {2025-09-23 23:30:18 +0200},
	doi = {10.1017/9781316219317},
	place = {Cambridge},
	publisher = {Cambridge University Press},
	title = {Picturing Quantum Processes: A First Course in Quantum Theory and Diagrammatic Reasoning},
	year = {2017},
	bdsk-url-1 = {https://doi.org/10.1017/9781316219317}}

@article{coecke_cp,
	author = {Coecke, Bob and Heunen, Chris and Kissinger, Aleks},
	date-added = {2025-09-23 23:29:27 +0200},
	date-modified = {2025-09-23 23:29:27 +0200},
	doi = {10.1007/s11128-014-0837-4},
	journal = {Quantum Inf. Process.},
	pages = {5179--5209},
	title = {Categories of quantum and classical channels},
	volume = 15,
	year = 2016,
	bdsk-url-1 = {https://doi.org/10.1007/s11128-014-0837-4}}

@inproceedings{abramsky_coecke,
	author = {Abramsky, Samson and Coecke, Bob},
	booktitle = {Proceedings of the 19th Annual IEEE Symposium on Logic in Computer Science, 2004},
	date-added = {2025-09-23 23:29:07 +0200},
	date-modified = {2025-09-23 23:29:07 +0200},
	doi = {10.1109/LICS.2004.1319636},
	pages = {415-425},
	title = {A categorical semantics of quantum protocols},
	year = 2004,
	bdsk-url-1 = {https://doi.org/10.1109/LICS.2004.1319636}}

@article{chiribella_purification,
	author = {Chiribella, Giulio and D'Ariano, Giacomo Mauro and Perinotti, Paolo},
	date-added = {2025-09-23 23:28:45 +0200},
	date-modified = {2025-09-23 23:28:45 +0200},
	doi = {10.1103/PhysRevA.81.062348},
	issue = {6},
	journal = {Phys. Rev. A},
	pages = {062348},
	publisher = {American Physical Society},
	title = {Probabilistic theories with purification},
	volume = {81},
	year = {2010},
	bdsk-url-1 = {https://doi.org/10.1103/PhysRevA.81.062348}}

@article{Arrighi2024quantumnetworks,
	author = {Arrighi, Pablo and Durbec, Am{\'{e}}lia and Wilson, Matt},
	doi = {10.22331/q-2024-10-23-1508},
	issn = {2521-327X},
	journal = {{Quantum}},
	month = oct,
	pages = {1508},
	publisher = {{Verein zur F{\"{o}}rderung des Open Access Publizierens in den Quantenwissenschaften}},
	title = {Quantum networks theory},
	url = {https://doi.org/10.22331/q-2024-10-23-1508},
	volume = {8},
	year = {2024},
	bdsk-url-1 = {https://doi.org/10.22331/q-2024-10-23-1508}}

@article{Chiribella_2018,
	author = {Chiribella, Giulio},
	doi = {10.3390/e20050358},
	issn = {1099-4300},
	journal = {Entropy},
	month = may,
	number = {5},
	pages = {358},
	publisher = {MDPI AG},
	title = {Agents, Subsystems, and the Conservation of Information},
	url = {http://dx.doi.org/10.3390/e20050358},
	volume = {20},
	year = {2018},
	bdsk-url-1 = {http://dx.doi.org/10.3390/e20050358}}

@article{Eggeling_2002,
	author = {Eggeling, T and Schlingemann, D and Werner, R. F},
	doi = {10.1209/epl/i2002-00579-4},
	issn = {1286-4854},
	journal = {Europhysics Letters (EPL)},
	month = mar,
	number = {6},
	pages = {782--788},
	publisher = {IOP Publishing},
	title = {Semicausal operations are semilocalizable},
	url = {http://dx.doi.org/10.1209/epl/i2002-00579-4},
	volume = {57},
	year = {2002},
	bdsk-url-1 = {http://dx.doi.org/10.1209/epl/i2002-00579-4}}

@misc{mestoudjian2025selfcontainedproofartinwedderburntheorem,
	archiveprefix = {arXiv},
	author = {Octave Mestoudjian and Pablo Arrighi},
	eprint = {2507.08856},
	primaryclass = {math.RA},
	title = {A self-contained proof of the Artin-Wedderburn theorem in the case of finite-dimensional Von Neumann algebras},
	url = {https://arxiv.org/abs/2507.08856},
	year = {2025},
	bdsk-url-1 = {https://arxiv.org/abs/2507.08856}}

@book{Nielsen_Chuang_2010,
	author = {Nielsen, Michael A. and Chuang, Isaac L.},
	doi = {10.1017/CBO9780511976667},
	place = {Cambridge},
	publisher = {Cambridge University Press},
	title = {Quantum Computation and Quantum Information: 10th Anniversary Edition},
	year = {2010},
	bdsk-url-1 = {https://doi.org/10.1017/CBO9780511976667}}

@book{Wilde_2017,
	author = {Wilde, Mark M.},
	doi = {10.1017/9781316809976},
	edition = {2},
	place = {Cambridge},
	publisher = {Cambridge University Press},
	title = {Quantum Information Theory},
	year = {2017},
	bdsk-url-1 = {https://doi.org/10.1017/9781316809976}}

@article{Schumacher2004,
	archiveprefix = {arXiv},
	author = {Schumacher, B. and Werner, R. F.},
	eprint = {quant-ph/0405174},
	month = {5},
	title = {{Reversible Quantum Cellular Automata}},
	year = {2004}}

@inproceedings{arrighi2007,
	address = {Berlin, Heidelberg},
	archiveprefix = {arXiv},
	author = {Pablo Arrighi and Vincent Nesme and Reinhard Werner},
	booktitle = {Language and Automata Theory and Applications},
	doi = {10.1007/978-3-540-88282-4_8},
	eprint = {0711.3517},
	isbn = {978-3-540-88282-4},
	pages = {64--75},
	primaryclass = {quant-ph},
	publisher = {Springer Berlin Heidelberg},
	title = {One-dimensional quantum cellular automata over finite, unbounded configurations},
	year = {2008},
	bdsk-url-1 = {https://doi.org/10.1007/978-3-540-88282-4_8}}

@article{dariano2014,
	author = {D'Ariano, Giacomo and Manessi, Franco and Perinotti, P. and Tosini, Alessandro},
	doi = {10.1142/S0217751X14300257},
	journal = {International Journal of Modern Physics A},
	month = {07},
	title = {The Feynman problem and fermionic entanglement: Fermionic theory versus qubit theory},
	volume = {29},
	year = {2014},
	bdsk-url-1 = {https://doi.org/10.1142/S0217751X14300257}}

@inbook{Bianchi2023,
	address = {Singapore},
	author = {Bianchi, Eugenio and Livine, Etera R.},
	booktitle = {Handbook of Quantum Gravity},
	doi = {10.1007/978-981-19-3079-9_108-1},
	editor = {Bambi, Cosimo and Modesto, Leonardo and Shapiro, Ilya},
	isbn = {978-981-19-3079-9},
	pages = {1--29},
	publisher = {Springer Nature Singapore},
	title = {Loop Quantum Gravity and Quantum Information},
	url = {https://doi.org/10.1007/978-981-19-3079-9_108-1},
	year = {2023},
	bdsk-url-1 = {https://doi.org/10.1007/978-981-19-3079-9_108-1}}

@article{Eon2023relativistic,
	author = {Eon, Nathana{\"{e}}l and Molfetta, Giuseppe Di and Magnifico, Giuseppe and Arrighi, Pablo},
	doi = {10.22331/q-2023-11-08-1179},
	issn = {2521-327X},
	journal = {{Quantum}},
	month = nov,
	pages = {1179},
	publisher = {{Verein zur F{\"{o}}rderung des Open Access Publizierens in den Quantenwissenschaften}},
	title = {A relativistic discrete spacetime formulation of 3+1 {QED}},
	url = {https://doi.org/10.22331/q-2023-11-08-1179},
	volume = {7},
	year = {2023},
	bdsk-url-1 = {https://doi.org/10.22331/q-2023-11-08-1179}}

@article{ahmad2022,
	author = {Ali Ahmad, Shadi and Galley, Thomas D. and H\"ohn, Philipp A. and Lock, Maximilian P. E. and Smith, Alexander R. H.},
	doi = {10.1103/PhysRevLett.128.170401},
	issue = {17},
	journal = {Phys. Rev. Lett.},
	month = {Apr},
	numpages = {8},
	pages = {170401},
	publisher = {American Physical Society},
	title = {Quantum Relativity of Subsystems},
	url = {https://link.aps.org/doi/10.1103/PhysRevLett.128.170401},
	volume = {128},
	year = {2022},
	bdsk-url-1 = {https://link.aps.org/doi/10.1103/PhysRevLett.128.170401},
	bdsk-url-2 = {https://doi.org/10.1103/PhysRevLett.128.170401}}

@article{relativeSubsystems,
	author = {Castro-Ruiz, Esteban and Oreshkov, Ognyan},
	doi = {10.1038/s42005-025-02036-x},
	journal = {Communications Physics},
	month = {04},
	title = {Relative subsystems and quantum reference frame transformations},
	volume = {8},
	year = {2025},
	bdsk-url-1 = {https://doi.org/10.1038/s42005-025-02036-x}}

@article{FeynmanQCA,
	author = {Feynman,, R. P.},
	doi = {10.1007/BF01886518},
	journal = {Foundations of Physics (Historical Archive)},
	number = {6},
	pages = {507--531},
	publisher = {Springer},
	title = {{Quantum mechanical computers}},
	volume = {16},
	year = {1986},
	bdsk-url-1 = {https://doi.org/10.1007/BF01886518}}

@article{Giacomini2017,
	archiveprefix = {arXiv},
	author = {Giacomini, Flaminia and Castro-Ruiz, Esteban and Brukner, {\v{C}}aslav},
	doi = {10.1038/s41467-018-08155-0},
	eprint = {1712.07207},
	journal = {Nature Communications},
	number = {1},
	pages = {494},
	primaryclass = {quant-ph},
	title = {{Quantum mechanics and the covariance of physical laws in quantum reference frames}},
	volume = {10},
	year = {2019},
	bdsk-url-1 = {https://doi.org/10.1038/s41467-018-08155-0}}

@article{Bartlett2007,
	archiveprefix = {arXiv},
	author = {Bartlett, Stephen D. and Rudolph, Terry and Spekkens, Robert W.},
	doi = {10.1103/RevModPhys.79.555},
	eprint = {quant-ph/0610030},
	journal = {Review of Modern Physics},
	pages = {555--609},
	title = {{Reference frames, superselection rules, and quantum information}},
	volume = {79},
	year = {2007},
	bdsk-url-1 = {https://doi.org/10.1103/RevModPhys.79.555}}

@book{farenick,
	author = {Farenic, Douglas R.},
	doi = {10.1007/978-1-4613-0097-7},
	publisher = {Springer New York},
	title = {Algebras of Linear Transformations},
	year = 2001,
	bdsk-url-1 = {https://doi.org/10.1007/978-1-4613-0097-7}}

@article{viola2001,
	archiveprefix = {arXiv},
	author = {{Viola}, Lorenza and {Knill}, Emanuel and {Laflamme}, Raymond},
	doi = {10.1088/0305-4470/34/35/331},
	eprint = {quant-ph/0101090},
	journal = {Journal of Physics A Mathematical General},
	keywords = {Quantum Physics},
	month = sep,
	number = {35},
	pages = {7067-7079},
	primaryclass = {quant-ph},
	title = {{Constructing qubits in physical systems}},
	volume = {34},
	year = 2001,
	bdsk-url-1 = {https://doi.org/10.1088/0305-4470/34/35/331}}

@article{zanardi2001,
	archiveprefix = {arXiv},
	author = {{Zanardi}, Paolo},
	doi = {10.1103/PhysRevLett.87.077901},
	eid = {077901},
	eprint = {quant-ph/0103030},
	journal = {Physical Review Letters},
	keywords = {Quantum Physics},
	month = aug,
	number = {7},
	pages = {077901},
	primaryclass = {quant-ph},
	title = {{Virtual Quantum Subsystems}},
	volume = {87},
	year = 2001,
	bdsk-url-1 = {https://doi.org/10.1103/PhysRevLett.87.077901}}

@article{zanardi2003,
	archiveprefix = {arXiv},
	author = {{Zanardi}, Paolo and {Lidar}, Daniel A. and {Lloyd}, Seth},
	doi = {10.1103/PhysRevLett.92.060402},
	eid = {060402},
	eprint = {quant-ph/0308043},
	journal = {Physical Review Letters},
	keywords = {03.65.Ta, 03.65.Ud, 03.65.Yz, 03.67.Mn, Foundations of quantum mechanics, measurement theory, Entanglement and quantum nonlocality, Decoherence, open systems, quantum statistical methods, Entanglement production characterization and manipulation, Quantum Physics},
	month = feb,
	number = {6},
	pages = {060402},
	primaryclass = {quant-ph},
	title = {{Quantum Tensor Product Structures are Observable Induced}},
	volume = {92},
	year = 2004,
	bdsk-url-1 = {https://doi.org/10.1103/PhysRevLett.92.060402}}

@article{Stinespring:1955eig,
	author = {Stinespring, W. Forrest},
	doi = {10.1090/s0002-9939-1955-0069403-4},
	journal = {Proc. Am. Math. Soc.},
	number = {2},
	pages = {211--216},
	title = {{Positive functions on {\ensuremath{\mathit{C}}}*-algebras}},
	volume = {6},
	year = {1955},
	bdsk-url-1 = {https://doi.org/10.1090/s0002-9939-1955-0069403-4}}

@book{Maclane1971-MACCFT,
	author = {Saunders Maclane},
	doi = {10.1007/978-1-4757-4721-8},
	publisher = {Springer},
	title = {Categories for the Working Mathematician},
	year = {1971},
	bdsk-url-1 = {https://doi.org/10.1007/978-1-4757-4721-8}}

@article{Chardonnet_2025,
	author = {Chardonnet, Kostia and de Visme, Marc and Valiron, Beno{\^\i}t and Vilmart, Renaud},
	doi = {10.46298/lmcs-21(2:13)2025},
	issn = {1860-5974},
	journal = {Logical Methods in Computer Science},
	month = may,
	publisher = {Centre pour la Communication Scientifique Directe (CCSD)},
	title = {The Many-Worlds Calculus},
	url = {http://dx.doi.org/10.46298/lmcs-21(2:13)2025},
	volume = {Volume 21, Issue 2},
	year = {2025},
	bdsk-url-1 = {http://dx.doi.org/10.46298/lmcs-21(2:13)2025}}

@article{Vicary_2010,
	author = {Vicary, Jamie},
	doi = {10.1007/s00220-010-1138-0},
	issn = {1432-0916},
	journal = {Communications in Mathematical Physics},
	month = nov,
	number = {3},
	pages = {765--796},
	publisher = {Springer Science and Business Media LLC},
	title = {Categorical Formulation of Finite-Dimensional Quantum Algebras},
	url = {http://dx.doi.org/10.1007/s00220-010-1138-0},
	volume = {304},
	year = {2010},
	bdsk-url-1 = {http://dx.doi.org/10.1007/s00220-010-1138-0}}

@misc{allen2024cpinftybeyond2categoricaldilation,
	archiveprefix = {arXiv},
	author = {Robert Allen and Dominic Verdon},
	eprint = {2310.15776},
	primaryclass = {math.OA},
	title = {CP$^{\infty}$ and beyond: 2-categorical dilation theory},
	url = {https://arxiv.org/abs/2310.15776},
	year = {2024},
	bdsk-url-1 = {https://arxiv.org/abs/2310.15776}}

@article{Claeys_2024,
	author = {Claeys, Pieter W and Lamacraft, Austen and Vicary, Jamie},
	doi = {10.1088/1751-8121/ad653f},
	journal = {Journal of Physics A: Mathematical and Theoretical},
	month = {jul},
	number = {33},
	pages = {335301},
	publisher = {IOP Publishing},
	title = {From dual-unitary to biunitary: a 2-categorical model for exactly-solvable many-body quantum dynamics},
	url = {https://doi.org/10.1088/1751-8121/ad653f},
	volume = {57},
	year = {2024},
	bdsk-url-1 = {https://doi.org/10.1088/1751-8121/ad653f}}

@book{takesaki2001theory,
  title={Theory of Operator Algebras I},
  author={Takesaki, M.},
  isbn={9783540422488},
  lccn={79013655},
  series={Encyclopaedia of Mathematical Sciences},
  doi = {10.1007/978-1-4612-6188-9},
  year={2001},
  publisher={Springer Berlin Heidelberg}
}

@article{Kramer_2018,
   title={Operational locality in global theories},
   volume={376},
   ISSN={1471-2962},
   url={http://dx.doi.org/10.1098/rsta.2017.0321},
   DOI={10.1098/rsta.2017.0321},
   number={2123},
   journal={Philosophical Transactions of the Royal Society A: Mathematical, Physical and Engineering Sciences},
   publisher={The Royal Society},
   author={Krämer, Lea and del Rio, Lídia},
   year={2018},
   month=may, pages={20170321} }

@article{Yashin_2020,
   title={Properties of operator systems, corresponding to channels},
   volume={19},
   ISSN={1573-1332},
   url={http://dx.doi.org/10.1007/s11128-020-02693-7},
   DOI={10.1007/s11128-020-02693-7},
   number={7},
   journal={Quantum Information Processing},
   publisher={Springer Science and Business Media LLC},
   author={Yashin, V. I.},
   year={2020},
   month=may }

\appendix

\section{Algebra results}

We remind a result on the shape of von Neumann algebra homomorphisms \cite{vanrietvelde2025causaldecompositions1dquantum}.

\begin{proposition} \label{prop: kernels are blocks}
Let $f: \ca \subseteq \cl(\ch_{\ca}) \to \cb \subseteq \cl(\ch_{\cb})$ be a homomorphism of von Neumann algebras and let $\Atproj(\cz(\ca)) = \{ \pi_i\}_{i \in I}$. Let us define $I_f := \{i \in I | f(\pi_i) \neq 0\}$, then the orthogonal projectors $\mu := \sum_{i \in I_f} \pi_i$ and $\mub := \id - \mu = \sum_{i \in I \setminus I_f} \pi^i$ are such that:
    
    \begin{subequations}
        \be \label{eq: kernel 1} \ca = \mu \ca \oplus \mub \ca \, ; \ee
        \be \label{eq: kernel 2}\ker f = \mub \ca \, ; \ee
        \be \label{eq: kernel 3} f|_{\mu \ca} \textrm{ is injective.} \ee
    \end{subequations}
\end{proposition}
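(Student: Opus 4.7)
The plan is to leverage the block decomposition of $\ca$ given by Theorem \ref{blocks are factor}: writing $\ca = \bigoplus_i \pi_i \ca$ with each $\pi_i \ca$ a factor algebra (since $\cz(\pi_i \ca) = \CC \pi_i$). Partitioning the index set $I$ into $I_f$ and $I \setminus I_f$ gives (\ref{eq: kernel 1}) immediately, because $\mu \ca = \bigoplus_{i \in I_f} \pi_i \ca$ and $\mub \ca = \bigoplus_{i \notin I_f} \pi_i \ca$ are orthogonal ideals whose sum is $\ca$.

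For (\ref{eq: kernel 2}), the easy inclusion $\mub \ca \subseteq \ker f$ is direct: any $a \in \pi_i \ca$ with $i \notin I_f$ satisfies $a = \pi_i a$, hence $f(a) = f(\pi_i) f(a) = 0$. For the reverse inclusion, I would first establish that $\{f(\pi_i)\}_{i \in I}$ is itself a family of pairwise orthogonal self-adjoint projectors in $\cb$: this follows from $f$ being a unital $*$-algebra homomorphism and the fact that the $\pi_i$ are themselves orthogonal and pairwise orthogonal projectors ($f(\pi_i)f(\pi_j) = f(\pi_i \pi_j) = \delta_{ij} f(\pi_i)$). The second and crucial ingredient is that for $i \in I_f$, the restriction $f|_{\pi_i \ca}$ is injective: since $\pi_i \ca$ is a finite-dimensional factor, Artin-Wedderburn (Theorem \ref{AW}) identifies it with a matrix algebra $M_n(\CC)$, whose only two-sided ideals are $\{0\}$ and itself; therefore $\ker(f|_{\pi_i \ca})$ is either everything or trivial, and since $f(\pi_i) \neq 0$ by definition of $I_f$, it must be trivial.

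Given these two ingredients, suppose $f(a) = 0$ and decompose $a = \sum_i \pi_i a$. Using centrality of $\pi_i$ and orthogonality of the $f(\pi_i)$, I would apply $f(\pi_i)(\cdot)f(\pi_i)$ to both sides to extract $f(\pi_i a) = f(\pi_i) f(a) f(\pi_i) = 0$ for each $i$. For $i \in I_f$, injectivity of $f|_{\pi_i \ca}$ then forces $\pi_i a = 0$, so $\mu a = 0$ and $a = \mub a \in \mub \ca$. Finally, (\ref{eq: kernel 3}) falls out of (\ref{eq: kernel 1}) and (\ref{eq: kernel 2}) for free, since $\ker(f|_{\mu \ca}) = \mu \ca \cap \ker f = \mu \ca \cap \mub \ca = \{0\}$.

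The main (mild) obstacle is the simplicity argument for factor blocks. It is essentially standard once one has Artin-Wedderburn in hand, but it is worth stating cleanly as ``the kernel of a $*$-homomorphism out of a factor is a two-sided $*$-ideal, and a finite-dimensional factor has no non-trivial such ideals''. Everything else is bookkeeping with the block decomposition and with the orthogonality of the $f(\pi_i)$.
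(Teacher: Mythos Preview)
Your proof is correct and rests on the same core idea as the paper's: each block $\pi_i\ca$ is a simple algebra, so $f$ restricted to it is either injective or zero. The difference is purely in how that simplicity is invoked. The paper argues by contradiction: starting from a nonzero $A \in \mu\ca \cap \ker f$, it passes to $AA^\dagger$, applies the spectral theorem to locate a rank-$1$ projector $\nu_0$ in $\ker f$, then uses matrix units $\ketbra{m}{n} = \ketbra{m}{0}\,\nu_0\,\ketbra{0}{n}$ to show the whole block dies. You instead cite the simplicity of $M_n(\CC)$ directly as a consequence of Artin--Wedderburn, and then read off $f(\pi_i a) = f(\pi_i)f(a) = 0$ componentwise. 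Your route is shorter and avoids the spectral detour; the paper's is more self-contained in that it re-derives the simple-ideal structure on the spot. Both arrive at (\ref{eq: kernel 3}) identically, from $\mu\ca \cap \mub\ca = \{0\}$.
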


\begin{proof}
The proof of \ref{eq: kernel 1} is immediate by writing the identity operator as the sum of the $\pi_i$ and splitting it into two parts, for $\mu$ and $\mub$ respectively. By definition of $I_f$ and $\mu$, the reverse inclusion of \ref{eq: kernel 2} is also immediate. Let us now prove that $\ker f \subseteq \mub \ca $.

Suppose that it is not the case and that there exists an $A \in \ker f \setminus \mub \ca$. Because $\mub \ca \subseteq \ker f$, we have that $f(\mu A) = f(\mu A) + f(\mub A) = f(A) = 0 $ and thus without loss of generality we can suppose that $0 \neq A \in \mu \ca \cap \ker f$.

It follows that $A A^{\dagger}$ is a non-zero positive element of $\mu \ca$ which is isomorphic to a $\bigoplus_{i \in I_f} \cl(\ch_{\ca}^{i})$ by theorem \ref{blocks are factor}. By spectral theorem $A A^{\dagger}$ can be decomposed as $A A^{\dagger} = \sum_k \alpha_k \nu_k$ where the $\nu_k$ are non-zero orthogonal and pairwise orthogonal rank-$1$ projectors of $\mu \ca$ and the $\alpha_k$ are positive real numbers. Because $f$ is an homomorphism of von Neumann algebras, the $f(\nu_k)$ inherit the properties of the $\nu_k$ and are also a family of orthogonal and pairwise orthogonal rank-$1$ projectors. We can then compute that
\be 
0 = f(A)f(A^{\dagger}) = f(A A^{\dagger}) = f (\sum_k \alpha_k \nu_k) = \sum_k \alpha_k f(\nu_k)
\ee
which implies that for all $k$, if $\alpha_k \neq 0$, then $f(\nu_k) = 0$. And because $A \neq 0$, at least one of the $f(\nu_k)$ is null. Without loss of generality let us suppose that it is $f(\nu_0)$. As $\nu_0 \in \mu \ca$, $\nu_0 = \mu \nu_0 = \sum_{i \in I_f} \pi_i \nu_0$ and because $\nu_0$ is of rank $1$ there exists a unique $i \in I_f$ such that $\nu_0 \in \pi_i \ca$.

Identifying $\pi_i \ca$ with its representation $\cl(\ch_{\ca}^{i})$ given by theorem \ref{blocks are factor}, we can chose an orthonormal basis $\{ \ket{j} \}$ of $\ch_{\ca}^{i}$ such that $\nu_0 = \ketbra{0}{0}$. Observe that $\pi_i \ca$ is then spanned by the set $\{ \ketbra{m}{n} \}$ but also that 
\be
f(\ketbra{m}{n}) = f(\ketbra{m}{0})f(\ketbra{0}{0})f(\ketbra{0}{n}) =  f(\ketbra{m}{0})0f(\ketbra{0}{n}) = 0
\ee
It follows that $\pi_i \ca \subseteq \ker h$ and thus that $\pi_i \in \ker h$, which is not possible as $i \in I_f$. This proves that $\ker f \setminus \mub \ca = \emptyset$ and that $\ker f = \mub \ca$. It also follows that $\ker f|_{\mu \ca} = \mu \ca \cap \ker f  = \mu \ca \cap \mub \ca = \{ 0 \}$ proving \ref{eq: kernel 3}.
\end{proof}

\section{Preorder preservation of the AW representation}\label{bigproof}

We now give the proof of the following proposition, stating that any choice of a canonical splitting map for every $\ca \in \vnalg(\ch)$ will be a preorder preserving map from $\vnalg(\ch)$ to $\spl(\ch)$.

\begin{proposition}
Let $\ca_S \subseteq \ca_B$ be two von Neumann subalgebras of $\cl(\ch)$. Let $\chi : \ch \rightarrow \ch_{\LL}^B \otimes \ch_{\R}^B$ and $\zeta : \ch \rightarrow \ch_{\LL}^S \otimes \ch_{\R}^S$ be canonical splitting maps for $\ca_B$ and $\ca_S$ respectively. Then $\zeta \sqsubseteq \chi$.
\end{proposition}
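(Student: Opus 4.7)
The plan is to refine the Artin-Wedderburn decomposition of $\ca_B$ into one for $\ca_S$, which will make the nested splitting structure explicit and allow us to build $\bigcirc$ and $\tikzcircle{4pt}$ essentially for free.

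First, I would fix an Artin-Wedderburn decomposition for $\ca_B$, giving $\ch \cong \bigoplus_i \ch_{\LL,B}^i \otimes \ch_{\R,B}^i$ and $\ca_B \cong \bigoplus_i \cl(\ch_{\LL,B}^i) \otimes \id_{\ch_{\R,B}^i}$, which realises $\chi$ (up to local unitaries). The key structural observation is that the atomic projectors $\pi_i^B$ of $\ca_B$ lie in $\cz(\ca_B) \subseteq \ca_B$ and hence commute with every element of $\ca_S \subseteq \ca_B$; so $\pi_i^B \in \ca_S'$ and each block $\ch_{\LL,B}^i \otimes \ch_{\R,B}^i$ is $\ca_S$-invariant. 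Since $\ca_S$ restricted to this block sits inside $\ca_B$ restricted to this block, namely $\cl(\ch_{\LL,B}^i) \otimes \id_{\ch_{\R,B}^i}$, it must take the form $\tilde{\ca}_S^i \otimes \id_{\ch_{\R,B}^i}$ for some subalgebra $\tilde{\ca}_S^i \subseteq \cl(\ch_{\LL,B}^i)$.

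Next I would apply Artin-Wedderburn to each $\tilde{\ca}_S^i$, yielding $\ch_{\LL,B}^i \cong \bigoplus_j \ch_{\LL,S}^{i,j} \otimes \ch_{\R,S}^{i,j}$ with $\tilde{\ca}_S^i \cong \bigoplus_j \cl(\ch_{\LL,S}^{i,j}) \otimes \id_{\ch_{\R,S}^{i,j}}$. Splicing the two decompositions gives
\begin{equation}
\ch \;\cong\; \bigoplus_{i,j} \ch_{\LL,S}^{i,j} \otimes \bigl(\ch_{\R,S}^{i,j} \otimes \ch_{\R,B}^i\bigr), \qquad \ca_S \;\cong\; \bigoplus_{i,j} \cl(\ch_{\LL,S}^{i,j}) \otimes \id_{\ch_{\R,S}^{i,j} \otimes \ch_{\R,B}^i},
\end{equation}
which is a valid Artin-Wedderburn decomposition of $\ca_S$. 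Call the associated canonical splitting map $\tilde{\zeta}$. By the remark that canonical splitting maps for a fixed algebra are unique up to local unitaries on the left and right, $\zeta$ differs from $\tilde{\zeta}$ by such unitaries; these can be absorbed into $\bigcirc$ and $\tikzcircle{4pt}$ at the end, so it suffices to establish $\tilde{\zeta} \sqsubseteq \chi$.

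I would then take $\ch_{\MM} = \bigoplus_{i,j} \ch_{\R,S}^{i,j}$ and define $\bigcirc : \bigoplus_i \ch_{\LL,B}^i \to (\bigoplus_{i,j} \ch_{\LL,S}^{i,j}) \otimes \ch_{\MM}$ block-by-block using the identifications $\ch_{\LL,B}^i \cong \bigoplus_j \ch_{\LL,S}^{i,j} \otimes \ch_{\R,S}^{i,j}$ composed with the obvious embedding, and $\tikzcircle{4pt} : \bigoplus_{i,j} \ch_{\R,S}^{i,j} \otimes \ch_{\R,B}^i \to \ch_{\MM} \otimes \bigoplus_i \ch_{\R,B}^i$ as the natural splitting of each summand. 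Both are isometries by orthogonality of the blocks. The verification of $(\id \otimes \tikzcircle{4pt})\tilde{\zeta} = (\bigcirc \otimes \id)\chi$ is then just a direct computation on a vector $\ket{x_{i,j,r}^L} \otimes \ket{x_{i,j,r}^R} \otimes \ket{x_i^B}$ living in the $(i,j)$-summand: both sides output $\ket{x_{i,j,r}^L} \otimes \ket{x_{i,j,r}^R} \otimes \ket{x_i^B}$ in the appropriate triple tensor product.

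The only genuine obstacle is the first structural step — producing a single AW decomposition of $\ch$ that simultaneously respects both $\ca_B$ and $\ca_S$. Once the atomic projectors of $\ca_B$ are shown to lie in $\ca_S'$, this becomes a two-stage application of Artin-Wedderburn; everything else is routine bookkeeping of how the refined blocks $(i,j)$ relate to the coarser blocks $i$, followed by a straightforward verification on basis elements.
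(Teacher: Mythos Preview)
Your argument has a genuine gap at the splicing step. You claim that
\[
\ch \;\cong\; \bigoplus_{i,j} \ch_{\LL,S}^{i,j} \otimes \bigl(\ch_{\R,S}^{i,j} \otimes \ch_{\R,B}^i\bigr), \qquad \ca_S \;\cong\; \bigoplus_{i,j} \cl(\ch_{\LL,S}^{i,j}) \otimes \id
\]
is an Artin--Wedderburn decomposition of $\ca_S$, but this is false in general. You correctly observe that $\pi_i^B \in \ca_S'$, but you do \emph{not} have $\pi_i^B \in \ca_S$; hence the map $a \mapsto (\pi_i^B a)_i$ is an injective $*$-homomorphism $\ca_S \hookrightarrow \bigoplus_i \tilde{\ca}_S^i$ that need not be surjective. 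Your spliced decomposition is an AW decomposition of $\bigoplus_i \tilde{\ca}_S^i$, which can be strictly larger than $\ca_S$. Concretely, take $\ch = \CC^2 \oplus \CC^2$, $\ca_B = \cl(\CC^2) \oplus \cl(\CC^2)$, and $\ca_S = \{A \oplus A : A \in \cl(\CC^2)\}$. Then $\ca_S$ is a factor with a single AW block, yet your construction produces two $(i,j)$-blocks and the resulting $\tilde{\zeta}$ satisfies $\stloc_{\LL}(\tilde{\zeta}) = \ca_B \neq \ca_S$. So $\tilde{\zeta}$ is not a canonical splitting map for $\ca_S$, and the ``up to local unitaries'' reduction to $\tilde{\zeta}$ collapses.

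The paper's proof avoids this by running the refinement in the \emph{opposite} direction: it starts from a maximal family $\{Q_j\}$ of mutually orthogonal projectors in $\ca_S$ (these are rank-one on the left factors, much finer than atomic projectors) and then refines each $Q_j$ \emph{inside} $\ca_B$ to a family $\{\tilde P_i^j\}$ maximal for $\ca_B$. This works precisely because $\ca_S \subseteq \ca_B$ guarantees the $Q_j$ already live in $\ca_B$, whereas the $\pi_i^B$ need not live in $\ca_S$. The non-trivial content is making the refinements across different $Q_j$ cohere (the construction of the $\tilde P_i^j$ via conjugation by a fixed $A \in \ca_S$), after which the comprehension isometries can be written down explicitly. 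Your proposal could in principle be repaired by grouping the $(i,j)$-blocks according to the true atomic projectors of $\ca_S$ and then identifying the left spaces $\ch_{\LL,S}^{i,j}$ across different $i$ within each group --- but carrying that out is exactly the work the paper's $\tilde P_i^j$ construction performs.
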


\begin{proof}

The proof of this proposition is based on the constructive proof of Artin-Wedderburn theorem (in the case of finite dimensional von Neumann algebras) given in \cite{mestoudjian2025selfcontainedproofartinwedderburntheorem}. We will first explain, without going into full details, how these representation unitaries can be constructed and then prove the comprehension of the splitting maps.

Let $P=\{P_i\}_{i=1\ldots p}$ be a family of non-zero orthogonal and pairwise orthogonal projectors of $\mathcal{A}$, i.e. \\
$(i)$ $\forall i=1\ldots p \; 0 \neq P_i = P_i^{\dagger} \in\mathcal{A}$; \\
$(ii)$ $\forall i,j=1\ldots p \; P_iP_j=\delta_{ij}P_i\in\mathcal{A}$. \\
Notice that such a family always exists as $\{ \id \} \subseteq \mathcal{A}$ is one.\\
Moreover we will take $P$ to be a maximal among such families, i.e. so that there is no set $Q=\{Q_i\}$ verifying conditions $(i),(ii)$ and such that $\mathcal{P}\subset \mathcal{Q}$, with
$\mathcal{P},\mathcal{Q}$ the algebraic closures of $P,Q$. As the $P_i$ are orthogonal projectors, we can understand maximality as saying that no $P_i$ can be decomposed as the sum of orthogonal projectors of $\mathcal{A}$ that would be orthogonal one to the other and to all the other $P_j$.
Finally note that:\\
$(iii)$ $\sum_{i=1\ldots p} P_i = \id$,\\
otherwise $\id-\sum_{i=1\ldots p} P_i$ may be added to the set.\vspace{1mm}\\

Such a family has the following properties:

\begin{itemize}
\item The following relation $P_i R P_j \Leftrightarrow P_i \ca P_j \neq 0$ is an equivalence relation. Projectors of the same class $I$ are in the same block $\cl(\ch_{\LL}^{I}) \otimes \id_{\ch_{\R}^{I}}$ of the Artin-Wedderburn decomposition. Different blocks are in direct sum.
\item  For all $i$, $P_i\mathcal{A}P_i=\mathbb{C}P_i$. This comes from the fact that the family is maximal. The $P_i$ indicate redudancies that will be seen as tensoring with the identity in a block $\cl(\ch_{\LL}^{I}) \otimes \id_{\ch_{\R}^{I}}$ of the decomposition.
\item Projectors of the same class have the same rank
\end{itemize}

Because the $P_i$ of a same class $I$ form an orthogonal, complete set of projectors of equal dimension it is possible to find a unitary $U_{I}$ that respectively sends them to $\ketbra{i}{i} \otimes \id_{\ch_{\R}^{I}}$ (where $\ch_{\R}^{I}$ is of dimension $q$ their common rank). This $U_I$ is the data of a mapping from $\Im(P_i)$ into $\ch_{{\R}_I}$ for for each $i \in I$ but nothing forces yet the different mappings to match with respect to what can be done in $\ca$. This is why in the end the unitary for a block (or equivalence class) is the following

\begin{equation}
    W_I = \sum_i (\ketbra{i}{1_I} \otimes \id_{\ch_{\R}^{I}})UP_{1_I}A P_i
\end{equation}

and the general mapping is the sum of these mapping for each blocks. One can easily show that any representation unitary $W$ is of this form. It suffices to take $P_i$ as the preimage of $ \ketbra{i}{i} \otimes \id_{\ch_{\R}^{I}}$ and U as the mapping on the left part of the tensor. 

With words, one could say that a splitting map of this form splits the space into the direct sum of $\Im(P_i)$'s, and then sends to the left which $\Im(P_i)$ you are in, and to the right how you are mapped into a space $\ch_{{\LL}_I}$ that is the same for all $P_i$ of a same equivalence class $I$. Let's also add that the mapping to the right is coherent with what the algebra $\ca$ can do, i.e. an element of $\Im(P_i)$ and an element of $\Im(P_k)$ that are related by an operation of the considered algebra should have the same mapping in $\ch_{{\LL}_I}$ up to a scalar.

Let us first show that the choice of the maximal family of non-zero, orthogonal and pairwise orthogonal projectors for a fixed subalgebra $\ca \subseteq \cl(\ch)$ doesn't matter. Indeed, we will prove in Lemma \ref{p2p} that any two such families are equal up to unitary and in Lemma \ref{change of chi} that a canonical splitting map for $\ca$ can be written $\chi = \sum_I \sum_i (\ketbra{i}{1_I} \otimes \id_{\ch_{\R}^{I}})UP_{1_I}A P_i$ no matter the choice of such a family of projectors $\{ P_i \}$.

\begin{lemma}\label{p2p}
Let $\ca$ be a subalgebra of $\cl(\ch)$. Let $\{P_i\}$ and $\{Q_j\}$ be two maximal families of non-zero, orthogonal and pairwise orthogonal projectors of $\ca$. Then there exists a unitary $U$ in $\ca$ such that, up to reordering the projectors, $P_i = U Q_i U^{\dagger}$.
\end{lemma}

\begin{proof}
Let a $\{P_i\}$ and $\{Q_j\}$ be two maximal families of non-zero, orthogonal and pairwise orthogonal projectors of $\ca$. We have seen in (the proof of) Theorem \ref{AW} that one could construct a unitary $U$ such that $U\mathcal{A} U^\dagger = \bigoplus_I \cl(\ch_{\LL}^{I}) \otimes \id_{\ch_{\R}^{I}}$ and that for a $P_i$ in the equivalence class $I$, $U P_i U^{\dagger} = \ketbra{i}{i} \otimes \id_{\ch_{\R}^{I}}$. Then, because the $Q_j$ are elements of $\ca$ and $U$ is unitary, we have that the $U Q_j U^{\dagger} = \sum_I Q_j^{I} \otimes \id_{\ch_{\R}^{I}}$ are a maximal family of non-zero orthogonal and pairwise orthogonal projectors of $\bigoplus_I \cl(\ch_{\LL}^{I}) \otimes \id_{\ch_{\R}^{I}}$. Because the sum is orthogonal, properties of the $Q_j$ transfer to the $Q_j^{I}$ which we can deduce are also orthogonal projectors. Moreover this also induces that for any two projectors $Q_{j_1}$ and $Q_{j_2}$ and any equivalence class $I$, $Q_{j_1}^{I}$ and $Q_{j_2}^{I}$ are orthogonal to each other. Non-zeroness tells us that for a fixed $j$, at least one of the $Q_j^{I}$ is non zero, and maximality tells us that at most one is non-zero, otherwise we could replace $U Q_j U^{\dagger}$ in the family by the different non-zero $Q_j^{I} \otimes \id_{\ch_{\R}^{I}}$ appearing in its decomposition and which are orthogonal projectors, orthogonal to each other and also to all other $U Q_k U^{\dagger}$ as we have seen that pairwise orthogonality is true in each block. Finally maximality also forces that $Q_j^{I}$ is a rank 1 projector which we can then write $\ketbra{\tilde{j}}{\tilde{j}}$. In particular this tells us that there is the same number of $P_i$ and $Q_j$ in each block and that they are all projectors of the same rank; and we can thus identify the $i$ indices with the $j$ ones. Let $V_I$ be for each block $I$ the unitary such that $V_I\ket{i} = \ket{\tilde{i}}$. Then  $U^{\dagger}(\sum_I V_I\otimes \id_{\ch_{\R}^{I}}) U P_i U^{\dagger}(\sum_I V_I^{\dagger} \otimes \id_{\ch_{\R}^{I}})U = Q_i$. Note that $(\sum_I V_I^{\dagger} \otimes \id_{\ch_{\R}^{I}})$ is a unitary of $\bigoplus_I \cl(\ch_{\LL}^{I}) \otimes \id_{\ch_{\R}^{I}}$ and thus that $U^{\dagger}(\sum_I V_I^{\dagger} \otimes \id_{\ch_{\R}^{I}})U$ is a unitary of $\ca$.
\end{proof}

It follows that a canonical splitting map for $\ca$ can be decomposed with respect to any such family of non-zero, orthogonal and pairwise orthogonal projectors of $\ca$.

\begin{lemma}
\label{change of chi}
Let $\ca$ be a subalgebra of $\cl(\ch)$ and $\chi$ a canonical splitting map for $\ca$. Let $\{Q_j\}$ be any maximal family of non-zero, orthogonal and pairwise orthogonal projectors of $\ca$. Then there exists a unitary $U$, an element $A$ of $\ca$ and a basis $\{ \ket{i} \}$ of $\chlc$ such that 
\begin{equation}
\chi = \sum_I \sum_i (\ketbra{i}{1_I} \otimes \id_{\ch_{\R}^{I}})UQ_{1_I}AQ_i
\end{equation}
\end{lemma}

\begin{proof}
As seen previously, $\chi$ can be written as $ \chi = \sum_I \sum_i (\ketbra{\tilde{i}}{1_I} \otimes \id_{\ch_{\R}^{I}})\tilde{U}P_{1_I}\tilde{B} P_i$ with $\tilde{U}$ a unitary, $\tilde{A}$ an element of $\ca$ and $\{ \ket{\tilde{i}} \}$ a basis of $\chlc$. By lemma \ref{p2p} we know that there exists a unitary $V$ in $\ca$ such that $P_i = V Q_i V^{\dagger}$. It follows that $\chi = \sum_I \sum_i (\ketbra{\tilde{i}}{1_I} \otimes \id_{\ch_{\R}^{I}}) \tilde{U} V Q_{1_I} V^{\dagger}\tilde{A}VQ_iV^{\dagger} $. By defining $ U = \tilde{U}V$ and $A = V^{\dagger}\tilde{A}V$ (which is indeed such that for all $i \in I$, $Q_{1_I} A Q_i \neq 0$ and in particular such that $Q_{1_I} A Q_i A^{\dagger} Q_{1_I} = Q_{1_I}$ because $\tilde{A}$ has this property with respect to the $P_i$), we get that $\chi = \sum_I \sum_i (\ketbra{\tilde{i}}{1_I} \otimes \id_{\ch_{\R}^{I}})UQ_{1_I}A Q_i V^{\dagger}$. Moreover, as $V \in \ca = \stloc(\chi)$, thus there exists $V_{\LL} \in \cons(\chi)$ such that $(V_{\LL} \otimes \id_{\chrc})  \chi = \chi V$ (and  $(V_{\LL}^{\dagger} \otimes \id_{\chrc}) \chi = \chi V^{\dagger}$). It follows that $\chi = \chi  V^{\dagger} V = (V_{\LL}^{\dagger} \otimes \id_{\chrc}) \chi V$ and by defining $\ket{i} = V_{\LL} \ket{\tilde{i}}$ we get that $\chi = \sum_I \sum_i (\ketbra{i}{1_I} \otimes \id_{\ch_{\R}^{I}})UQ_{1_I}A Q_i V^{\dagger}$ which concludes the proof.
\end{proof}

Let us now prove Proposition \ref{inclusion implies comprehension}. We have that $\zeta =\sum_J \sum_j (\ketbra{j}{1_J} \otimes \id_{\ch_{\R}^{J}})VQ_{1_J}B Q_j$  with the $\{Q_j\}$ a maximal family of non-zero, orthogonal and pairwise orthogonal projectors of $\ca_S$. Let us define the right family of such projectors of $\ca_B$ that will make our computation possible. If the family $\{ Q_j \}$ is also maximal for $\ca_B$ (which doesn't necessarily mean that $\ca_B = \ca_S$, as we can have $Q_j \ca_B Q_k \neq Q_j \ca_S Q_k$), we define $\{P_i\} = \{Q_j\}$. Otherwise, if $\{Q_j\}$ is not maximal for $\ca_B$, there exists a maximal family $\{P_i\}$ of such projectors for $\ca_B$ such that the $P_i$ generates the $Q_j$. Because the $P_i$ are orthogonal, pairwise orthogonal and because the $Q_j$ also form a family of projectors of $\ca_B$ with the same properties, just not a maximal one, it follows that each of the $Q_j$ can be written as a sum of some of the $P_i$, all with coefficient $1$; such that each $P_i$ appears in the decomposition of precisely one $Q_j$. This means that we can reindex the $P_i$ as $P_i^j$ (for a fixed $j$, the $i$ index varies from $1$ to a number $n_j$ and indicates which projector of the decomposition we are considering). However this $\{P_i^j\}$ family might not have all the properties we need thus we will build a new one $\{\tilde{P}_i^j\}$.

Let us work in one equivalence class $J$ of the $Q_j$ (as projectors of $\ca_S$). Consider $Q_1 = \sum_i P_i^1$, a projector of the equivalence class; we define $\{ \tilde{P}_i^1 \}=  \{P_i^1 \}$. Let us now fix a $j \in J$; by definition of the equivalence class there exists $A \in \ca_S$ such that $Q_1 A Q_j \neq 0$. If we multiply $A$ by the right scalar we can even have that $Q_j A^{\dagger} Q_1 A Q_j = Q_j$ which means that we can use the decomposition of $Q_1$ to get a new decomposition of $Q_j$ as $Q_j = \sum_i Q_j A^{\dagger} P_i^1 A Q_j$. And we define $\tilde{P}_i^j =  Q_j A^{\dagger} P_i^1 A Q_j$. Doing that for each $j \in J$ and then for each equivalence class $J$ defines a new family $\{\tilde{P}_i^j\}$ of elements of $\ca_B$. Morally we are making sure that each $Q_j$ is decomposed in the "same way" and that each element of its decomposition $\tilde{P}_i^j$ precisely corresponds (for what the algebra $\ca_B$ can do) to an element $\tilde{P}_i^k$ of the decomposition of another $Q_k$. Also note that the $A \in \ca_S$ that we choose is a priori different for each pair $Q_j,Q_k$ but one can prove that there exists a, this time unique, $A$ that does the job for all pairs $Q_j,Q_k$. We choose this $A$ for the rest of the proof. It remains to show that $\{\tilde{P}_i^j\}$ is a maximal family of non-zero, orthogonal and pairwise orthogonal projectors of $\ca_B$.
 
First we Observe that for all $i_1 \neq i_2$ and for all $B \in \ca_S$, 
\begin{equation}
\label{diff_i}
\begin{split}
    \tp_{i_1}^1 B \tp_{i_2}^1 & = \tp_{i_1}^1 Q_1 B Q_1 \tp_{i_2}^1 \\
    & = \alpha \tp_{i_1}^1 Q_1 \tp_{i_2}^1 \\
    & = \alpha \tp_{i_1}^1 \tp_{i_2}^1 \\
    & = 0
\end{split}    
\end{equation}

It follows that the $\tp_i^j$ are pairwise orthogonal projectors. Indeed if $j \neq k$

\begin{equation}
    \tp_{i_1}^j \tp_{i_2}^k = \tp_{i_1}^j Q_j Q_k \tp_{i_2}^k = 0
\end{equation}

by orthogonality of $Q_j$ and $Q_k$. Now suppose that $j = k$,

\begin{equation}
    \tp_{i_1}^j \tp_{i_2}^j = Q_j A^{\dagger} P_{i_1}^1 A Q_j A^{\dagger} P_{i_2}^1 A Q_j
\end{equation}

If $i_1 \neq i_2$, by \ref{diff_i} we have that $P_{i_1}^1 A Q_j A^{\dagger} P_{i_2}^1 = 0$. However in the case where $i_1 = i_2$, because the $P_i^1$ are elements of $\{P_i\}$ they are such that for all $B \in \ca_B$ there exists a $\lambda \in \mathbb{C}$ such that $P_i^1 B P_i^1 = \lambda P_i^1$; and we get that 

\begin{equation}
\begin{split}
    \tp_{i_1}^j \tp_{i_2}^j & = Q_j A^{\dagger} P_{i_1}^1 A Q_j A^{\dagger} P_{i_2}^1 A Q_j \\
    & = \lambda Q_j A^{\dagger} P_{i_1}^1 A Q_j \\
    & =  \lambda \tp_{i_1}^j
\end{split}
\end{equation}

Here, $\lambda = 1$ as

\begin{equation}
\begin{split}
    P_{i_1}^1 A Q_j A^{\dagger} P_{i_1}^1 & = P_{i_1}^1 Q_1 A Q_j A^{\dagger} Q_1 P_{i_1}^1 \\
    & = \tp_{i_1}^1 Q_1 \tp_{i_1}^1 \\
    & = \tp_{i_1}^1
\end{split}
\end{equation}

by the choice of $A$. This proves that the $\tp_{i}^j$ form a family of pairwise orthogonal projectors. The $\tp_{i}^j$ are moreover orthogonal by construction. It remains to prove that this family is maximal. Suppose that this new family is not maximal, it implies that one of the $\tp_{i}^j$ could be decomposed as $\tp_{i}^j = R_1 + R_2$ with these two new projectors being orthogonal, orthogonal to each other as well as to any $\tp_{i'}^{j'}$. This would give a more fine-grained decomposition of $Q_j$ and by applying the above reasoning in the reverse direction (transferring this time the decomposition of $Q_j$ to $Q_1$) we could refine the decomposition of $Q_1$ which is already a decomposition given by a maximal family and cannot be fine-grained! Note that with this decomposition of the $Q_j$ as a sum of $\tp_i^j$, if $Q_j$ and $Q_k$ are projectors of the same equivalence class $J$, every $\tp_i^j$ has a unique corresponding projector in the decomposition of $Q_k$: $\tp_i^k$, such that $\tp_i^j \ca_S \tp_i^k \neq 0$. Every $Q_j$ has the same number of terms in his $\tp_i^j$ decomposition and the $\{\tp_i^j,\tp_i^k,\tp_i^l,...\}$ form equivalence classes that can be represented by the index $i$.

We just proved that $\{\tp_i^j \}$ was a maximal family of non-zero, orthogonal and pairwise orthogonal projectors and thus, by Lemma \ref{change of chi}, $\chi$ can be written as 

\begin{equation}
    \chi = \sum_J \sum_{j \in J} \sum_{i = 1,...,n_j} (\ketbra{i,j}{u_{1_{ij}},d_{1_{ij}}} \otimes  \id_{\ch_{\R}^{I}})U\tp_{d_{1_{ij}}}^{u_{1_{ij}}} B \tp_i^{j}
\end{equation}

where $1_{ij}$ is the pair of up and down indices of $\tp_{d_{1_{ij}}}^{u_{1_{ij}}}$, a chosen representative of the class of $\tp_i^{j}$ for the block equivalence relation of $\{\tp_i^{j}\}$ as elements of $\ca_B$, i.e. $\tp_i^{j} R_B \tp_k^{l} \Leftrightarrow \tp_i^{j} \ca_B \tp_k^{l} \neq 0$. It shouldn't be confused with $1_J$ the index of the chosen representative of all $Q_j$ in the $J$-class of the equivalence relation $Q_j R_S Q_k \Leftrightarrow Q_j \ca_S Q_j \neq 0$. Which are both different from the equivalence relation $\tp_i^{j} R \tp_k^{l} \Leftrightarrow \tp_i^{j} \ca_S \tp_k^{l} \neq 0$ where, as we have seen, $\tp_i^{j}$ and $\tp_k^{l}$ are in the same class if and only if $j$ and $l$ are in the same class for $R_S$ and $i = k$.

We remind that $\zeta$ is of the form

\begin{equation}
    \zeta = \sum_J \sum_{j \in J} (\ketbra{j}{1_J} \otimes \id_{\ch_{\R}^{J}})VQ_{1_J} A Q_j 
\end{equation}

It remains to show that $\zeta \sqsubseteq \chi$ which we will do be constructing the isometries $\bigcirc$ and $\tikzcircle{4pt}$.

The white dot acts on the left output of $\chi$, so its input is a $\ket{i,j}$ telling us in which $\tp_i^{j}$ we are. From this information it will send the $j$ on the left, telling in which $Q_j$ we are (and this information matches the left side of $\zeta$). And it sends the $i$ on the right, telling in which class of $\tp_i^{j}$ (for $\ca_S$) we are.

Now the black dot. It acts on the right output of $\zeta$ so it acts on how elements of $\oplus_j \Im(Q_j)$ have been mapped into $\oplus_J \ch_{{\LL}_J}$. However it doesn't have in input the information that goes on the left of $\zeta$, i.e. from which $\Im(Q_j)$ it comes precisely. The difficulty is then to show that this information doesn't matter to recover how the element of $\ch$ would have been mapped on the right through $\chi$. The idea to construct the black dot is the following: we don't know the left part of $\ket{y}$ so we might as well map it to $\ket{1_J} \otimes \ket{y}$. This is isometric. Then we undo $\zeta$ and do $\chi$ which is isometric because applied on $\Im(\zeta)$. Finally as we didn't know in which $Q_j$ we were in the first place and we have made up the left part $\ket{1_J}$, the left part $\ket{i,j}$ that we get in the end is false a priori. That's why instead we're only going to keep the information of the equivalence class of $\tp_i^j$, i.e. $i$. By doing so we match the type of the right output of the white dot and and keep the whole process isometric.

It remains to show that we recover on the right the mapping that $\chi$ would have done and that the equivalence class on the left of the black dot is the same as the one on the right of the white dot.

Let $\ket{x} \in \Im(\tp_k^{p}) \subseteq \ch$. It is mapped by $\chi$ to $\ket{k,p} \otimes \ket{y}$ and mapped by $\zeta$ to $\ket{p} \otimes \ket{z}$.

\begin{equation}
    (\bigcirc \otimes \id) \chi \ket{x} = \ket{p} \otimes \ket{k} \otimes \ket{y}.
\end{equation}

\begin{equation}
    (\id \otimes \tikzcircle{4pt}) \zeta \ket {x} = \ket{p} \otimes ((\ket{i,j}\rightarrow\ket{i}) \otimes \id_{\ch_{\R}^B})W_B W_S^{\dagger} (\ket{1_{P}} \otimes \ket{z})
\end{equation}

First let's Observe that 

\begin{equation}
\begin{split}
    \zeta (Q_{1_{P}} A Q_p) \ket{x} & = [\sum_J \sum_{j \in J} (\ketbra{j}{1_J} \otimes \id_{\ch_{\R}^{J}})V Q_{1_J} A Q_j] (Q_{1_{P}} A Q_p) \ket{x} \\ 
    & = (\ketbra{1_{P}}{1_{P}} \otimes \id_{\ch_{\R}^{{P}}}) V (Q_{1_{P}} A Q_p) \ket{x} \\
    & = (\ketbra{1_{P}}{p} \otimes \id_{\ch_{\R}^{{P}}}) W_S \ket{x} \\
    & = (\ketbra{1_{P}}{p} \otimes \id_{\ch_{\R}^{{P}}}) \ket{p} \otimes \ket{z} \\
    & = \ket{1_{P}} \otimes \ket{z}
\end{split}    
\end{equation}

which means that 

\begin{equation}
\begin{split}
       &  \chi \zeta^{\dagger} (\ket{1_{P}} \otimes \ket{z}) \\
      = &   \chi \zeta^{\dagger}\zeta (Q_{1_{P}} A Q_p) \ket{x} \\
      = &   [\sum_J \sum_{j \in J} \sum_{i = 1,...,n_j} (\ketbra{i,j}{u_{1_{ij}},d_{1_{ij}}} \otimes \id_{\ch_{\R}^{J}})U\tp_{d_{1_{ij}}}^{u_{1_{ij}}} B \tp_i^{j}]
      (Q_{1_{P}} A Q_p) \ket{x} \\
     = &   \sum_{i = 1,...,n_{1_{P}}} (\ketbra{i,1_{{P}}}{u_{1_{i1_{{P}}}},d_{1_{i1_{{P}}}}} \otimes \id_{\ch_{\R}^{{P}}})U\tp_{d_{1_{i1_{{P}}}}}^{u_{1_{i1_{{P}}}}} B \tp_i^{1_{{P}}})
      (Q_{1_{P}} A Q_p) \ket{x} \\
     = &    \sum_{i = 1,...,n_{1_{P}}} (\ketbra{i,1_{{P}}}{u_{1_{i1_{{P}}}},d_{1_{i1_{{P}}}}} \otimes \id_{\ch_{\R}^{{P}}})U\tp_{d_{1_{i1_{{P}}}}}^{u_{1_{i1_{{P}}}}} B \tp_i^{1_{{P}}} A \tp_i^{p} \ket{x} \\
     = &   (\ketbra{k,1_{{P}}}{u_{1_{k1_{{P}}}},d_{1_{k1_{{P}}}}} \otimes \id_{\ch_{\R}^{{P}}})U\tp_{d_{1_{k1_{{P}}}}}^{u_{1_{k1_{{P}}}}} B \tp_k^{1_{{P}}} A \tp_k^{p} \ket{x}
      \\
    =  &   (\ketbra{k,1_{P}}{k,p} \otimes \id_{\ch_{\R}^{{P}}}) (\ketbra{k,p}{u_{1_{k1_{{P}}}},d_{1_{k1_{{P}}}}} \otimes \id_{\ch_{\R}^{{P}}})U\tp_{d_{1_{k1_{{P}}}}}^{u_{1_{k1_{{P}}}}} B \tp_k^{p} \ket{x}
      \\
    =  &    (\ketbra{k,1_{P}}{k,p} \otimes \id_{\ch_{\R}^{{P}}}) W_B \ket{x} \\
     = &    (\ketbra{k,1_{P}}{k,p} \otimes \id_{\ch_{\R}^{{P}}}) \ket{k,p} \otimes \ket{y} \\
     = &   \ket{k,1_{P}} \otimes \ket{y} 
\end{split}
\end{equation}

It follows that 

\begin{equation}
\begin{split}
    & ((\ket{i,j}\rightarrow\ket{i}) \otimes \id_{\ch_{\R}^{B}})W_B W_S^{\dagger} (\ket{1_{P}} \otimes \ket{z}) \\
    = &  ((\ket{i,j}\rightarrow\ket{i}) \otimes \id_{\ch_{\R}^{B}}) \ket{k,1_{P}} \otimes \ket{y}\\
    = & \ket{k} \otimes \ket{y}
\end{split}
\end{equation}

and finally that 

\begin{equation}
    (\id \otimes \tikzcircle{4pt}) \zeta \ket {x} = \ket{p} \otimes \ket{k} \otimes \ket{y} = (\bigcirc \otimes \id) \chi \ket{x} 
\end{equation}

which concludes the proof.
\end{proof}

\begin{remark}
Observe that the isometries $\tikzcircle{4pt}$ and $\bigcirc$ that we constructed in the previous proof are themselves canonical splitting maps such that $\cons_{\LL}(\bigcirc) = \cons_{\LL}(\zeta)$ and $\cons_{\R}(\tikzcircle{4pt}) = \cons_{\R}(\chi)$. 
\end{remark}

\begin{proof}
Indeed, let us consider the isometry $\bigcirc$. By definition its input space is $\chlc = \Span(\{ \ket{i,j} \})$ such that the $\tp_i^{j}$ are a maximal family of projectors of $\ca_B$ that decompose the $Q_j$ in a coherent way (i.e. such that $\tilde{P}_i^j =  Q_j A^{\dagger} P_i^1 A Q_j$ for all $j$ in a fixed equivalence class $J$). In particular this implies that all projectors $Q_j$ of the same equivalence class have a similar decomposition and thus that the $\ket{i,j}$ coming from these projectors are indexed by the set $I_J \times J$ where $I_J$ is the set of indices $i$ appearing in the decompositions of the $Q_j$ when $j \in J$. Then, because two projectors $Q$ not belonging to the same equivalence class have no relation between their respective $\tp$ decompositions, it follows that $\chlc = \bigoplus_J \Span(\{ \ket{i,j} | \, j \in J , \, i \in I_J \} \cong \bigoplus_J \Span( \{ \ket{i} | \, i \in I_J \}) \otimes \Span( \{ \ket{j} | \, j \in J\})$ and that $\bigcirc$ that sends $j$ to the left and $i$ to the right is a canonical splitting map. And moreover $\cons_{\LL}(\bigcirc) = \bigoplus_J \cl(\Span( \{ \ket{j} | \, j \in J\})) = \cons_{\LL}(\chi)$. 

Similarly, let us consider the isometry $\tikzcircle{4pt}$. To show that it's inducing comprehension, we proved that elements of $\ch$ that are mapped on the right in the same way by $\zeta$, are also mapped on the right in the same way by $\chi$. More precisely we've shown that $\tikzcircle{4pt} : \oplus_J \ch_{{\LL}_J} \cong \bigoplus_{J,i\in I_J} \Im(\tp_i^{j}) \rightarrow (\bigoplus_{J, i \in I_J} \Span(\{ \ket{i} \})) \bigotimes (\bigoplus_{J, i \in I_J} \ch_{\R}^{iJ})$ where $\ch_{\R}^{iJ}$ is the subset of $\chrc$ to which are sent (in a coherent way) the $(\Im(\tp_i^{j}))_{j \in J}$, through the representation unitary $U$. It follows that $\tikzcircle{4pt}$ is, by construction, a canonical splitting map and that $\cons_{\R}(\tikzcircle{4pt}) = \bigoplus_{J,i\in I_J} \cl(\ch_{\R}^{iJ}) = \cons_{\R}(\chi)$.
\end{proof}

\section{Correspondence between diagrammatic and algebraic semi-causality definitions}

We prove in this appendix, that for each of the Heisenberg and the Schrödinger semi-causality, the algebraic definition and the diagrammatic one are actually equivalent. We begin by the Heisenberg semi-causality of which we remind the definition.

\begin{definition}[Heisenberg semi-causality]
We say that a unitary $U: \ch \rightarrow \ck$ is {\em Heisenberg semi-causal} from $\mathcal{A} \subseteq \cl(\ch)$ to $\mathcal{B} \subseteq \cl(\ck)$ if 
\[  [U^{\dagger} \mathcal{B} U, \mathcal{A}] = 0 . \]
In terms of splitting maps, this is equivalent to asking that for any choice of lean splitting maps $\chi_{\ca}$, $\chi_{\ca'}$ and $\chi_{\cb}$ representing the corresponding algebras and for any $B \in \ch_{\LL}^{\chi_{\cb}}$, there exists $A_1 \in \ch_{\LL}^{\chi_{\ca'}}$ and $A_2 \in \ch_{\R}^{\chi_{\ca}}$ such that 
\begin{equation}
\tikzfig{figures/HSBtoSC_0}
\end{equation}
\end{definition}

\begin{proof}
Suppose that $[U^{\dagger} \mathcal{B} U, \mathcal{A}] = 0$. Let $\chi_{\ca}$, $\chi_{\ca'}$ and $\chi_{\cb}$ be lean splitting maps representing the corresponding algebras and let $B \in \ch_{\LL}^{\chi_{\cb}}$. Because all the splitting maps are lean and thus balanced, we know that $\loc(\chi_{\cb}) = \stloc(\chi_{\cb}) = \cb$ and similarly for $\chi_{\ca}$ and $\chi_{\ca'}$. It follows that $ \chi_{\cb}^{\dagger} (B \otimes \id) \chi_{\cb} \in \cb$ and thus that $U^{\dagger} \chi_{\cb}^{\dagger} (B \otimes \id) \chi_{\cb} U \in \ca' = \loc_{\LL}(\chi_{\ca'}) = \loc_{\R}(\chi_{\ca})$ which proves the existence of $A_1$ and $A_2$. The reverse implication is proved by the same arguments.
\end{proof}

Let us now focus on the Schrödinger semi-causality of which we remind the definitions.

\begin{definition}[Schrödinger semi-causality]
We say that a channel $\ce: \cl(\ch) \rightarrow \cl(\ck)$ is {\em Schrödinger semi-causal} from $\mathcal{A} \subseteq \mathcal{L}(\ch)$ to $\mathcal{B} \subseteq \mathcal{L}(\ck)$ if there exists a completely positive and trace-preserving (CPTP) map $\ce'$ such that \[ \Tr_{\mathcal{B'}}[\ce(\cdot)] = \ce'(\Tr_{\mathcal{A}}[\cdot]) . \] 
This condition can equivalently be phrased as the fact that for any choice of lean splitting maps $\chi_{\ca'}$ and $\chi_{\cb}$ there exists a CPTP map $\tilde{\ce}$ such that
\begin{equation}
\tikzfig{figures/causal_1}
\end{equation}
\end{definition}

\begin{proof}
Let $\ce: \cl(\ch) \rightarrow \cl(\ck)$ be a channel and suppose that there exists a CPTP map $\ce'$ such that  $\Tr_{\mathcal{B'}}[\ce(\cdot)] = \ce'(\Tr_{\mathcal{A}}[\cdot])$. Let $\chi_{\cb}$ and $\chi_{\ca'}$ be lean splitting maps for $\cb$ and $\ca'$ respectively. By Theorem \ref{equivalence of traces}, we know that there exists isometries $U$ and $V$ such that  $\Tr_{\chi_{\cb}}(\cdot) = U \Tr_{\cb'}(\cdot) U^{\dagger}$ and $\Tr_{\chi_{\ca'}}(\cdot) = V \Tr_{\ca}(\cdot) V^{\dagger}$. It follows that $\Tr_{\chi_{\cb}}(\ce(\cdot)) =  U \Tr_{\cb'}(\ce(\cdot)) U^{\dagger} = U \ce'(\Tr_{\ca}(\cdot)) U^{\dagger} = U \ce'( V^{\dagger} \Tr_{\chi_{\ca'}}(\cdot) V) U^{\dagger}$. The result doesn't immediately follow as the action of $V^{\dagger}$ is not trace-preserving. However Observe that $V V^{\dagger} \Tr_{\chi_{\ca'}}(\cdot) V V^{\dagger}$. We can thus extend $V^{\dagger}$ to a $W$ such that $W \ket{x} = V \ket{x}$ if $\ket{x} \in \Im(V)$ and $W \ket{x}  = |\ket{x}|\ket{y}$ where $\ket{y} \in \ch_{\LL}^{\chi_{\cb}}$ is of norm one. The action of $W$ is trace-preserving. And $W \Tr_{\chi_{\ca'}}(\cdot)  W^{\dagger} = W V V^{\dagger} \Tr_{\chi_{\ca'}}(\cdot) V V^{\dagger} W^{\dagger} = V^{\dagger} V V^{\dagger} \Tr_{\chi_{\ca'}}(\cdot) V V^{\dagger} V = V^{\dagger} \Tr_{\chi_{\ca'}}(\cdot) V$. Defining  $\tilde{\ce}(\cdot) = U \ce'( W \cdot W^{\dagger}) U^{\dagger}$ which is now a CPTP map makes $\ce$ diagrammatically semi-causal. A similar construction proves the reverse implication.
\end{proof}

\section{Relation between two Stinespring dilations of a map}

\begin{lemma}\label{lemma dilation}
Let $\ce : \cl(\ch) \rightarrow \cl(\ck)$ be a quantum channel. Let $U : \ch \rightarrow \ck \otimes \ck_{U}$ and $V : \ch \rightarrow \ck \otimes \ck_{V}$ be two dilations of $\ce$, i.e. isometries such that $\ce(\cdot) = \Tr_{\ck_U}(U \cdot U^{\dagger}) = \Tr_{\ck_V}(V \cdot  V^{\dagger})$, and suppose that $\dim(\ck_U) \leq \dim(\ck_V)$. Then there exists an isometry $W : \ck_U \rightarrow \ck_V$ such that $V = (\id_{\ck} \otimes W) U$.
\end{lemma}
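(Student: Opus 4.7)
The plan is to reduce the statement to the standard uniqueness-of-Kraus-representations / uniqueness-of-purifications theorem, and then translate the resulting isometry on ancillary registers into the desired relation $V = (\id_{\ck} \otimes W)U$. First I would fix orthonormal bases $\{\ket{i}\}_i$ of $\ck_U$ and $\{\ket{j}\}_j$ of $\ck_V$, and extract the associated Kraus operators
\begin{equation}
K_i := (\id_{\ck} \otimes \bra{i})\, U : \ch \to \ck, \qquad L_j := (\id_{\ck} \otimes \bra{j})\, V : \ch \to \ck,
\end{equation}
so that $U = \sum_i K_i \otimes \ket{i}$, $V = \sum_j L_j \otimes \ket{j}$, and the isometry conditions $U^\dagger U = V^\dagger V = \id_{\ch}$ translate into $\sum_i K_i^\dagger K_i = \sum_j L_j^\dagger L_j = \id_{\ch}$.

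Second, I would expand the two partial traces in the hypothesis $\Tr_{\ck_U}(U \rho U^\dagger) = \Tr_{\ck_V}(V \rho V^\dagger)$ in these bases, obtaining $\sum_i K_i \rho K_i^\dagger = \sum_j L_j \rho L_j^\dagger$ for all $\rho \in \cl(\ch)$. Thus $\{K_i\}$ and $\{L_j\}$ are two Kraus families for the same channel $\ce$, and I can then invoke the standard uniqueness-of-Kraus theorem: since $\dim(\ck_V) \geq \dim(\ck_U)$, there exist scalars $(W_{ji})$ satisfying $\sum_j \overline{W_{ji}} W_{j i'} = \delta_{i i'}$ (i.e.\ the columns form an orthonormal family) such that $L_j = \sum_i W_{ji} K_i$ for every $j$.

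Third, I would define $W : \ck_U \to \ck_V$ by $W \ket{i} := \sum_j W_{ji} \ket{j}$. The orthonormality of the columns makes $W$ an isometry, and a direct calculation yields
\begin{equation}
(\id_{\ck} \otimes W)\, U = \sum_i K_i \otimes W\ket{i} = \sum_{i,j} W_{ji}\, K_i \otimes \ket{j} = \sum_j L_j \otimes \ket{j} = V,
\end{equation}
which is the desired identity.

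The main obstacle is the invocation of the Kraus uniqueness theorem itself; if a self-contained argument is preferred, I would instead go through Choi--Jamio{\l}kowski: pick an ancilla $\ch_R$ with $\dim(\ch_R) = \dim(\ch)$ and a maximally entangled vector $\ket{\Phi} \in \ch \otimes \ch_R$, so that $(U \otimes \id_{\ch_R})\ket{\Phi}$ and $(V \otimes \id_{\ch_R})\ket{\Phi}$ are both purifications on $\ck \otimes \ch_R$ of the Choi state $(\ce \otimes \id)(\ketbra{\Phi}{\Phi})$. The uniqueness of purifications (up to isometry on the purifying register, when its dimension is larger) gives an isometry $W : \ck_U \to \ck_V$ with $(V \otimes \id_{\ch_R})\ket{\Phi} = (\id_{\ck} \otimes W \otimes \id_{\ch_R})(U \otimes \id_{\ch_R})\ket{\Phi}$, and by the maximal entanglement of $\ket{\Phi}$ this lifts to the operator identity $V = (\id_{\ck} \otimes W) U$.
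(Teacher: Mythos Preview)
Your proposal is correct. Both routes you sketch (Kraus uniqueness, and Choi state plus purification uniqueness) are standard and valid; the calculation $(\id_{\ck}\otimes W)U = \sum_j L_j\otimes\ket{j}=V$ goes through exactly as you wrote, and the Choi argument likewise lifts to an operator identity because $\ket{\Phi}$ is maximally entangled.

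The paper takes a closely related but differently packaged route: it invokes a \emph{minimal} Stinespring dilation $T:\ch\to\ck\otimes\ck_T$ and uses the (cited) fact that any dilation factors through the minimal one via an ancilla isometry, i.e.\ $U=(\id\otimes W_U)T$ and $V=(\id\otimes W_V)T$. It then sets $W$ to be $W_V W_U^{\dagger}$ on $\Im(W_U)$, extended by an arbitrary isometry on $\Im(W_U)^{\perp}$ (which is possible precisely because $\dim\ck_U\leq\dim\ck_V$). This is essentially the dilational dual of your Kraus-uniqueness argument: the minimal dilation corresponds to a linearly independent Kraus family, and the factoring isometries $W_U,W_V$ play the role of the change-of-basis matrices relating each Kraus family to the minimal one. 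Your approach is more direct if one is happy to cite Kraus uniqueness as a black box; the paper's version trades that for citing the minimal-dilation factoring property instead, and makes the extension to an isometry on all of $\ck_U$ a bit more explicit.
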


\begin{proof}
A minimal dilation of $\ce$, is a dilation $T : \ch \rightarrow \ck \otimes \ck_T$ such that the set $\{ (A \otimes \id_{\ck_T})T\ket{\phi} , \ket{\phi} \in \ch, A \in \cl(\ck) \}$ spans $\ch \otimes \ck_T$. Consider such a dilation. By minimality of $T$, there exists \cite{Eggeling_2002} isometries $W_U : \ck_T \rightarrow \ck_U$ and $W_V : \ck_T \rightarrow \ck_V$ such that $U = (\id_{\ch} \otimes W_U)T$ and $V = (\id_{\ch} \otimes W_V)T$. Observe that, being an isometry, $W_U$ is a unitary into its image. It follows that $W_V W_U^{\dagger}$ acts isometrically from $\Im(W_U)$ (to $\Im(W_V)$) and is zero on $\Im(W_U)^{\perp}$. Then because $rank(W_U) = rank(W_V) = \dim(\ck_T) \leq \dim(\ck_U) \leq \dim(\ck_V)$, one can construct an isometry $S : \Im(W_U)^{\perp} \rightarrow \Im(W_V)^{\perp} \subseteq \ck_V$. And we can define the isometry $W = (W_V W_U^{\dagger})_{|\Im(W_U)} \oplus S : \ck_U \rightarrow \ck_V$, such that $(\id_{\ck} \otimes W)U = (\id_{\ck} \otimes W)(\id_{\ck} \otimes W_U)T = (\id_{\ck} \otimes W_V W_U^{\dagger} W_U)T = (\id_{\ck} \otimes W_V)T = V$.
\end{proof}

\end{document}